\DeclareMathOperator*{\bliminf}{\underline{\lim}}
\DeclareMathOperator*{\blimsup}{\overline{\lim}}
\newcommand{\so}{\mathrm{o}}
\newcommand{\lo}{\mathrm{O}}
\newcommand{\ep}{\epsilon}
\newcommand{\de}{\delta}
\newcommand{\si}{\sigma}
\newcommand{\id}{ \mbox{\rm 1}\hspace{-0.5em}\mbox{\rm \small l\,}}
\newcommand{\idx}[1]{\id\left[#1\right]}
\newcommand{\E}{\mathrm{E}}
\newcommand{\hen}[2]{\frac{\partial #1}{\partial #2}}
\newcommand{\n}{\nonumber}
\newcommand{\nn}{\nonumber\\}
\newcommand{\rd}{\mathrm{d}}
\newcommand{\uR}{\underline{\mathbb{R}}}
\newtheorem{lemma}{Lemma}
\newtheorem{theorem}{Theorem}
\newtheorem{proposition}{Proposition}
\newtheorem{definition}{Definition}
\newtheorem{remark}{Remark}
\newcommand{\la}{\lambda}
\newcommand{\La}{\mathrm{\Lambda}}
\newcommand{\e}{\mathrm{e}}
\newcommand{\De}{\Delta}
\newcommand{\bibun}[2]{\frac{\rd #1}{\rd #2}}
\newcommand{\com}{\,,}
\newcommand{\per}{\,.}
\newcommand{\calX}{\mathcal{X}}
\newcommand{\calY}{\mathcal{Y}}
\newcommand{\since}[1]{\quad\left(\mbox{#1}\right)}
\newcommand{\im}{\mathrm{i}}
\newcommand{\bmX}{\bm{X}}
\newcommand{\bmY}{\bm{Y}}
\newcommand{\bmx}{\bm{x}}
\newcommand{\bmy}{\bm{y}}
\newcommand{\bmV}{\bm{V}}
\newcommand{\al}{\alpha}
\newcommand{\inner}[1]{\langle #1\rangle}
\newcommand{\bibunZ}[1]{Z^{(#1)}}
\newcommand{\ua}{\underline{s}}
\newcommand{\oa}{\overline{s}}
\newcommand{\tg}[1]{g_h^{(#1)}}
\newcommand{\tG}[1]{G_h^{(#1)}}
\newcommand{\ug}{\tilde{g}}
\newcommand{\der}[1]{\mu_{#1}}
\newcommand{\pp}{p_+}
\newcommand{\pz}{p_0}
\newcommand{\cs}[1]{c^{(#1)}}
\newcommand{\gap}{\delta}
\newcommand{\gapla}{\gamma_1}
\newcommand{\gapxia}{b_0}
\newcommand{\gapxi}{b_1}
\newcommand{\dephi}{\gamma_2}
\newcommand{\cumu}{{Z_{\mathrm{a}}}}
\newcommand{\cumub}{{\bar{Z}_{\mathrm{a}}}}
\newcommand{\cumui}[1]{{Z_{\mathrm{a},i}}}
\newcommand{\ch}{c_h}
\title{
Exact Asymptotics for the Random Coding Error Probability%
}%
\author{\authorblockN{Junya Honda}
\authorblockA{Graduate School of Frontier Sciences,
The University of Tokyo\\
 Kashiwa-shi Chiba 277--8561, Japan\\
Email: honda@it.k.u-tokyo.ac.jp}
}
\begin{document}
\maketitle
\allowdisplaybreaks[3]

\begin{abstract}
Error probabilities of random codes for
memoryless channels are considered in this paper\footnote{%
This paper is the full version of \cite{exact_isit}
in ISIT2015 with some corrections and refinements.}.
In the area of communication systems,
admissible error probability
is very small
and it is sometimes more important
to discuss the relative gap
between the achievable error probability and its bound
than to discuss the absolute gap.
%
%
Scarlett et al.~derived a good upper bound
of a random coding union bound based on
the technique of saddlepoint approximation
but it is not proved that
the relative gap of their bound converges to zero.
This paper derives a new bound on the achievable error probability
in this viewpoint for a class of memoryless channels.
The derived bound is strictly
smaller than that by Scarlett et al.~and
its relative gap with the random coding error probability (not a union bound)
vanishes as the block length increases for a fixed coding rate.
\end{abstract}

\begin{IEEEkeywords}
channel coding, random coding, error exponent, finite-length analysis, asymptotic expansion.
\end{IEEEkeywords}

\section{Introduction}
It is one of the most important task of information theory
to clarify the achievable performance of channel codes under
finite block length.
For this purpose Polyanskiy \cite{second_polyanskiy} and Hayashi \cite{second_hayasi}
considered the achievable coding rate
under a fixed error probability and a block length.
They revealed that the next term to the channel capacity
is $\lo(1/\sqrt{n})$ for the block length $n$
and expressed by a percentile of a normal distribution.

The essential point
for derivation of such a bound
is to evaluate error probabilities
of channel codes with an accurate form.
For this evaluation an asymptotic expansion of sums of random variables
is used in \cite{second_polyanskiy}.
%
On the other hand,
the admissible error probability in communication systems is very small,
say, $10^{-10}$ for example.
In such cases it is sometimes more important to
consider
the {\it relative} gap
between the achievable error probability and its bound
than the absolute gap.
Nevertheless,
an approximation of a tail probability obtained by the asymptotic expansion
sometimes results in a large relative gap
and
it is known that the technique of saddlepoint approximation and
the (higher-order)
large deviation principle is a more powerful tool
rather than 
the asymptotic expansion \cite{saddlepoint}.

Bounds of the error probability of random codes
with a small relative gap
have been researched extensively
although most of them treat a fixed rate $R$
whereas \cite{second_polyanskiy}\cite{second_hayasi}
consider varying rate for the fixed error probability.
Gallager \cite{gallager_map} derived an upper bound called a random coding union bound
on the rate of exponential decay
of the random coding error probability for fixed rate $R$.
It is proved that this exponent of the random code is tight
for both rates below the critical rate \cite{gallager_map}
and above the critical rate \cite{dyachkov}.

There have also been many researches on tight bounds
of the random coding error probability with vanishing or
constant relative error for a fixed rate $R$.
Dobrushin \cite{dobrushin} derived a bound of the random coding error probability
for symmetric channels in the strong sense that
each row and the column of the transition probability matrix
are permutations of the others.
The relative error of this bound is asymptotically bounded by a constant.
In particular, it vanishes in the case that the channel satisfies a nonlattice condition.

For general class of discrete memoryless channels,
Gallager \cite{gallager_tight} derived a bound with a vanishing relative error
for the rate below the critical rate based on the technique of exact asymptotics
for i.i.d.~random variables,
and Altu\u{g} and Wagner \cite{altug_journal} corrected his result
for singular channels.
For general (possibly variable) rate $R$,
Scarlett et al.~\cite{scarlett} derived a simple
upper bound (we write this as $P_{\mathrm{S}}(n)$)
of a random coding union bound $P_{\mathrm{RCU}}(n)$ based on
the technique of saddlepoint approximation
and showed that $P_{\mathrm{RCU}}(n)\le (1+\so(1))P_{\mathrm{S}}(n)$
for nonsingular finite-alphabet discrete memoryless channels \cite{scarlett}.
However,
This bound does not assure $P_{\mathrm{RCU}}(n)= (1+\so(1))P_{\mathrm{S}}(n)$.


In this paper
we consider the error probability $P_{\mathrm{RC}}$ of
random coding for a fixed but arbitrary rate $R$ below the capacity.
We derive a new bound $P_{\mathrm{new}}$ which satisfies
$P_{\mathrm{new}}(n)=(1+\so(1))P_{\mathrm{RC}}(n)$ for
(possibly infinite-alphabet or nondiscrete) nonsingular memoryless channels
such that random variables
associated with the channels satisfy a condition
called a strongly nonlattice condition.
The derived bound matches that by Gallager \cite{gallager_tight}
for the rate below the critical rate\footnote{%
In the ISIT proceedings version
it was described that the result contradicts the bound in \cite{gallager_tight}
but it was the confirmation error of the author because of the difference of notations
between this paper and \cite{gallager}.
See Remark \ref{rm_gallager} for detail.
}.

The essential point to derive the new bound
is that we optimize the parameter depending on
the sent and the received sequences $(\bmX,\bmY)$ to bound the
error probability.
This fact contrasts to discussion in \cite{scarlett} and the classic random coding error exponent
where the parameter is first fixed and optimized
after the expectation over $(\bmX,\bmY)$ is taken.
We confirm that this difference actually affects
the derived bound and
by this difference
we can assure that
the bound also becomes a lower bound of the probability
with a vanishing relative error.



%


\section{Preliminary}
We consider a memoryless channel
with input alphabet $\calX$ and output alphabet $\calY$.
The output distribution for input $x\in\calX$
is denoted by $W(\cdot |x)$.
Let $X\in \calX$ be a random variable with distribution $P_X$
and $Y\in\calY$ be
following
$W(\cdot|X)$ given $X$.
We define $P_Y$ as the marginal distribution of $Y$.
We assume that
$W(\cdot|x)$ is absolutely continuous with respect
to $P_Y$ for any $x$
with density
\begin{align}
\nu(x,y)&=\frac{\rd W(\cdot|x)}{\rd P_Y}(y)\per\n
\end{align}
We also assume that the mutual information is finite, that is,
$I(X;Y)=\E_{XY}[\log \nu(X,Y)]<\infty$.

Let $X'$ be a random variable
with the same distribution as $X$ and independent of $(X,Y)$
and define
$r(x,y,x')=\log \nu(x',y)/\nu(x,y)$.
Since
$\nu(X,Y)\allowbreak >0$
holds
almost surely
we have
$r(X,Y,X')\in\uR=[-\infty,\infty)$ is well-defined almost surely.
$(\bmX,\bmY,\bmX')=((X_1,\cdots,X_n),\,(Y_1,\cdots,Y_n),\,(X_1',\cdots,X_n'))$ denotes
$n$ independent copies of $(X,Y,X')$.
We define $r(\bmX,\bmY,\bmX')=\sum_{i=1}^n r(X_i,Y_i,X_i')$.

We consider the error probability of
a random code such that each element of codewords
$(\bmX_1,\cdots,\allowbreak
\bmX_M)\in \calX^{n\times M}$
is generated independently from distribution $P_X$.
The coding rate of this code is given by
$R=(\log M)/n$.
We use
the maximum likelihood decoding with ties broken uniformly at random.

\subsection{Error Exponent}
Define a random variable $Z(\la)$ on the space of functions
$\mathbb{R}\to\mathbb{R}$ by
\begin{align}
Z(\la)&=
\log\E_{X'}\left[\e^{\la r(X,Y,X')}\right]\n
\end{align}
and its derivatives by
\begin{align}
\bibunZ{m}(\la)&=
\frac{\rd^m}{\rd \la^m}\log\E_{X'}\left[\e^{\la r(X,Y,X')}\right]\com\n
\end{align}
which we sometimes write by $Z'(\la),\,Z''(\la),\cdots$.
Here $\E_{X'}$ denotes the expectation over $X'$
for given $(X,Y)$.
We define\footnote{%
We omit the discussion on the multi-valuedness
of $\log z$.
The discussion involving logarithm of a complex number in this paper
arises by following \cite[Sect.\,XVI.2]{feller_vol2} and refer this to see that
no problem occurs.%
}%
\begin{align}
Z(\la+\im \xi)&=
\log
\E_{X'}\left[\e^{(\la +\im \xi)r(X,Y,X')}\right]
\nn
\cumu(\la+\im \xi)&=
\log
\left|
\E_{X'}\left[\e^{(\la +\im \xi)r(X,Y,X')}\right]
\right|
\com\n
\end{align}
where $\la,\xi\in \mathbb{R}$ and $\im$ is the imaginary unit.
Here we always consider the case $\la>0$ and define $\e^{(\la+\im \xi)(-\infty)}=0$.
We define
\begin{align}
Z_i(\la)=
\log\E_{X'}\left[\e^{\la r(X_i,Y_i,X')}\right],
\;\;
\bar{Z}(\la)=
\frac1n\sum_{i=1}^n Z_i(\la)\per\n
\end{align}
$\cumui{i},\,\cumub,\bibunZ{m}_i$ and $\bar{Z}^{(m)}$ are defined in the same way.

The random coding error exponent
for $0<R<I(X;Y)$
is
denoted by
\begin{align}
E_r(R)
&=
-\inf_{(\alpha,\la)\in [0,1]\times [0,\infty)}\{\alpha R+\log \E[\e^{\alpha Z(\la)}]\}\nn
&=
-\min_{\alpha\in (0,1]} \{\alpha R+\log \E[\e^{\alpha Z(1/(1+\alpha))}]\}\com
\label{def_exponent}
\end{align}
and we write the optimal solution of $(\alpha,\la)$ as
$(\rho,\eta)=(\rho,1/(1+\rho))$.
We write $\log \E[\e^{\alpha Z(1/(1+\alpha))}]=\La(\alpha)$.

In the strict sense
the random coding error exponent represents
the supremum of \eqref{def_exponent} over $P_X$
but for notational simplicity 
we fix $P_X$ and omit its dependence.
See \cite[Theorem 2]{altug_journal} for
a condition that there exists $P_X$ which attains
this supremum.

%

Let $P_{\rho}$ be the probability measure
such that
$\rd P_{\rho}/\rd P=\e^{\rho Z(\eta)-\La(\rho)}$.
We write the expectation under $P_{\rho}$ by
$\E_{\rho}$
and define
\begin{align}
\mu_i&=\E_{\rho}[\bibunZ{i}(\eta)]
=\e^{-\La(\rho)}\E[\bibunZ{i}(\eta)\e^{\rho Z(\eta)}]\nn
\si_{ij}&=\E_{\rho}[(\bibunZ{i}(\eta)-\mu_i)(\bibunZ{j}(\eta)-\mu_j)]\nn
&=\e^{-\La(\rho)}\E[(\bibunZ{i}(\eta)-\mu_i)(\bibunZ{j}(\eta)-\mu_j)\e^{\rho Z(\eta)}]\nn
\Sigma_{ij}&=\left(
\begin{array}{cc}
\si_{ii}&\si_{ij}\\
\si_{ji}&\si_{jj}
\end{array}
\right)\per
\n
\end{align}
From derivatives of
$\alpha R+\log \E[\e^{\alpha Z(\la)}]$ in $\alpha$ and $\la$ we have
\begin{align}
\hen{\log \E[\e^{\alpha Z(\eta)}]}{\alpha}\bigg|_{\alpha=\rho}
&=\mu_0
\begin{cases}
=-R,& \mbox{if } R\ge R_{\mathrm{crit}},\\
<-R,& \mbox{otherwise},
\end{cases}
\label{la_bibun}\\
\hen{\log \E[\e^{\rho Z(\la)}]}{\lambda}\bigg|_{\lambda=\eta}
&=
\al\mu_1=0\per\label{z_ave}
\end{align}
where $R_{\mathrm{crit}}$ is the critical rate, that is,
the largest $R$ such that the optimal solution of \eqref{def_exponent}
is $\rho=1$.
We assume that $\mu_2>0$, or equivalently,
$P_Y[|\mathcal{Q}(Y)\setminus\{0\}|>1]>0$
where $\mathcal{Q}(y)$ is the support of
$\nu(X',y)$.
This corresponds to
the non-singular assumption in \cite{scarlett}\cite{altug}
for the finite alphabet.

To avoid somewhat technical argument on the continuity and integrability
we also assume
that there exists $\alpha,\gapxia>0$ and a neighborhood $\mathcal{S}$ of $\la=\eta$
such that for any $0<\gapxi<b_2<2\pi/h\le \infty$
\begin{align}
&\sup_{\la\in \mathcal{S}}\E_{\rho}[\e^{\alpha |\bibunZ{m}(\la)|}]<\infty\com\quad i=1,2,3,\nn
&\sup_{\la\in \mathcal{S},\,\xi\in[-\gapxia,\gapxia]}
\E_{\rho}[\e^{\alpha |(\partial^4/\partial \xi^4)Z(\la+\im\xi)|}]<\infty\com\nn
&\sup_{\la\in \mathcal{S},\,\xi\in [\gapxi,b_2]}
\E_{\rho}[\e^{\alpha |\cumu(\la+\im\xi)-\cumu(\la)|}]<\infty\per
\label{regularity_moment}
\end{align}
where $h\ge 0$ is given later.
Note that these conditions trivially hold 
if the input and output alphabets are finite.

\subsection{Lattice and Nonlattice Distributions}
In the asymptotic expansion with an order higher than
the central-limit theorem,
it is necessary to consider cases that
the distribution is lattice or nonlattice separately.
Here we call that a random variable $V\in\mathbb{R}^m$ has a lattice distribution
if $V\in\{ a+\sum_{i=1}^m b_ih_i:\{b_i\}\in\mathbb{Z}^m\}$ almost surely
for some $a\in\mathbb{R}^m$ and linearly independent vectors
$\{h_i\}_{i=1}^m\in\mathbb{R}^{m\times m}$.
For the case $m=1$
we call the largest $h_1$ satisfying the above condition the span of the lattice.


On the other hand, we call that $V\in\mathbb{R}^m$ has a strongly nonlattice distribution
if $|\E[\e^{\im \inner{\xi,V}}]|<1$
for all $\xi\in\mathbb{R}^m\setminus\{0\}$, where $\inner{\cdot,\cdot}$ denotes the inner product.
Note that a one dimensional random variable $V\in \mathbb{R}$ is
lattice or strongly nonlattice but, in general, there exists a
random variable which is not lattice and not strongly nonlattice.

As given above,
a lattice distribution is defined for
a random variable $V\in\mathbb{R}^{m}$ in standard references such as \cite{ranga}.
In this paper we call that the distribution of $V\in\underline{\mathbb{R}}$ is lattice
if the conditional distribution of $V$ given $V>-\infty$ is lattice
and nonlattice otherwise.
It is easy to see that no contradiction occurs under this definition.

We consider the following condition regarding lattice and nonlattice distributions.
\begin{definition}\label{def_lattice}{\rm
We call that
the log-likelihood ratio $\nu$ satisfies the lattice condition with span $h>0$
if
the conditional distribution of $\log \nu(X,Y)$ given $Y$
is lattice with span $h m_Y$ almost surely
where $m_Y\in \mathbb{N}$ may depend on $Y$
and $h$ is the largest value satisfying this condition.
}\end{definition}
For notational simplicity
we define the span of the lattice for $\nu$
to be $h=0$
if $\nu$ does not satisfy the lattice condition.
Other than the classification of $\nu$,
we also discuss cases that
$(Z(\eta),Z'(\eta))$ is strongly nonlattice or not separately.

Note that a one-dimentional random variable $V\in\mathbb{R}$ with support $\mathrm{supp}(V)$
is always lattice if $|\mathrm{supp}(V)|\le 2$,
and is strongly nonlattice except for some special cases
if $|\mathrm{supp}(V)|\ge 3$.
Similarly,
a two-dimensional random variable $V\in\mathbb{R}^2$
is always not strongly nonlattice if
$|\mathrm{supp}(V)|\le 3$,
and is strongly nonlattice except for some special cases
if $|\mathrm{supp}(V)|\ge 4$.
Based on this observation
we see that most channels with input and output alphabet sizes larger than 3
are strongly nonlattice.
Another example of each class of channels
(excluding those with specially chosen parameters)
are given in Table.\,\ref{table_class}.

\begin{table*}[b]
\begin{center}%
\caption{Classification of Nonsingular Channels.}%
\label{table_class}%
\begin{tabular}{l l| c| c|}%
&&\multicolumn{2}{c|}{$(Z(\eta),Z'(\eta))$}\\
&&not strongly nonlattice&strongly nonlattice\\
\hline
\multirow{2}{*}{log-likelihood ratio $\nu$}&lattice&BSC&asymmetric BEC\\
\cline{2-4}
&nonlattice&ternary symmetric channels&binary asymmetric channels\\
\hline
\end{tabular}%
\end{center}
\end{table*}

\begin{remark}{\rm
The above conditions are different from the condition
considered in \cite{scarlett} as a classification of lattice and nonlattice cases.
This difference arises from two reasons.
First, we consider $Z'(\eta)$ in addition to $Z(\eta)$ to derive an accurate bound.
Second,
the proof of \cite[Lemma 1]{scarlett}
does not use the correct span when applying the result \cite[Sect.\,VII.1, Thm.\,2]{petrov_sums}.
}\end{remark}
%




\section{Main Result}
Define
\begin{align}
g_h(u)=
1-\frac{\e^{-\frac{h\eta}{\e^{h\eta}-1}u}(1-\e^{-h\eta u})}{h\eta u}.\n
\end{align}
for $h\ge0$.
Here we define
$(\e^{x}-1)/x=(1-\e^{-x})/x=1$
for $x=0$
and therefore
$g_0(u)=\lim_{h\downarrow 0}g(u)
=
1-\e^{-u}$.
We give some properties on $g_h$
in Appendix \ref{append_gh}.
Now
we can represent the random coding error probability
as follows.

\begin{theorem}\label{thm_expansion1}
Fix any $0< R<I(X;Y)$ and $\ep>0$,
and let
$\de_2>0$ be sufficiently small.
Then,
for the span $h\ge 0$ of the lattice for $\nu$,
there exists $n_0>0$ such that for all $n\ge n_0$
\begin{align}
\lefteqn{
(1-\ep)
\E\!\left[
g_h\left(
(1-\ep)\frac{
 \e^{n(\bar{Z}(\eta)+R-(\bar{Z}'(\eta))^2/2(\mu_2-\de_2))}
}{\eta\sqrt{2\pi n \mu_2}}
\right)
\right]
}\nn
&\le
P_{\mathrm{RC}}(n)\nn
&\le
(1+\ep)
\E\!\left[
g_h\left(
(1+\ep)\frac{
 \e^{n(\bar{Z}(\eta)+R-(\bar{Z}'(\eta))^2/2(\mu_2+\de_2))}
}{\eta\sqrt{2\pi n \mu_2}}
\right)
\right]\com
\n
\end{align}
\end{theorem}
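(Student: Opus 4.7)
I would start from the conditional representation of the random coding error probability. Condition on the transmitted codeword $(\bmX,\bmY)=(\bmX_1,\bmY)$; given $(\bmX,\bmY)$, the $M-1\approx\e^{nR}$ competitors $\bmX'_j$ are iid and each score $r(\bmX,\bmY,\bmX'_j)$ is a sum of $n$ independent variables with conditional CGF $n\bar Z(\cdot)$. Under ML decoding with uniform tie breaking, the conditional probability of correct decoding factorises as the product of the probability that no competitor achieves $r>0$ and $\E[1/(N_0+1)\mid\bmX,\bmY]$, where $N_0$ counts competitors tied with the true codeword. Everything then reduces to evaluating the tail and atom masses of $r(\bmX,\bmY,\bmX')$ given $(\bmX,\bmY)$ and integrating the result.

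For the inner asymptotics I would use an exact Bahadur--Rao / Lugannani--Rice style saddlepoint expansion with the \emph{sample} saddlepoint $\la^*=\la^*(\bmX,\bmY)$ defined by $\bar Z'(\la^*)=0$. Taylor expansion around $\la=\eta$ gives $\la^*\approx\eta-\bar Z'(\eta)/\bar Z''(\eta)$ and $n\bar Z(\la^*)\approx n\bar Z(\eta)-n(\bar Z'(\eta))^2/(2\bar Z''(\eta))$, and the CLT fluctuation $\bar Z''(\eta)=\mu_2+\so(1)$ under $P_\rho$ is exactly what the $\mu_2\pm\de_2$ slack in the statement absorbs. In the lattice case with span $h$ a conditional local limit theorem furnishes, uniformly on a high-probability set of $(\bmX,\bmY)$ and in $k\ge 0$, $\Pr[r(\bmX,\bmY,\bmX')=kh\mid\bmX,\bmY]\approx h\,\e^{n\bar Z(\la^*)-\la^* kh}/\sqrt{2\pi n\mu_2}$. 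Using $\la^*\to\eta$ this supplies both the prefactor $1/(\eta\sqrt{2\pi n\mu_2})$ and the exponent appearing inside the argument $u$ of $g_h$ in the statement; the nonlattice case is the $h\downarrow 0$ limit of the same analysis.

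I would then Poissonise the competitor counts: given $(\bmX,\bmY)$, the $N_k$'s are approximately independent Poissons with means $\la_k=Mq_k\approx u\eta h\,\e^{-\eta kh}$ with $u$ exactly as in the statement. This yields ``no strict winner'' probability $\e^{-\sum_{k\ge1}\la_k}=\e^{-u\eta h/(\e^{\eta h}-1)}$ and tie-breaking factor $(1-\e^{-u\eta h})/(u\eta h)$ from $\E[1/(N_0+1)]$ for $N_0\sim$ Poisson$(u\eta h)$. The product collapses to $1-g_h(u)$, so the conditional error probability equals $g_h(u(\bmX,\bmY))$ up to multiplicative errors; taking the expectation over $(\bmX,\bmY)$ gives the announced two-sided bracketing.

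The principal obstacle is upgrading every approximation from absolute to \emph{multiplicative} $1\pm\ep$ control, since the theorem asserts relative rather than absolute accuracy. The inner saddlepoint and local-limit expansions must be uniform on a $(\bmX,\bmY)$-set of dominant $P_\rho$-measure, which is where the moment/Fourier conditions \eqref{regularity_moment} are needed. The outer expectation additionally requires CLT-level control of the joint distribution of $(\bar Z(\eta),\bar Z'(\eta))$ under $P_\rho$, and here the strongly nonlattice hypothesis on $(Z(\eta),Z'(\eta))$ enters to rule out oscillatory cancellations in the $\la=\eta$, $|\xi|\ge\gapxi$ region. Finally, atypical $(\bmX,\bmY)$ for which $u(\bmX,\bmY)$ is not of order unity must be handled separately via the pairwise bound $P_e\le\min(1,(M-1)q)$ on one side and large-deviation decay on the other, so that their contribution is $\so(1)$ times the dominant term. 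Demonstrating that each of these error budgets is uniformly $\so(1)$ in the multiplicative sense is the technical heart of the argument.
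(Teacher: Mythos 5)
Your outline tracks the paper's proof very closely: you decompose via the conditional error formula, apply a sample-saddlepoint local limit theorem to the tie mass $\pz$ and strict-excess mass $\pp$, and handle atypical $(\bmX,\bmY)$ by a separate large-deviation estimate — this is precisely Lemmas~\ref{thm_unify}, \ref{thm_tool_lattice}, \ref{lem_f}, \ref{lem_g}, and \ref{lem_regularity}. Your Poissonisation heuristic is exactly the content of Lemma~\ref{thm_unify}, which the paper proves by direct algebraic manipulation of the binomial $q_M(\pp,\pz)$ rather than a literal Poisson coupling, and your ``$h\downarrow 0$'' heuristic for the nonlattice case corresponds to the paper's separate (and in fact simpler) treatment in Appendix~\ref{append_nonlattice}, where ties are negligible.

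The one place you go off-track is in attributing a role to the strongly nonlattice hypothesis on $(Z(\eta),Z'(\eta))$. Theorem~\ref{thm_expansion1} does \emph{not} assume that hypothesis — it is only invoked in Theorem~\ref{thm_main}, where the outer expectation $\E[g_h(\cdots)]$ is evaluated via a bivariate Edgeworth expansion (Prop.~\ref{prop_ranga}). For the statement at hand, the outer expectation is simply left standing, so no CLT-level control of $(\bar Z(\eta),\bar Z'(\eta))$ is required. The Fourier-domain control ``at $\la=\eta$, $|\xi|\ge\gapxi$'' you mention is needed, but it is control of the \emph{conditional} characteristic function of $r(\bmX,\bmY,X')$ given $(\bmX,\bmY)$, and it comes from the lattice structure of $\nu$ (Definition~\ref{def_lattice}, used via Lemma~\ref{lem_nonlattice_cumulant} to bound the set $\mathcal{B}$ in the regularity event $S$), not from the distribution of $(Z(\eta),Z'(\eta))$. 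You are conflating the two separate lattice/nonlattice dichotomies the paper introduces, one governing the inner saddlepoint and the other the outer expectation; the decomposition of Theorem~\ref{thm_expansion1} is specifically designed so that only the former is needed here.
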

By this theorem we can reduce the evaluation of error probability
into that of an expectation over two-dimensional random variable
$(\bar{Z}(\eta),\bar{Z}'(\eta))$, although this expectation is still difficult
to compute.
If $(Z(\eta),Z'(\eta))$ is strongly nonlattice then
we can derive the following bound which gives an explicit representation
for the asymptotic behavior of $P_{\mathrm{RC}}$.
\begin{theorem}\label{thm_main}
Fix $0<R<I(X;Y)$ and assume that
$(Z(\eta),Z'(\eta))$ has a strongly nonlattice distribution.
Then
\begin{align}
\lefteqn{
P_{\mathrm{RC}}(n)
}\nn
&=
\begin{cases}
\frac{\psi_{\rho,h}\mu_2^{(1-\rho)/2}(1+\so(1))}%
{\eta^{\rho}(2\pi n)^{(1+\rho)/2}\sqrt{(\mu_2\si_{00}+\rho|\Sigma_{01}|)}}
\e^{-nE_r(R)},
&R> R_{\mathrm{crit}},\\
\frac{h(1+\so(1))}{2(\e^{\eta h}-1)\sqrt{2\pi n(\mu_2+\si_{11})}}
\e^{-nE_r(R)},
&R=R_{\mathrm{crit}},\\
\frac{h(1+\so(1))}{(\e^{\eta h}-1)\sqrt{2\pi n(\mu_2+\si_{11})}}
\e^{-nE_r(R)},
&R<R_{\mathrm{crit}},
\end{cases}\label{thm_align}
\end{align}
where
\begin{align}
\psi_{\rho,h}
&=
\int_{-\infty}^{\infty} \e^{-\rho w}g_h(\e^w)\rd w\nn
&=\frac{\Gamma(1-\rho)}{\rho}
\left(\frac{h\eta}{\e^{h\eta}-1}\right)^{\rho+1}
\frac{\e^{h}-1}{h}\n
\end{align}
for the gamma function $\Gamma$.
\end{theorem}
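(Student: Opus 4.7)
The plan is to asymptotically evaluate the expectations in Theorem~\ref{thm_expansion1} by exponentially tilting to the measure $P_\rho$ and then applying a two-dimensional local limit theorem (LLT) for $(\bar Z(\eta),\bar Z'(\eta))$. Write $A_n=\bar Z(\eta)+R$ and $B_n=\bar Z'(\eta)$. Applying $\rd P_\rho/\rd P=\e^{\rho Z(\eta)-\La(\rho)}$ coordinate-wise and using $\La(\rho)+\rho R=-E_r(R)$, the two sides of Theorem~\ref{thm_expansion1} become
\begin{align*}
(1\pm\ep)\,\e^{-nE_r(R)}\,\E_\rho\!\left[g_h(U_n^{\pm})\,\e^{-n\rho A_n}\right],
\end{align*}
where $U_n^{\pm}=(1\pm\ep)\e^{nA_n-nB_n^2/(2(\mu_2\pm\de_2))}/(\eta\sqrt{2\pi n\mu_2})$. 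Under $P_\rho$, $(A_n-\mu_0-R,B_n)$ is the mean of $n$ i.i.d.\ centered vectors with covariance $\Sigma_{01}/n$; since $(Z(\eta),Z'(\eta))$ is strongly nonlattice and \eqref{regularity_moment} holds, a two-dimensional LLT gives a uniform Gaussian approximation of the joint density on the $1/\sqrt n$ scale.

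For $R>R_{\mathrm{crit}}$, $\rho\in(0,1)$ and $\mu_0+R=0$, so the main contribution comes from $a=A_n=O((\log n)/n)$ and $b=B_n=O(1/\sqrt n)$. Changing variables $(a,b)\mapsto(u,b)$ with $u=U_n$ yields a Jacobian $\rd u/(nu)$, and the tilt factor becomes $u^{-\rho}(\eta\sqrt{2\pi n\mu_2})^{-\rho}\e^{-n\rho b^2/(2\mu_2)}$. In this regime the $\si_{11}a^2$ and $\si_{01}ab$ terms in the Gaussian exponent of the density are $o(1)$, so the integral factorizes. The substitution $u=\e^w$ converts the $u$-integral into $\psi_{\rho,h}=\int\e^{-\rho w}g_h(\e^w)\rd w$; the Gaussian $b$-integral equals $\sqrt{2\pi\mu_2|\Sigma_{01}|/(n(\si_{00}\mu_2+\rho|\Sigma_{01}|))}$; and assembling all prefactors reproduces the first branch of~\eqref{thm_align}. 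Tail contributions from outside this region are controlled by large-deviation estimates using the moment bounds in~\eqref{regularity_moment}.

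For $R\le R_{\mathrm{crit}}$, $\rho=1$, $\eta=1/2$, and $\psi_{\rho,h}$ diverges, so the preceding change of variables must be replaced. When $R<R_{\mathrm{crit}}$, $\mu_0+R<0$ forces $U_n$ to be exponentially small under $P_\rho$; in the linear regime of $g_h$ the factor $\e^{nA_n}$ in $U_n$ cancels the $\e^{-nA_n}$ in the tilt, leaving an integrand that depends only on $B_n$. A one-dimensional LLT for $B_n$ contributes the Gaussian factor $1/\sqrt{2\pi n(\mu_2+\si_{11})}$, and the prefactor $h/(\e^{\eta h}-1)$ (tending to $1/\eta=2$ as $h\to 0$) emerges from the lattice structure of $A_n$. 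For $R=R_{\mathrm{crit}}$, $\mu_0+R=0$ so $A_n$ is centered: the half $\{A_n<0\}$ contributes the $R<R_{\mathrm{crit}}$ expression while $\{A_n>0\}$ is $o(1)$ (there $g_h\to 1$ but $\e^{-nA_n}\to 0$), giving the extra factor $1/2$.

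The most delicate step is the lattice case ($h>0$), where $A_n$ is supported on a lattice of spacing $\Theta(1/n)$ while $B_n$ is continuous. A mixed-type local limit theorem is required, and the lattice sum of $g_h(U_n)\e^{-nA_n}$ over the support of $A_n$ must be evaluated carefully to reproduce the prefactor $h/(\e^{\eta h}-1)$; the particular form of $g_h$ together with the identity $\psi_{\rho,h}=(\Gamma(1-\rho)/\rho)(h\eta/(\e^{h\eta}-1))^{\rho+1}(\e^h-1)/h$ is engineered for exactly this evaluation. Together with uniform control of the LLT error across the relevant scales of $(a,b)$ and exponential tail bounds from~\eqref{regularity_moment}, this yields the three branches of~\eqref{thm_align}.
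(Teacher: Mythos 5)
The overall strategy — tilt to $P_\rho$, normalize $(\bar Z(\eta),\bar Z'(\eta))$ around its mean, observe that $\De=-(R+\mu_0)$ is zero for $R\ge R_{\mathrm{crit}}$ and positive otherwise, and split the three regimes with the factor $1/2$ at the critical rate — matches the paper's proof of Lemma~\ref{lem_main}. However, there are two concrete gaps in your argument.

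First, you invoke ``a two-dimensional LLT \dots gives a uniform Gaussian approximation of the joint density.'' But $(\bar Z(\eta),\bar Z'(\eta))$ need not have a density. Strong nonlatticeness says only that the characteristic function of $(Z(\eta),Z'(\eta))$ has modulus strictly below~$1$ away from the origin; it does not give absolute continuity. For the very examples motivating the theorem (e.g.~the asymmetric BEC in Table~\ref{table_class}), the underlying alphabet is finite, so the joint distribution of $(\bar Z(\eta),\bar Z'(\eta))$ is purely atomic and no local limit theorem for densities is available. The paper circumvents this by using Proposition~\ref{prop_ranga} (Bhattacharya--Rao, Theorem~20.8), which gives $\E[f(\bar V)]=\int f\,(1-h/\sqrt n)\,\phi_\Sigma + \text{(error)}$ for \emph{bounded measurable}~$f$, with error controlled by $\omega_f(\mathbb{R}^2)\de_n+\omega_f(\de_n;\Phi_\Sigma)$. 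Bounding these oscillation terms (Lemma~\ref{lem_osci}) is the technical heart of the proof and is entirely absent from your sketch; without such control, the formal Gaussian integral is unjustified.

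Second, your ``most delicate step'' — a mixed-type LLT for lattice $A_n$ and continuous $B_n$ when $h>0$ — is a misreading of what $h$ is. In Definition~\ref{def_lattice}, $h$ is the span of the lattice of $\log\nu(X,Y)$ given~$Y$, not of $Z(\eta)$. Indeed the theorem's hypothesis that $(Z(\eta),Z'(\eta))$ is strongly nonlattice \emph{forbids} $Z(\eta)$ from being lattice (if $|\E[\e^{\im\xi Z(\eta)}]|=1$ for some $\xi\neq 0$, that would violate strong nonlatticeness with the vector $(\xi,0)$). The parameter~$h$ enters only through the deterministic function $g_h$ being integrated; no lattice sum over values of $A_n$ appears anywhere. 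The prefactor $h\eta/(\e^{h\eta}-1)$ in the $\rho=1$ branches comes simply from $\lim_{u\to 0}g_h(u)/u = h\eta/(\e^{h\eta}-1)$ applied in the regime $\sqrt{n}w\to -\infty$, and the identity for $\psi_{\rho,h}$ is an exact evaluation of $\int_0^\infty z^{-(1+\rho)}g_h(z)\rd z$ by repeated integration by parts, not a lattice computation.
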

We prove Theorems \ref{thm_expansion1} and \ref{thm_main} in
Sections \ref{sec_decomp} and \ref{sec_second}, respectively.
From this theorem we see that
at least for the strongly nonlattice case
the error probability of the random coding is
\begin{align}
P_{\mathrm{RC}}(n)&=
\begin{cases}
\Omega(n^{-(1+\rho)/2}\e^{-nE_r(R)}),&R>R_{\mathrm{crit}}\\
\Omega(n^{-1/2}\e^{-nE_r(R)}),&R\le R_{\mathrm{crit}}.
\end{cases}
\label{order}
\end{align}
The RHS of \eqref{order} for $R>R_{\mathrm{crit}}$ is the same expression
as
the upper bounds in \cite{scarlett}\cite{altug}
but our bound is tighter in its coefficient
and is also assured to be the lower bound.

It may be possible to
derive a similar bound as Theorem \ref{thm_main}
for the case that $(Z(\eta),Z'(\eta))$ is not strongly nonlattice
by replacement of integrals with summations,
but for this case the author was not able to find an
expression of the asymptotic expansion straightforwardly applicable
to our problem and this remains as a future work.

\begin{remark}\label{remark_union}{\rm
We can show in the same way as Theorem \ref{thm_main} that
the random coding {\it union} bound
is obtained
by replacement of $\psi_{\rho,h}$
with
\begin{align}
\lefteqn{
\!\!\!\!\!
\int_{-\infty}^{\infty}\e^{-\rho w}
\min\left\{\frac{h\eta \e^w}{\e^{h\eta}-1},1\right\}\rd w
}\nn
&=
\left(\frac{1}{1-\rho}+\frac{1}{\rho}\right)
\left(\frac{h\eta}{\e^{h\eta}-1}\right)^{\rho}.\n
\end{align}
On the other hand,
the terms $|\rho\Sigma_{01}|$ and $\si_{11}$
in the square roots of \eqref{thm_align} are the
characteristic parts of the analysis of this paper
obtained by the optimization of parameter $\lambda$
depending on $(\bmX,\bmY)$.
Thus, the optimization of $\lambda$
is necessary to derive a tight coefficient
whether we evaluate the error probability itself
or the union bound.
}\end{remark}


\begin{remark}{\rm
The results in this paper
assume a {\it fixed} coding rate $R$
and are weaker in this sense than the result by Scarlett et al.~\cite{scarlett}
where they assure
an upper bound for varying rate
by leaving an integral (or a summation) to a form such that
the integrant depends on $n$.
It may be possible to extend Theorem \ref{thm_expansion1} for 
varying rate since the most part of the proof deals
with $R$ and the error probability of each codeword
separately.
However, the proof of Theorem \ref{thm_main}
heavily depends on fixed $R$ and
it is also an important problem to derive an easily computable bound
for varying rate.
}\end{remark}

\begin{remark}\label{rm_gallager}{\rm
In \cite{gallager_tight} it is shown
for discrete nonlattice\footnote{%
There is a calculation error for the lattice case
in \cite{gallager_tight}
with a redundant factor $\sqrt{\pi}$.
} channels with $R<R_{\mathrm{crit}}$ that
\begin{align}
P_{\mathrm{RC}}(n)
&=
\frac{(1+\so(1))}{\eta\sqrt{2\pi n\mu_2'}}
\e^{-nE_r(R)},\label{result_gallager}
\end{align}
where
\begin{align}
\mu_2'&=
\hen{^2\log \E[\e^{Z(\la)}]}{\la^2}\bigg|_{\la=\eta}\nn
&=
\frac{2\sum_y(\omega_0(y)\omega_2(y)-\omega_1(y)^2)}%
{\sum_{y}\omega_0^2(y)}
\end{align}
for
\begin{align}
\omega_m(y)=\sum_x P_X(x)(\log W(y|x))^m\sqrt{W(y|x)}\per\n
\end{align}
The author misunderstood that $\mu_2'=\mu_2$ in the ISIT version
and described that Theorem \ref{thm_main} contradicts \eqref{result_gallager}.
The correct calculation show that
$\mu_2'\neq \mu_2$ and
\begin{align}
\mu_2=\sigma_{11}&=
\frac{\sum_{y}
\left(\omega_0(y)\omega_2(y)-\omega_1(y)^2
\right)
}{\sum_{y}\omega_0^2(y)}\n
\end{align}
for $(\rho,\,\eta)=(1,\,1/2)$.
Therefore no contradiction occurs
between this paper and \cite{gallager_tight}.
}\end{remark}

\section{First Asymptotic Expansion}\label{sec_decomp}
In this section
we give a sketch of the proof of Theorem \ref{thm_expansion1}.
We prove Theorem \ref{thm_expansion1} separately depending on
whether $\nu$ satisfies the lattice condition or not.
The proofs are different to each other in some places
for two reasons.
First, we cannot ignore the case that a codeword has the same likelihood
as that of the sent codeword under the lattice condition
whereas such a case is almost negligible in the nonlattice case.
Second, especially in the case of infinite alphabet
we have to use the asymptotic expansion
with a careful attention to components implicitly assumed to be fixed
and the derivation of asymptotic expansion varies in some places
between the lattice and nonlattice cases
regarding this aspect.


Here we give a proof of Theorem \ref{thm_expansion1}
for the case that $\nu$ satisfies the lattice condition
with span $h>0$.
The proof for the nonlattice case is easier than the lattice case
in most places because ties of likelihoods can be almost ignored
as described above.
See Appendix \ref{append_nonlattice} for the difference
of the proof in the nonlattice case.

Now define
\begin{align}
\pz(\bmx,\bmy)&=P_{\bmX'}[r(\bmx,\bmy,\bmX')=0]\nn
\pp(\bmx,\bmy)&=P_{\bmX'}[r(\bmx,\bmy,\bmX')>0]
=P_{\bmX'}[r(\bmx,\bmy,\bmX')\ge h]\per\label{eq_offset}
\end{align}
The last equation of
\eqref{eq_offset} holds
since
$r(x,y,x')=\log \nu(x',y)-\log\nu(x,y)$
and the offset of the lattice of
$\log \nu(x',y)$ equals to that of $\log\nu(x,y)$ given $y$.
Under the maximum likelihood decoding,
the average error probability $P_{\mathrm{RC}}$ is expressed as
$P_{\mathrm{RC}}=
\E_{\bmX\bmY}[
q_{M}(\pp(\bmX,\bmY),\pz(\bmX,\bmY))]$
for
\begin{align}
q_M(\pp,\pz)&=
1-(1-\pp)^{M-1}
\nn
&\!\!\!\!\!\!\!\!\!\!\!\!\!\!\!
+
\sum_{i=1}^{M-1}\pz^i(1-\pp-\pz)^{M-i-1}{{M-1}\choose i}\left(1-\frac{1}{i+1}\right)\!.
\label{error_moto}
\end{align}
Here the first term corresponds to the probability that the likelihood of some codeword
exceeds that of the sent codeword,
and each component of the second term corresponds to the probability that
$i$ codewords have the same likelihood as the sent codeword
and the others do not exceed this likelihood.

One of the most basic bound for this quantity is to use a union bound
given by
\begin{align}
q_M(\pp,\pz)\le
\min\{1,(M-1)(\pp+\pz)\}\per\n
\end{align}
A lower can also be found in, e.g., \cite[Chap.~23]{poly_lecture}.
For evaluation of the error probability with a vanishing relative error
the following lemma is useful.
\begin{lemma}\label{thm_unify}
It holds for any $c\in(0,1/2)$ that
\begin{align}
\lefteqn{
\blimsup_{M\to\infty}
\sup_{(\pp,\pz)\in (0,1/3]^2 : \pp\le M^{c}\pz}
\frac{q_M(\pp,\pz)}{1-\frac{\e^{-M\pp}(1-\e^{-M\pz})}{M\pz}}
}\nn
&=
\bliminf_{M\to\infty}
\inf_{(\pp,\pz)\in (0,1/3]^2 : \pp\le M^{c}\pz}
\frac{q_M(\pp,\pz)}{1-\frac{\e^{-M\pp}(1-\e^{-M\pz})}{M\pz}}=1\per\n
\end{align}
\end{lemma}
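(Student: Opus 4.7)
My plan begins by deriving an exact closed-form expression for $q_M(\pp,\pz)$. Starting from the sum in \eqref{error_moto} and using the identity $\binom{M-1}{i}/(i+1) = \binom{M}{i+1}/M$, I reindex by $j=i+1$ to extract a full binomial, which collapses against $(1-\pp-\pz)^{M-1}$ and $(1-\pp)^{M-1}$ via the binomial theorem. After simplification the result is
\[
q_M(\pp,\pz) = 1 - \frac{(1-\pp)^M - (1-\pp-\pz)^M}{M\pz}.
\]
Writing the numerator as $\int_\pp^{\pp+\pz} M(1-t)^{M-1}\,dt$ and the target comparison quantity as $1-\frac{1}{\pz}\int_\pp^{\pp+\pz}\e^{-Mt}\,dt$, the ratio to be estimated simplifies to
\[
\frac{q_M(\pp,\pz)}{1-\frac{\e^{-M\pp}(1-\e^{-M\pz})}{M\pz}} = \frac{\int_\pp^{\pp+\pz}(1-(1-t)^{M-1})\,dt}{\int_\pp^{\pp+\pz}(1-\e^{-Mt})\,dt}.
\]

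Since both integrands are nonnegative, this ratio is a weighted average of the pointwise integrand ratio and therefore lies between its infimum and supremum over $t\in[\pp,\pp+\pz]$. The problem thus reduces to establishing the uniform bound
\[
\sup_{t\in(0,\,2/3]}\left|\frac{1-(1-t)^{M-1}}{1-\e^{-Mt}}-1\right| \to 0 \quad\text{as } M\to\infty,
\]
which suffices because $\pp,\pz\in(0,1/3]$ implies $[\pp,\pp+\pz]\subseteq(0,2/3]$; equivalently, one controls $|\e^{-Mt}-(1-t)^{M-1}|/(1-\e^{-Mt})$. I would prove this using the expansion $(1-t)^{M-1}/\e^{-Mt}=\exp(t-(M-1)\zeta(t))$ with $\zeta(t):=\sum_{k\ge 2}t^k/k\in[t^2/2,\,t^2]$ on $[0,1/2]$, splitting $(0,2/3]$ into three regimes: (i) $Mt$ bounded, where direct Taylor expansion yields ratio $=(M-1)/M + O(Mt^2)=1+O(1/M)$; (ii) $Mt\to\infty$ with $t$ small, where $(1-t)^{M-1}$ and $\e^{-Mt}$ are both $o(1)$ so the numerator and denominator of the integrand ratio are simultaneously $1-o(1)$; (iii) $t$ bounded away from zero, where both terms decay exponentially and the ratio tends to $1$ at that same exponential rate.

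The main obstacle is the intermediate regime $Mt^2\asymp 1$ (i.e., $t\sim 1/\sqrt{M}$), where the leading-order Taylor approximation $(1-t)^{M-1}\approx\e^{-Mt}$ breaks down and the two quantities can differ by a multiplicative constant such as $\e^{-1/2}$; the saving observation is that in this regime $\e^{-Mt}$ is already $o(1)$, so $1-\e^{-Mt}=1-o(1)$ absorbs any $O(1)$ discrepancy in the numerator. The assumption $\pp\le M^c\pz$ with $c<1/2$ does not appear strictly essential for the pointwise integrand analysis, but I expect it enters as a technical safeguard preventing $\pz$ from being negligibly small relative to $\pp$: it keeps the integration window wide enough, relative to its location, that averaging smooths out any boundary-sensitive discrepancies between the higher-order expansions of $(1-\pp)^M$ and $(1-\pp-\pz)^M$ that would otherwise spoil the uniform estimate.
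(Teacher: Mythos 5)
Your proposal is correct, and it takes a genuinely different route from the paper's. Both proofs start from the same closed form $q_M(\pp,\pz)=1-\frac{(1-\pp)^M-(1-\pp-\pz)^M}{M\pz}$, but after that the paper proceeds by a direct Taylor-type bound: it uses $\log(1-x)\ge -x-2x^2$ to sandwich $(1-\pp)^M\bigl(1-(1-\frac{\pz}{1-\pp})^M\bigr)$ between $\e^{-M\pp}(1-\e^{-M\pz})$ shifted by multiplicative factors of the form $1\pm\min\{1,\lo(M(\pp^2+\pp\pz))\}$, and then invokes $\pp\le M^c\pz$ to turn $\pp^2+\pp\pz$ into $\lo(M^{2c}\pz^2)$ so that the resulting error term $\min\{1,\lo(M^{1+2c}\pz^2)\}/(M\pz-(1-\e^{-M\pz}))$ is uniformly $\so(1)$ precisely because $c<1/2$. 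Your version instead rewrites both $q_M$ and the comparator as $\pz^{-1}\int_{\pp}^{\pp+\pz}(\cdot)\,\rd t$, applies the mediant inequality $\inf_t a(t)/b(t)\le\int a/\int b\le\sup_t a(t)/b(t)$ for nonnegative $a,b$, and reduces the whole lemma to the uniform one-variable estimate $\sup_{t\in(0,2/3]}\bigl|\frac{1-(1-t)^{M-1}}{1-\e^{-Mt}}-1\bigr|\to 0$. That estimate does hold; the clean way to close it is to write the error as $\frac{\e^{-Mt}|1-\e^{s}|}{1-\e^{-Mt}}$ with $s=t-(M-1)\zeta(t)$, use $\frac{\e^{-u}}{1-\e^{-u}}\le\frac1u$ together with $|s|\le t(1+Mt)$ when $Mt\le M^{1/4}$ (giving $\lo((1+M^{1/4})/M)$), and use $\e^{-Mt}\le\e^{-M^{1/4}}$ with $|1-\e^{s}|$ bounded otherwise. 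What your route buys is that it is shorter and, more interestingly, it never uses $\pp\le M^c\pz$ at all: the lemma actually holds on all of $(0,1/3]^2$. Your final paragraph is therefore too cautious — the constraint is not a ``technical safeguard'' for your argument; it is an artifact of the paper's cruder bounding of $\pp^2+\pp\pz$ and can simply be dropped under the averaging argument. One small nit: you state $\zeta(t)\in[t^2/2,t^2]$ on $[0,1/2]$ but need $t$ up to $2/3$; the inequality does persist there (and is irrelevant anyway in the large-$t$ regime where you only use that both $(1-t)^{M-1}$ and $\e^{-Mt}$ vanish), but it would be cleaner to state the domain you actually use.
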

We prove this lemma in Appendix \ref{proof_unify}.
We see from this theorem that
the error probability can be approximated by
\begin{align}
1-\frac{\e^{-M\pp(\bmX,\bmY)}(1-\e^{-M\pz(\bmX,\bmY)})}{M\pz(\bmX,\bmY)}\n
\end{align}
for $(\bmX,\bmY)$ satisfying some regularity condition.

Next we consider the evaluation of $\pz(\bmX,\bmY)$ and $\pp(\bmX,\bmY)$.
We use Lemma \ref{thm_tool_lattice} in the following as a fundamental tool of the proof.
Let $V_1,\cdots,V_n\in\underline{\mathbb{R}}$ be (possibly not identically distributed)
independent lattice random variables
such that the greatest common divisor of their spans\footnote{
The greatest common divisor for a set $\{h_1,h_2,\cdots\},\,h_i>0$,
is defined as $h>0$ if $h$ is the maximum number such that
$h_i/h\in\mathbb{N}$ for all $i$ and defined as $0$ if
such $h$ does not exist.
} is $h$.
Define
\begin{align}
\La_{V_i}(\la)=\log \E[\e^{\la V_i}]\com
\;\;
\La_{\bm{V}}(\la)=\sum_{i=1}^n\La_{V_i}(\la)\per\n
\end{align}
Then its large deviation probability is
evaluated as follows.
\begin{lemma}\label{thm_tool_lattice}
Fix $x>\sum_{i=1}^n\E[V_i]$ such that
$\Pr[(V_i-x)/h\in \mathbb{Z}]=1$
and define
$\la^*>0$ as the solution of $\La_{\bmV}'(\la^*)=x$.
Let $\ep,\dephi,\gapxia \,\ua_2,\oa_2,\oa_4>0$ and $\ua_3,\oa_3\in\mathbb{R}$ be arbitrary.
Then there exists
$\gapxi=\gapxi(\gapxia, \ua_2,\oa_2,\ua_3,\oa_3,\oa_4),
n_0=n_0(\ep,\gapxia,\dephi, \ua_2,\oa_2,\ua_3,\oa_3,\oa_4)>0$
such that
\begin{align}
\left|
\frac{\Pr[\sum_{i=1}^n V_i=x]}{
\frac{
h\e^{-n(\eta x-\La_{\bmV}(\la^*))}
}{\sqrt{2\pi\La_{\bmV}''(\la^*)}}
}-1
\right|
&\le
\ep\com\nn
\left|
\frac{\Pr[\sum_{i=1}^n V_i\ge x+h]}{
\frac{
h\e^{-n(\eta x-\La_{\bmV}(\la^*))}
}{(\e^{h\la^*}-1)\sqrt{2\pi\La_{\bmV}''(\la^*)}}
}-1
\right|
&\le
\ep\com
\n
\end{align}
hold
for all $n\ge n_0$ satisfying
\begin{align}
&n \ua_m\le\sum_{i=1}^n\bibun{^m\La_{V_i}(\la)}{\la^{m}}\bigg|_{\la=\la^*}
\le n\oa_m,\qquad i=2,3,\nn
&\sum_{i=1}^n\left|\hen{^4\La_{V_i}(\la^*+\im\xi)}{\xi^{4}}\right|
\le n\oa_4,
\qquad \forall |\xi|\le \gapxia\nn
&\sum_{i=1}^n\left(\log |\E[\e^{(\la^*+\im \xi)V_i}]|
-\log \E[\e^{\la^*V_i}]\right)
\le -n \dephi,\nn
&\phantom{wwwwwwwwwwwwwww}\forall \xi\in[-\pi/h,\pi/h]
\setminus[-\gapxi,\gapxi]\per\n
\end{align}
\end{lemma}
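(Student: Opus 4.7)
The plan is exponential tilting followed by a local central limit theorem for lattice sums. Introduce the Esscher transform $\rd\tilde P/\rd P=\exp(\la^*\sum_i V_i-\La_{\bmV}(\la^*))$, under which $\tilde\E[V_i]=\La_{V_i}'(\la^*)$ so that $\tilde\E[\sum V_i]=x$ by the defining equation for $\la^*$. Rewriting both probabilities in terms of $\tilde P$ gives
\begin{align*}
\Pr\!\left[\textstyle\sum V_i=x\right]&=\e^{-(\la^*x-\La_{\bmV}(\la^*))}\,\tilde P\!\left[\textstyle\sum V_i=x\right],\\
\Pr\!\left[\textstyle\sum V_i\ge x+h\right]&=\e^{-(\la^*x-\La_{\bmV}(\la^*))}\sum_{k=1}^{\infty}\e^{-kh\la^*}\tilde P\!\left[\textstyle\sum V_i=x+kh\right],
\end{align*}
so after matching the exponential prefactor stated in the lemma, it suffices to prove the single uniform local estimate $\tilde P[\sum V_i=x+kh]=(1+o(1))h/\sqrt{2\pi\La_{\bmV}''(\la^*)}$ for $|k|\le n^{1/4}$, together with a Chernoff-type tail bound that makes $|k|>n^{1/4}$ negligible in the geometric series (so that $\sum_{k=1}^\infty\e^{-kh\la^*}=1/(\e^{h\la^*}-1)$ up to a $(1+o(1))$ factor).

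For the local estimate, apply the lattice inversion formula
\begin{align*}
\tilde P\!\left[\textstyle\sum V_i=x+kh\right]=\frac{h}{2\pi}\int_{-\pi/h}^{\pi/h}\e^{-\im\xi(x+kh)}\prod_{i=1}^n\tilde\phi_i(\xi)\,\rd\xi,
\end{align*}
where $\tilde\phi_i(\xi)=\exp(\La_{V_i}(\la^*+\im\xi)-\La_{V_i}(\la^*))$. Split the range at $\pm\gapxi$. On the outer range the hypothesis $\sum_i(\log|\E[\e^{(\la^*+\im\xi)V_i}]|-\La_{V_i}(\la^*))\le -n\dephi$ bounds $|\prod_i\tilde\phi_i(\xi)|$ by $\e^{-n\dephi}$, contributing only an exponentially small term negligible against the leading $n^{-1/2}$ scale. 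On the inner range Taylor expand the tilted CGF in $\xi$: the first-order term cancels the phase $\e^{-\im\xi x}$ because $\tilde\E[\sum V_i]=x$, the quadratic term yields the Gaussian $\exp(-\xi^2\La_{\bmV}''(\la^*)/2)$ whose variance is $\Theta(n)$ by $\ua_2,\oa_2$, and the cubic and quartic remainders are uniformly controlled by the bounds on $\oa_3$ and on the fourth derivative over $|\xi|\le\gapxia$. After rescaling $\xi\mapsto\xi/\sqrt{\La_{\bmV}''(\la^*)}$ the integral reduces to a standard Gaussian with a relative $O(n^{-1/2})$ correction, giving the claimed estimate uniformly in $k$ in the specified range.

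The main obstacle, and the reason $\gapxi$ must be chosen as a function of $(\gapxia,\ua_2,\oa_2,\ua_3,\oa_3,\oa_4)$, is to obtain quantitative control of the Taylor remainder that is uniform over sequences $(V_i)$ satisfying only the aggregated derivative bounds (with no hypothesis on the individual distributions beyond that). Concretely, $\gapxi$ must be small enough that after rescaling by $\sqrt{\La_{\bmV}''(\la^*)}\ge\sqrt{n\ua_2}$, both the cubic Taylor error and the quartic remainder are dominated by the quadratic term; this is possible since the higher-order derivatives of $\La_{\bmV}$ grow at most linearly in $n$ by the stated bounds. The argument mirrors the quantitative saddlepoint expansions in Petrov's \emph{Sums of Independent Random Variables}, Ch.\,VII, but must be carried out with explicit constants in order to secure the uniformity asserted in the lemma; the Chernoff bound disposing of $|k|>n^{1/4}$ then follows routinely from the same derivative bounds.
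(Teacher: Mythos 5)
Your proposal is correct and takes essentially the same route as the paper: exponential tilting (Esscher transform) so that the mean under the tilted law equals $x$, followed by a local estimate for $\tilde P[\sum_i V_i = x+kh]$ under the tilted measure, and finally summing the geometric series $\sum_{k\ge 0}\e^{-kh\la^*}$ to recover the tail probability — this is exactly the Dembo--Zeitouni \cite[Thm.\,3.7.4]{LDP} scheme the paper follows. The only packaging difference is that the paper routes the local estimate through a CDF-level Edgeworth expansion with a lattice sawtooth correction (Proposition~\ref{prop_exp_lattice}, adapted from Feller Sect.\,XVI) and then takes differences to obtain the point probabilities, whereas you invert the characteristic function directly for the lattice point mass; both versions split the frequency integral at $\pm\gapxi$, control the inner range by the aggregated second-through-fourth derivative bounds, and kill the outer range with the hypothesis $\sum_i(\log|\E[\e^{(\la^*+\im\xi)V_i}]|-\La_{V_i}(\la^*))\le-n\dephi$, so the content is the same.
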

The proof of this lemma is largely the same as that
of \cite[Thm.\,3.7.4]{LDP} for the i.i.d.~case
and given in Appendix \ref{append_tool_lattice}.

Let $\gapxia,\gap_1,\gap_2,\gap_3,\gapla,\dephi,\oa_4>0$
satisfy $\gap_2<\min\{\mu_2/2,\allowbreak \mu_2\sqrt{R/12}\}$.
To apply Lemma \ref{thm_tool_lattice} we consider the
following sets $\mathcal{A}_m,\,m=2,3,\,\mathcal{B},\mathcal{C}$
to formulate regularity conditions.
\begin{align}
\mathcal{A}_m&=
\left\{
f_1\in\mathcal{C}_1: \forall \la,\: \left| f_m(\la)-\der{m}\right|\le \gap_2
\right\}\com\nn
\mathcal{B}&=
\left\{
f_2\in \mathcal{C}_2: \forall \la,\xi \notin [-\gapxi,\gapxi],\:f_2(\la,\xi)\le -\dephi
\right\}
\com\nn
\mathcal{C}&=
\left\{
f_2\in \mathcal{C}_2: \forall \la,\xi\in[-\gapxia,\gapxia],\:f_2(\la,\xi)\le \oa_4
\right\}
\com\n
\end{align}
where $\mathcal{C}_1$ and $\mathcal{C}_2$ are the spaces of continuous functions
$[\eta-\gapla,\eta+\gapla]\to\mathbb{R}$
and
$[\eta-\gapla,\eta+\gapla]\times
[-\pi/h,\pi/h]
\to\mathbb{R}$,
respectively,
and $\gapxi$ is a constant determined from
$\gapxia,\ua_2,\oa_2,\ua_3,\oa_3,\oa_4$ with
Lemma \ref{thm_tool_lattice}.

We define the event $S$ as
\begin{align}
S&=
\{|\bar{Z}^{(1)}(\eta)|\le \gap_1\}\cup\{\bar{Z}^{(2)}(\la)\in \mathcal{A}_2\}
\cup\{\bar{Z}^{(3)}(\la)\in \mathcal{A}_3\}
\nn
&\quad\cup
\{\cumub(\la+\im\xi)-\cumub(\la)\in\mathcal{B}\}\nn
&\quad\cup
\left\{\left|
\hen{^4}{\xi^4}
\bar{Z}^{(4)}(\la+\im\xi)
\right|\in\mathcal{C}\right\}
\com\n
\end{align}
where we regard $\bar{Z}(\la+\im\xi)$ as
function $(\la,\xi) \mapsto \allowbreak\bar{Z}(\la+\im\xi)$.
Under this condition we can bound the excess probability of the likelihood of each codeword
given the sent codeword $\bmX$ and the received sequence $\bmY$ as follows.
\begin{lemma}\label{lem_f}
Let
$\ep>0$ be arbitrary and $\gap_1>0$
in the definition of $S$
be sufficiently small with respect
to $\gapla$.
Then, there exists $n_1>0$ such that
under the event $S$ it holds
for all $n\ge n_1$ that,
\begin{align}
\lefteqn{
\!\!\!\!\!
\frac{
h\e^{n(\bar{Z}(\eta)-\bar{Z}'(\eta)^2/2(\der{2}-\gap_2))}
}{\sqrt{2\pi n (\der{2}+\gap_2)}}(1-\ep)
\le
\pz(\bmX,\bmY)
}\nn
&\qquad\qquad\le
\frac{
 h\e^{n(\bar{Z}(\eta)-\bar{Z}'(\eta)^2/2(\der{2}+\gap_2))}
}{\sqrt{2\pi n (\der{2}-\gap_2)}}
(1+\ep)\com
\displaybreak[3]
\nn
\lefteqn{
\!\!\!\!\!
\frac{
h\e^{n(\bar{Z}(\eta)-\bar{Z}'(\eta)^2/2(\der{2}-\gap_2))}
}{(\e^{h(\eta+\gapla)}-1)\sqrt{2\pi n (\der{2}+\gap_2)}}(1-\ep)
\le
\pp(\bmX,\bmY)
}\nn
&\qquad\qquad\le
\frac{
 h\e^{n(\bar{Z}(\eta)-\bar{Z}'(\eta)^2/2(\der{2}+\gap_2))}
}{(\e^{h(\eta-\gapla)}-1)\sqrt{2\pi n (\der{2}-\gap_2)}}
(1+\ep)\per\n
\end{align}
\end{lemma}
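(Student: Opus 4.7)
The plan is to apply Lemma \ref{thm_tool_lattice} conditionally on $(\bmX,\bmY)=(\bmx,\bmy)$ to the conditionally independent lattice random variables $V_i=r(x_i,y_i,X_i')$, whose log-moment generating functions satisfy $\La_{V_i}(\la)=Z_i(\la)$ and hence $\La_{\bmV}(\la)=n\bar{Z}(\la)$. Since $\pz(\bmx,\bmy)=\Pr_{\bmX'}[\sum_i V_i=0]$ and $\pp(\bmx,\bmy)=\Pr_{\bmX'}[\sum_i V_i\ge h]$, we apply the lemma with $x=0$, so the saddlepoint $\la^*$ is the unique solution of $\bar{Z}'(\la^*)=0$.

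First I would establish existence and localization of $\la^*$: under $S$, $\bar{Z}^{(2)}\ge\mu_2-\gap_2>0$ on $[\eta-\gapla,\eta+\gapla]$ by $\mathcal{A}_2$, so $\bar{Z}'$ is strictly increasing there, and $|\bar{Z}'(\eta)|\le\gap_1$ then forces $\la^*\in[\eta-\gapla,\eta+\gapla]$ with $|\la^*-\eta|\le\gap_1/(\mu_2-\gap_2)$, provided $\gap_1$ is chosen small enough relative to $\gapla$. Next I would verify that the hypotheses of Lemma \ref{thm_tool_lattice} hold uniformly on $S$ with constants $\ua_m=\mu_m-\gap_2$, $\oa_m=\mu_m+\gap_2$ for $m=2,3$ (via $\mathcal{A}_m$), $\oa_4$ (via $\mathcal{C}$), and $\dephi$ (via $\mathcal{B}$). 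Applying the lemma together with $\La_{\bmV}''(\la^*)=n\bar{Z}^{(2)}(\la^*)$ and $\La_{\bmV}(\la^*)=n\bar{Z}(\la^*)$ gives
\begin{align*}
\pz(\bmx,\bmy) &= (1+\ep')\,\frac{h\,\e^{n\bar{Z}(\la^*)}}{\sqrt{2\pi n\,\bar{Z}^{(2)}(\la^*)}},\\
\pp(\bmx,\bmy) &= (1+\ep')\,\frac{h\,\e^{n\bar{Z}(\la^*)}}{(\e^{h\la^*}-1)\sqrt{2\pi n\,\bar{Z}^{(2)}(\la^*)}},
\end{align*}
with $|\ep'|\le\ep/2$ for all $n$ sufficiently large.

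Finally I would eliminate the dependence on $\la^*$ using $\bar{Z}'(\la^*)=0$. Taylor-expanding $\bar{Z}$ around $\la^*$ and using the mean value theorem yields, for some $\bar\la,\tilde\la$ between $\eta$ and $\la^*$,
\begin{align*}
\bar{Z}(\la^*) \;=\; \bar{Z}(\eta)\,-\,\frac{\bar{Z}'(\eta)^2\,\bar{Z}^{(2)}(\bar\la)}{2\,\bar{Z}^{(2)}(\tilde\la)^2},
\end{align*}
and under $S$ both $\bar{Z}^{(2)}(\bar\la),\bar{Z}^{(2)}(\tilde\la)\in[\mu_2-\gap_2,\mu_2+\gap_2]$. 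This encloses the exponent by $n\bigl(\bar{Z}(\eta)-\bar{Z}'(\eta)^2/(2(\mu_2\mp\gap_2'))\bigr)$ for some $\gap_2'=O(\gap_2)$; likewise $\bar{Z}^{(2)}(\la^*)\in[\mu_2-\gap_2,\mu_2+\gap_2]$ by $\mathcal{A}_2$, and $\la^*\in[\eta-\gapla,\eta+\gapla]$ supplies the stated bounds on $\e^{h\la^*}-1$. Shrinking the $\gap_2$ used in the hypothesis $S$ relative to the $\gap_2$ that appears in the conclusion absorbs the $\gap_2'$-vs-$\gap_2$ mismatch into the factor $1\pm\ep$, completing the bounds.

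The main obstacle is exactly this bookkeeping of scales: the mean-value evaluation of $\bar{Z}(\la^*)$ produces $\bar{Z}^{(2)}(\bar\la)/\bar{Z}^{(2)}(\tilde\la)^2$ rather than a single $1/\bar{Z}^{(2)}(\cdot)$, so enclosing $\bar{Z}^{(2)}$ in $[\mu_2-\gap_2,\mu_2+\gap_2]$ yields a strictly wider enclosure for the exponent than the clean $1/(\mu_2\pm\gap_2)$ form of the claim; this forces $\gap_2$ to be taken strictly smaller in the hypothesis than in the conclusion. A secondary point, but critical, is that all constants furnished by Lemma \ref{thm_tool_lattice} must be uniform in $(\bmx,\bmy)\in S$, which is precisely what the formulation of $\mathcal{A}_m,\mathcal{B},\mathcal{C}$ as uniform containment conditions is designed to ensure.
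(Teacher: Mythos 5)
Your plan matches the paper's: apply Lemma~\ref{thm_tool_lattice} conditionally on $(\bmX,\bmY)$ at the saddlepoint $\la^*$ (the zero of $\bar{Z}'$), use $S$ to localize $\la^*$ in $[\eta-\gapla,\eta+\gapla]$ and to certify the uniform hypotheses, and then eliminate $\la^*$ in favor of $\bar{Z}(\eta),\bar{Z}'(\eta)$. The gap is in the last step, and your own diagnosis of it is sharper than your proposed repair.

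Your mean-value identity
$\bar{Z}(\la^*)=\bar{Z}(\eta)-\tfrac12\bar{Z}'(\eta)^2\,\bar{Z}''(\bar\la)/\bar{Z}''(\tilde\la)^2$
is correct, and since $\bar{Z}''(\bar\la),\bar{Z}''(\tilde\la)\in[\mu_2-\gap_2,\mu_2+\gap_2]$ under $\mathcal{A}_2$, the ratio $\bar{Z}''(\bar\la)/\bar{Z}''(\tilde\la)^2$ ranges over $[(\mu_2-\gap_2)/(\mu_2+\gap_2)^2,\ (\mu_2+\gap_2)/(\mu_2-\gap_2)^2]$, which is strictly wider than $[1/(\mu_2+\gap_2),\ 1/(\mu_2-\gap_2)]$. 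So far so good. But the proposed fix --- ``absorbs the $\gap_2'$-vs-$\gap_2$ mismatch into the factor $1\pm\ep$'' --- is wrong: the mismatch sits in the exponent and is multiplied by $n$, giving an error of order $n\,\gap_1^2\,\gap_2/\mu_2^2$. Since $\gap_1,\gap_2$ are fixed constants, this blows up in $n$ and cannot be hidden in a constant multiplicative slack. The only consistent version of your idea is to define $S$ with a strictly smaller constant $\gap_2''$ so that the induced $\gap_2'$ falls below $\gap_2$; that would prove a modified lemma (with hypothesis $\gap_2''$ and conclusion $\gap_2$), which ultimately feeds into Theorem~\ref{thm_expansion1} because $\de_2$ there is free, but it does not prove Lemma~\ref{lem_f} as stated.

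What you are missing is that the exact enclosure is available and requires no such modification. Instead of the MVT at second order in two places, sandwich $\bar{Z}$ once: let $P_\pm(\la)=\bar{Z}(\eta)+\bar{Z}'(\eta)(\la-\eta)+\tfrac{\mu_2\pm\gap_2}{2}(\la-\eta)^2$. Under $\mathcal{A}_2$, the functions $q_\pm(\la)=\pm\bigl(P_\pm(\la)-\bar{Z}(\la)\bigr)$ are convex on $[\eta-\gapla,\eta+\gapla]$ with $q_\pm(\eta)=q_\pm'(\eta)=0$, hence $P_-\le\bar{Z}\le P_+$ on that interval. The minima of $P_\mp$ over the interval (the unconstrained minimizers lie inside once $\gap_1\le\gapla\mu_2/2$) are exactly $\bar{Z}(\eta)-\bar{Z}'(\eta)^2/\bigl(2(\mu_2\mp\gap_2)\bigr)$, and comparing minima gives precisely the displayed inequality in the paper's proof with the same $\gap_2$ throughout. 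Combined with $\bar{Z}''(\la^*)\in[\mu_2-\gap_2,\mu_2+\gap_2]$ and $\la^*\in[\eta-\gapla,\eta+\gapla]$ for the $(\e^{h\la^*}-1)$ prefactor, this yields the lemma directly.
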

\begin{proof}
Note that $|\bar{Z}'(\eta)|\le \gap_1$
and $\bar{Z}''(\la)\ge\mu_2/2$ for all $\la\in[\eta-\gapla,\eta+\gapla]$
from $\bar{Z}^{(m)}(\la)\in\mathcal{A}_m$ and \eqref{z_ave}.
From the convexity of $\bar{Z}(\la)$ in $\la$,
if we set $\gap_1\le \gapla \mu_2/2$ then
$\bar{Z}(\la)$ is minimized at a point in $[\eta-\gapla,\eta+\gapla]$
with
\begin{align}
\bar{Z}(\eta)-\frac{(\bar{Z}'(\eta))^2}{2 (\der{2}-\gap_2)}\le
\min_{\la}\bar{Z}(\la)\le \bar{Z}(\eta)-\frac{(\bar{Z}'(\eta))^2}{2 (\der{2}+\gap_2)}\per\n
\end{align}
Thus the lemma follows from Lemma \ref{thm_tool_lattice}.
\end{proof}

Next we define
\begin{align}
\tg{-}(\bmX,\bmY)&=
(1-\ep/2)g_h\left(
\frac{
\e^{n(\bar{Z}(\eta)+R-(\bar{Z}'(\eta))^2/2(\der{2}-\gap_2))}
}{\cs{-}\sqrt{n}}
\right)\com\nn
\tg{+}(\bmX,\bmY)
&=
(1+\ep/2)g_h\left(
\frac{
 \e^{n(\bar{Z}(\eta)+R-(\bar{Z}'(\eta))^2/2(\der{2}+\gap_2))}
}{\cs{+}\sqrt{n}}
\right)\com\nn
\tG{s}&=\E[\tg{s}(\bmX,\bmY)],\,s\in\{-,+\}\com\n
\end{align}
where
\begin{align}
\cs{-}=
\frac{\eta(\e^{h(\eta+\gapla)}-1)\sqrt{2\pi (\mu_2+\gap_2)}}{(\e^{h\eta}-1)(1-\ep/2)}\com\nn
\cs{+}=
\frac{\eta(\e^{h(\eta-\gapla)}-1)\sqrt{2\pi (\mu_2-\gap_2)}}{(\e^{h\eta}-1)(1+\ep/2)}\per\n
\end{align}
Then the error probability can be evaluated as follows.
\begin{lemma}\label{lem_g}
Fix the coding rate $R$ and assume that
the same condition as Lemma \ref{lem_f} holds.
Then, for all sufficiently large $n$,
\begin{align}
\tg{-}(\bmX,\bmY)
\le
q_M(\pp(\bmX,\bmY),\pz(\bmX,\bmY))
\le
\tg{+}(\bmX,\bmY)\per\n
\end{align}
\end{lemma}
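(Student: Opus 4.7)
My plan is to first replace $q_M(\pp,\pz)$ by the explicit expression $f(M\pp,M\pz):=1-\e^{-M\pp}(1-\e^{-M\pz})/(M\pz)$ via Lemma \ref{thm_unify}, then sandwich $\pp$ and $\pz$ using the bounds of Lemma \ref{lem_f} together with the monotonicity of $f$, and finally check by algebra that the resulting expressions match $\tg{\pm}$ up to the $(1\pm\ep/2)$ slack built into their definition.

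First I would observe that $f(a,b)$ is nondecreasing in both arguments: its partial in $a$ is proportional to $\e^{-a}(1-\e^{-b})/b\ge 0$, and monotonicity in $b$ follows from the standard fact that $(1-\e^{-b})/b$ is decreasing in $b>0$. Next, I would check the hypotheses of Lemma \ref{thm_unify}. By Lemma \ref{lem_f} the ratio $\pp/\pz$ is confined (up to $(1\pm\ep)$ factors) to $[1/(\e^{h(\eta+\gapla)}-1),\,1/(\e^{h(\eta-\gapla)}-1)]$, so $\pp\le M^{c}\pz$ holds automatically for any $c>0$ and all large $n$. Because the event $S$ does not constrain $\bar{Z}(\eta)$, however, $\pp$ or $\pz$ may exceed $1/3$, so a case split is in order. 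When $\max(\pp,\pz)\le 1/3$, Lemma \ref{thm_unify} yields $q_M(\pp,\pz)=(1+\so(1))f(M\pp,M\pz)$, and monotonicity combined with Lemma \ref{lem_f} then gives the sandwich. When $\max(\pp,\pz)>1/3$, the argument $u_\pm$ in $\tg{\pm}$ is exponentially large in $n$, so $g_h(u_\pm)=1-\e^{-\Omega(n)}$; simultaneously $q_M\ge 1-(2/3)^{M-1}=1-\e^{-\Omega(n)}$, and the sandwich is immediate for large $n$.

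In the main regime, substituting the bounds from Lemma \ref{lem_f} into $f$ produces, after algebraic rearrangement, expressions of the form
\[
1-\frac{\e^{-\alpha_\pm h\eta u_\pm}(1-\e^{-\beta_\pm h\eta u_\pm})}{\beta_\pm h\eta u_\pm},
\]
with $\alpha_\pm\to 1/(\e^{h\eta}-1)$ and $\beta_\pm\to 1$ as $\ep,\gap_2,\gapla\to 0$. The constants $\cs{\pm}$ have been deliberately calibrated---through the factor $(\e^{h(\eta\mp\gapla)}-1)/(\e^{h\eta}-1)$ and the prefactors $(1\pm\ep/2)$---precisely so that these expressions fit inside $\tg{\pm}$ after absorbing the $(1\pm\so(1))$ factor from Lemma \ref{thm_unify}. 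The principal obstacle is purely a bookkeeping one: verifying that $\gap_1,\gap_2,\gapla$ and the $\so(1)$ error from Lemma \ref{thm_unify} can be chosen small enough relative to $\ep$ that all multiplicative perturbations align. This ultimately reduces to continuity of $g_h$ together with the observation that wherever $g_h(u)$ is close to $0$ the relative perturbation $f/g_h-1$ is also small (since both vanish linearly in $u$), plus the strict inequalities built into the definition of $\cs{\pm}$.
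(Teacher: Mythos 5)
Your approach is exactly the one the paper intends: the paper's proof is the single sentence ``This lemma is straightforward from Lemmas \ref{thm_unify} and \ref{lem_f},'' and you correctly expand this into (i) applying Lemma \ref{thm_unify} to replace $q_M(\pp,\pz)$ by $1-\e^{-M\pp}(1-\e^{-M\pz})/(M\pz)$, (ii) using the monotonicity of that function in both arguments together with the two-sided estimates of Lemma \ref{lem_f}, (iii) identifying the result with $g_h(\cdot)$ evaluated at the arguments appearing in $\tg{\pm}$ via $g_h(u)=1-\e^{-\frac{h\eta}{\e^{h\eta}-1}u}(1-\e^{-h\eta u})/(h\eta u)$, and (iv) dispatching the regime $\max(\pp,\pz)>1/3$ separately by noting that both $q_M$ and $g_h(u_\pm)$ are then $1-\e^{-\Omega(n)}$ while the $(1\pm\ep/2)$ prefactors supply the required slack.

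One intermediate claim you should repair, even though it does not affect the conclusion. You assert that Lemma \ref{lem_f} confines $\pp/\pz$ ``up to $(1\pm\ep)$ factors'' to the bounded interval $[1/(\e^{h(\eta+\gapla)}-1),\,1/(\e^{h(\eta-\gapla)}-1)]$. This is not literally true: the upper and lower estimates in Lemma \ref{lem_f} carry the quadratic term $(\bar Z'(\eta))^2$ with \emph{different} denominators $\mu_2\mp\gap_2$, so taking the ratio (upper bound on $\pp$)/(lower bound on $\pz$) leaves an exponential factor $\exp\!\big(n(\bar Z'(\eta))^2\gap_2/(\mu_2^2-\gap_2^2)\big)$, which grows in $n$ whenever $\bar Z'(\eta)\neq 0$. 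The hypothesis $\pp\le M^c\pz$ of Lemma \ref{thm_unify} nevertheless holds because under the event $S$ we have $|\bar Z'(\eta)|\le\gap_1$, so the exponent is at most $n\gap_1^2\gap_2/(\mu_2^2-\gap_2^2)$, and $\gap_1$ can be taken small enough (as the setting allows) that this rate is below $cR$. So the condition is \emph{not} automatic for any $c>0$; it requires the same ``$\gap_1$ sufficiently small'' choice already used in Lemma \ref{lem_f}, and this should be stated. (A cosmetic secondary point: in the case $\max(\pp,\pz)>1/3$ your bound $q_M\ge 1-(2/3)^{M-1}$ is only immediate when $\pp>1/3$; when $\pz>1/3$ but $\pp$ is small, use $q_M=1-\big((1-\pp)^M-(1-\pp-\pz)^M\big)/(M\pz)\ge 1-1/(M\pz)\ge 1-3/M$, which is still $1-\e^{-\Omega(n)}$.)
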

This lemma is straightforward from
Lemmas \ref{thm_unify} and \ref{lem_f}.
We use the following lemma to evaluate the contribution
of the case $S^c$.
\begin{lemma}\label{lem_regularity}
Let $\ug(\bmX,\bmY)=\e^{n\rho(\bar{Z}(\eta)+R)}$.
Then
\begin{align}
q_M(\pp(\bmX,\bmY),\pz(\bmX,\bmY))
&\le
\ug(\bmX,\bmY)\com\label{regularity_kantan1}\\
\tg{-}(\bmX,\bmY)
&\le
\frac{1+h\eta/2}{(\cs{-})^{\rho}}\ug(\bmX,\bmY)\per
\label{regularity_kantan2}
\end{align}
Furthermore,
for sufficiently large $\oa_4$
and sufficiently small $\gapla\ll \min\{\gap_2,\gap_3\}$ and $\dephi\ll \gapxi$ we have
\begin{align}
\blimsup_{n\to\infty}
\frac1n\log \E_{\bmX\bmY}[\idx{S^c}\ug(\bmX,\bmY)]<-E_r(R)\per
\n
\end{align}
\end{lemma}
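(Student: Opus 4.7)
The plan is to prove the three claims in order; the first two are computational manipulations and the third carries the substantive exponential estimate.

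For \eqref{regularity_kantan1} I apply a Chernoff-type bound with parameter $\eta$. By Markov's inequality conditional on $(\bmX,\bmY)$,
\begin{align}
\pp(\bmX,\bmY)+\pz(\bmX,\bmY)
&=P_{\bmX'}[r(\bmX,\bmY,\bmX')\ge 0]\nn
&\le \E_{\bmX'}[\e^{\eta r(\bmX,\bmY,\bmX')}]=\e^{n\bar Z(\eta)},\n
\end{align}
so $(M-1)(\pp+\pz)\le M\e^{n\bar Z(\eta)}=\e^{n(\bar Z(\eta)+R)}$. Combining the trivial bound $q_M(\pp,\pz)\le \min\{1,(M-1)(\pp+\pz)\}$ with the elementary inequality $\min\{1,x\}\le x^{\rho}$ valid for $x\ge 0$ and $\rho\in(0,1]$ then yields \eqref{regularity_kantan1}. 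For \eqref{regularity_kantan2} I appeal to a property of $g_h$ to be established in Appendix \ref{append_gh}, namely $g_h(u)\le (1+h\eta/2)\min\{u,1\}\le (1+h\eta/2)u^{\rho}$. Inside the argument of $g_h$ in $\tg{-}$ the factor $\e^{-n(\bar Z'(\eta))^2/(2(\mu_2-\gap_2))}\le 1$ can be dropped and $n^{\rho/2}\ge 1$ in the denominator ignored; collecting factors produces \eqref{regularity_kantan2}.

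The third claim is proved via a change-of-measure identity. Since $\La(\rho)+\rho R=-E_r(R)$ by the definition of $E_r(R)$,
\begin{align}
\E_{\bmX\bmY}[\idx{S^c}\ug(\bmX,\bmY)]
&=\e^{n(\La(\rho)+\rho R)}\E_{\rho}[\idx{S^c}]\nn
&=\e^{-nE_r(R)}P_{\rho}[S^c],\n
\end{align}
where $P_\rho$ is extended to the product space by the density $\e^{n(\rho\bar Z(\eta)-\La(\rho))}$. It therefore suffices to exhibit $c>0$ with $P_{\rho}[S^c]\le \e^{-cn}$. Reading $S$ as the intersection of its five regularity events, $S^c$ is a union to which I apply a union bound. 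For $\{|\bar Z'(\eta)|>\gap_1\}$ the exponential decay follows from Cram\'er's theorem, since $\E_\rho[Z'(\eta)]=\mu_1=0$ by \eqref{z_ave} and exponential moments are finite by \eqref{regularity_moment}. The remaining events $\mathcal{A}_m^c$, $\mathcal{B}^c$, $\mathcal{C}^c$ each concern the supremum of an empirical process over a compact parameter set; I plan to discretize the set at fine ($n$-independent) mesh, apply Chernoff at each grid point using the exponential-moment hypotheses in \eqref{regularity_moment}, and bound the interpolation error via Lipschitz constants supplied by the higher-derivative moment bounds of the same hypothesis.

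The main obstacle will be the $\mathcal{B}$-event, that is, exhibiting the strict spectral gap $\dephi>0$ with $\cumub(\la+\im\xi)-\cumub(\la)\le -\dephi$ uniformly over $\xi\in[-\pi/h,\pi/h]\setminus[-\gapxi,\gapxi]$ and $\la$ in a neighborhood of $\eta$. At $\la=\eta$ Definition \ref{def_lattice} gives that the conditional span of $\log\nu(X,Y)$ is an integer multiple of $h$ with $h$ itself the greatest common divisor, so $|\E_{X'}[\e^{\im\xi r(X,Y,X')}]|<1$ almost surely for such $\xi$; taking expectation under $P_\rho$ and invoking compactness of the $\xi$-range produces a strictly negative supremum. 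Shrinking $\gapla$ and $\dephi$ preserves the bound in a $\la$-neighborhood of $\eta$ by continuity, which is the role of the hypotheses ``$\gapla\ll\min\{\gap_2,\gap_3\}$'' and ``$\dephi\ll\gapxi$'' stated in the lemma.
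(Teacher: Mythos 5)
Your proposal has the same high-level structure as the paper's proof: the first two bounds are computational; then you change measure to $P_\rho$, identify $\La(\rho)+\rho R=-E_r(R)$, union-bound $P_\rho[S^c]$ over the five constituent bad events, handle the scalar event $\{|\bar Z'(\eta)|\ge\gap_1\}$ by Cram\'er's inequality, and isolate the spectral-gap claim (the role of Lemma~\ref{lem_nonlattice_cumulant}) as the substantive input for the $\mathcal{B}$-event. Parts \eqref{regularity_kantan1} and \eqref{regularity_kantan2} are argued correctly (your chain $q_M\le\min\{1,(M-1)(\pp+\pz)\}\le((M-1)(\pp+\pz))^\rho$, then $\pp+\pz\le\e^{n\bar Z(\eta)}$ by Markov at $\la=\eta$, is exactly the ``standard random-coding'' step the paper invokes; the constant $1+h\eta/2$ you carried from the lemma statement differs from the $c_h=1+h\eta$ the paper's Appendix~\ref{append_gh} actually establishes, but this is a slip in the paper, not in your reasoning, and either constant suffices).

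Where you genuinely diverge is in the treatment of the functional events $\mathcal{A}_m^c$, $\mathcal{B}^c$, $\mathcal{C}^c$. The paper packages this into Lemma~\ref{cramer_benri}, a consequence of Cram\'er's theorem on the separable Banach space $\mathcal{C}(\mathcal{A})$ with the max norm, which gives exponential decay of $P_\rho[\sup_s|\bar V(s)-v(s)|\ge\ep]$ directly from a uniform pointwise exponential moment bound. You instead propose discretization of the parameter set, Chernoff at each grid point, and Lipschitz interpolation controlled by the higher-derivative moment hypotheses of \eqref{regularity_moment}. This is a viable elementary alternative, but be aware of a potential circularity you should address explicitly: controlling the empirical Lipschitz constant of, say, $\la\mapsto\bar Z^{(2)}(\la)$ requires a uniform bound on $\bar Z^{(3)}(\la)$, which is itself one of the quantities under control; and \eqref{regularity_moment} only supplies moments up to $m=3$ in $\la$ and the fourth $\xi$-derivative, so the recursion must terminate with a direct argument (e.g., a sup-moment via chaining, or exploiting that the top-order derivative only needs a one-sided a.s.\ bound under the tilted measure). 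The Banach-space route sidesteps this bookkeeping, which is what it buys; your route buys elementariness and concreteness. The spectral-gap step for $\mathcal{B}$ is sound as sketched -- the paper's Lemma~\ref{lem_nonlattice_cumulant} actually proves the strict negativity for all $\la>0$ rather than extending from $\la=\eta$ by continuity, so it is slightly more direct than your plan, but both give the required uniformly strict bound after invoking compactness of the $\xi$-annulus.
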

We prove this lemma in Appendix
\ref{append_regularity}.
The proof is obtained by
Cram\'er's theorem for general topological vector spaces \cite[Theorem 6.1.3]{LDP}
with the fact that $\mathcal{C}_1$ and $\mathcal{C}_2$
are separable Banach spaces under the max norm.

\begin{proof}[Proof of Theorem \ref{thm_expansion1}]
From Lemma \ref{lem_g},
it holds for $\gap_1\ll \gapla\ll \min\{\gap_2,\gap_3\},\,\dephi\ll \gapxi$
and sufficiently large $n$
that
\begin{align}
P_{\mathrm{RC}}
&=
\E_{\bmX\bmY}[\idx{S}q_M(\pp(\bmX,\bmY),\pz(\bmX,\bmY))]
\nn
&\quad+
\E_{\bmX\bmY}[\idx{S^c}q_M(\pp(\bmX,\bmY),\pz(\bmX,\bmY))]\nn
&\le
\tG{+}
+
\E_{\bmX\bmY}[\idx{S^c}q_M(\pp(\bmX,\bmY),\pz(\bmX,\bmY))]\per\n
\end{align}
Thus we obtain from Lemma \ref{lem_regularity} that
\begin{align}
\frac{P_\mathrm{RC}}{\tG{+}}
&=
1+\frac{P_\mathrm{RC}-\tG{+}}{\tG{+}}
\le
1+\frac{\E_{\bmX\bmY}[\idx{S^c}\ug(\bmX,\bmY)]}{\tG{+}}\per\n
\end{align}

Similarly we have
\begin{align}
P_{\mathrm{RC}}
&\ge
\E_{\bmX\bmY}[\idx{S}\tg{-}(\bmX,\bmY)]\nn
&=
\tG{-}
-
\E[\idx{S^c}\tg{-}(\bmX,\bmY)]\n
\end{align}
and therefore
\begin{align}
\frac{P_\mathrm{RC}}{\tG{-}}
&\ge
1-\frac{1+h\eta/2}{(\cs{-})^{\rho}}\frac{\ug(\bmX,\bmY)}{\tG{-}}
\n
\end{align}
and we see from Lemma \ref{lem_regularity} and Lemma \ref{lem_main} below
that
\begin{align}
\frac{\ug(\bmX,\bmY)}{\tG{s}}=\so(1),\,s\in\{+,-\}\n
\end{align}
and we obtain Theorem \ref{thm_expansion1}.
\end{proof}

\section{Second Asymptotic Expansion}\label{sec_second}
To prove Theorem \ref{thm_main}
it is necessary to evaluate the expectation $\tG{s}=\E[\tg{s}(\bmX,\bmY)]$.
This expectation can be bounded by Lemma \ref{lem_main} below
and we give a sketch of its proof in this section.

\begin{lemma}\label{lem_main}
Fix the coding rate $0<R<I(X;Y)$
assume that
$(Z(\eta),Z'(\eta))$ is strongly nonlattice.
Then, for any fixed $c_1,c_2>0$ and sufficiently large $n$,
\begin{align}
\lefteqn{
\E\!\left[
g_h\left(
\frac{
 \e^{n(\bar{Z}(\eta)+R-(\bar{Z}'(\eta))^2/2c_1)}
}{c_2\sqrt{n}}
\right)
\right]
}\nn
&=
\begin{cases}
\frac{\psi_{\rho}(c_2\sqrt{n})^{-\rho}}{\sqrt{2\pi n(\si_{00}+\rho|\Sigma_{01}|/c_1)}}
\e^{-nE_r(R)}
(1+\so(1)),
&R> R_{\mathrm{crit}},\\
\frac{h\eta(c_2\sqrt{n})^{-1}}{2(\e^{h\eta}-1)\sqrt{1+\si_{11}/c_1}}
\e^{-nE_r(R)}
(1+\so(1)),
&R=R_{\mathrm{crit}},\\
\frac{h\eta(c_2\sqrt{n})^{-1}}{(\e^{h\eta}-1)\sqrt{1+\si_{11}/c_1}}
\e^{-nE_r(R)}
(1+\so(1)),
&R<R_{\mathrm{crit}}.
\end{cases}\n
\end{align}
\end{lemma}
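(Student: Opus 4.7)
The plan is to exponentially tilt from $P$ to $P_\rho$: on $n$ samples the Radon--Nikodym derivative is $\e^{n\rho\bar Z(\eta)-n\La(\rho)}$, and the identity $E_r(R)=-\rho R-\La(\rho)$ extracts the factor $\e^{-nE_r(R)}$. Writing $U=\sqrt n(\bar Z(\eta)+R)$ and $V=\sqrt n\,\bar Z'(\eta)$, the target becomes
\[
\e^{-nE_r(R)}\,\E_\rho\!\left[\e^{-\rho\sqrt n\,U}\,g_h\!\left(\tfrac{\e^{\sqrt n\,U-V^2/(2c_1)}}{c_2\sqrt n}\right)\right].
\]
By \eqref{la_bibun} and \eqref{z_ave}, $(U,V)$ is centered at the origin under $P_\rho$ whenever $R\ge R_{\mathrm{crit}}$, and the hypothesis that $(Z(\eta),Z'(\eta))$ is strongly nonlattice provides a 2D local central limit theorem with limiting density $\phi(u,v)=\frac{1}{2\pi\sqrt{|\Sigma_{01}|}}\e^{-(u,v)\Sigma_{01}^{-1}(u,v)^\top/2}$.

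For $R>R_{\mathrm{crit}}$ (so $\rho<1$) I would introduce the change of variable $w=\sqrt n\,U-V^2/(2c_1)-\log(c_2\sqrt n)$, which collapses the argument of $g_h$ to $\e^w$ and extracts the factor $(c_2\sqrt n)^{-\rho}$ from $\e^{-\rho\sqrt n U}$. The Jacobian is $1/\sqrt n$, and the implicit $u(w,v)=(w+v^2/(2c_1)+\log(c_2\sqrt n))/\sqrt n\to 0$ for any fixed $(w,v)$, so the local CLT gives $\phi(u(w,v),v)\to\frac{1}{2\pi\sqrt{|\Sigma_{01}|}}\e^{-v^2\sigma_{00}/(2|\Sigma_{01}|)}$ uniformly on compact sets. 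The resulting double integral factorizes into $\psi_{\rho,h}=\int\e^{-\rho w}g_h(\e^w)\,\rd w$ (finite because $g_h(u)=O(u)$ near $0$ and $O(1)$ at $\infty$) and a Gaussian integral in $v$ whose exponent combines $-\rho v^2/(2c_1)$ (from the substitution) with $-v^2\sigma_{00}/(2|\Sigma_{01}|)$ (from $\phi$). Merging factors yields $\sqrt{\sigma_{00}+\rho|\Sigma_{01}|/c_1}$ in the denominator and matches the first branch of the claim.

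For $R\le R_{\mathrm{crit}}$, $\rho=1$ and $\psi_{\rho,h}$ has a divergent $\Gamma(1-\rho)$ factor, so the representation above fails. The key observation is that wherever the argument of $g_h$ is small enough for the linear-at-$0$ approximation $g_h(u)\approx C_0 u$ to apply, the product $\e^{-\sqrt n U}g_h(\mathrm{arg})$ simplifies to $C_0\e^{-V^2/(2c_1)}/(c_2\sqrt n)$, independent of $U$. For $R<R_{\mathrm{crit}}$ strictly, $\mu_0<-R$ makes this condition hold with overwhelming probability under $P_\rho$, and the expectation reduces to a one-dimensional Gaussian integral in $V$ yielding $1/\sqrt{1+\sigma_{11}/c_1}$. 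At $R=R_{\mathrm{crit}}$, $U$ is centered under $P_\rho$ and the transition threshold has order $(\log n)/\sqrt n\to 0$, so the linear approximation is valid on asymptotically the half-space $\{U<0\}$; by symmetry of the bivariate Gaussian this event contributes exactly half of the $R<R_{\mathrm{crit}}$ expression, while the complementary event $\{U>0\}$ contributes only $O(1/n)$ via a Gaussian tail estimate on $\e^{-\sqrt n U}$, producing the extra factor $1/2$.

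The main obstacle is uniformity of the 2D local CLT. Since the substitution in the $R>R_{\mathrm{crit}}$ case shifts the relevant $u$-window by $(1/2)\log(c_2^2 n)$, control of $\phi$ is needed on a slowly expanding neighborhood of the origin; I intend to obtain this via a characteristic-function estimate analogous to \cite[Sect.\,XVI.2]{feller_vol2}, exploiting the regularity conditions \eqref{regularity_moment} on exponential moments of $Z^{(m)}$ and the strict inequality $|\E_\rho[\e^{\im(\xi_1 Z(\eta)+\xi_2 Z'(\eta))}]|<1$ for $(\xi_1,\xi_2)\ne 0$ that follows from the strongly nonlattice assumption. Complementary large-deviation tails outside the saddlepoint neighborhood must decay faster than the leading term; the estimate of Lemma \ref{lem_regularity} already provides this kind of control for Theorem \ref{thm_expansion1} and should be adaptable here. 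The boundary case $R=R_{\mathrm{crit}}$ is the most delicate, because a careful truncation near $\sqrt n U\approx\log n/2$ is required to avoid the on-the-verge divergence of the $w$-integral.
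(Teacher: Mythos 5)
Your high-level plan — exponential tilting to $P_\rho$, the change of variable that collapses $g_h$'s argument to $\e^w$, the split of cases by $\rho<1$ versus $\rho=1$, and the observation that the $R=R_{\mathrm{crit}}$ case contributes a factor $1/2$ while $R<R_{\mathrm{crit}}$ picks up the full Gaussian mass — matches the paper's proof structure in Appendix G.

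However, the key technical tool you invoke, a ``2D local central limit theorem with limiting density'' for $(U,V)=\sqrt n(\bar Z(\eta)+R,\,\bar Z'(\eta))$, does not exist in the generality you need. The strongly-nonlattice hypothesis on $(Z(\eta),Z'(\eta))$ guarantees $|\E_\rho[\e^{\im\inner{\xi,(Z(\eta),Z'(\eta))}}]|<1$ for $\xi\ne 0$, but it does \emph{not} imply absolute continuity: for a finite-alphabet channel the vector $(Z(\eta),Z'(\eta))$ is discrete, and the law of $(U,V)$ is purely atomic for every $n$, so there is no density $\phi(u,v)$ to estimate and no ``local CLT'' in the density sense. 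Your plan to control $\phi$ on a slowly expanding neighborhood via a characteristic-function estimate is therefore aimed at the wrong object. The paper sidesteps this by using an Edgeworth-type expansion for \emph{expectations} rather than densities (Proposition~\ref{prop_ranga}, i.e.\ Bhattacharya--Ranga~Rao Thm.~20.8), which approximates $\E[f(\bar V)]$ by $\int f(z)(1-h(z)/\sqrt n)\phi_\Sigma(z)\,\rd z$ with an error of the form $\omega_f(\mathbb{R}^2)\de_n+\omega_f(\de_n;\Phi_\Sigma)$. This puts the burden on showing that the oscillations of $f_n(z)=\e^{-\sqrt n\rho z_1}g_h(\e^{\sqrt n z_1-z_2^2/2c_1}/c_2\sqrt n)$ are negligible compared with the leading term $n^{-(1+\rho)/2}$ (or $n^{-\rho/2}$ when $\rho=1$); that is the content of Lemma~\ref{lem_osci}, and it is the genuinely delicate step since $f_n$ varies on scale $1/\sqrt n$ and grows like $\e^{(1-\rho)\sqrt n z_1}$ in $z_1$, so one must trade Gaussian decay against exponential growth over a window of width $\Theta(\log n/\sqrt n)$. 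Your proposal does not identify this oscillation control; without it the passage from $\E_\rho[f_n(U,V)]$ to the Gaussian integral is unjustified. If you replace the ``local CLT'' claim by Proposition~\ref{prop_ranga} and supply the oscillation bounds \eqref{osci1}--\eqref{osci3}, the remaining calculus in your sketch (the factorization into $\psi_{\rho,h}$ and a one-dimensional Gaussian integral in $z_2$, and the $w\le -n^{-1/4}$ cutoff for $\rho=1$) does carry through as in the paper.
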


Let $\Phi_{\Sigma}$ and $\phi_{\Sigma}$
be the cumulative distribution function and the density
of a normal distribution
with mean zero and covariance $\Sigma$, respectively.
We define the $\de$-ball $B_{\de}(z)\in \mathbb{R}^2$ around $z\in\mathbb{R}^2$ as
$B_{\de}(z)=\{z': \Vert z-z'\Vert\le \de\}$.
The oscillation $\omega_f$ of $f$ is defined as
\begin{align}
\omega_f(S)&=
\sup_{z'\in S}f(z')-\inf_{z'\in S}f(z')\com\qquad S\subset \mathbb{R}^2\com\nn
\omega_f(\de;\Phi_{\Sigma})
&=
\sup_{a\in\mathbb{R}^2}
\int \omega_f(B_{\de}(z))\phi_{\Sigma}(z+a)\rd z\per\n
\end{align}

We use the following proposition on the asymptotic expansion for the proof of Lemma \ref{lem_main}.
\begin{proposition}[{\cite[Theorem 20.8]{ranga}}]\label{prop_ranga}
Let $V_1,V_2,\cdots\in \mathbb{R}^2$ be i.i.d.\,strongly nonlattice random variables
with mean zero and covariance matrix $\Sigma$.
Then, there exists a three-degree
polynomial\,\footnote{The explicit representation of $h(z)$ is given in the original
reference \cite{ranga} but we do not use it in this paper.} $h(z)=h(z_1,z_2)$
such that for any function $f(z)$
\begin{align}
&
\bigg|\int f(z)\left(1-\frac{h(z)}{\sqrt{n}}\right)
\phi_{\Sigma}(z)\rd z
-\E[f(\bar{V})]\bigg|\nn
&\qquad
\le
\omega_f(\mathbb{R}^2)\de_n+\omega_f(\de_n; \Phi_{\Sigma})\com\n
\end{align}
where
$\de_n$ satisfies $\lim_{n\to\infty}\sqrt{n}\de_n=0$
and does not depend on $f$.
\end{proposition}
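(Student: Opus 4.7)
The plan is to follow the Fourier-analytic strategy for multivariate Edgeworth expansions: (i) derive the expansion for a smooth, rapidly decaying test function via a local cumulant expansion of the characteristic function combined with the strongly nonlattice hypothesis, then (ii) extend to general $f$ by a smoothing argument that produces the oscillation bounds $\omega_f(\mathbb{R}^2)\de_n$ and $\omega_f(\de_n;\Phi_\Sigma)$.

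For the smooth case, write $\varphi_n(\xi)=\E[\e^{\im\langle\xi,\bar{V}\rangle}]=(\E[\e^{\im\langle\xi/\sqrt{n},V_1\rangle}])^n$ and expand $\log\E[\e^{\im\langle t,V_1\rangle}]$ in cumulants at $t=0$. Since $V_1$ has mean zero,
\begin{equation*}
\varphi_n(\xi)=\e^{-\frac{1}{2}\xi^\top\Sigma\xi}\Bigl(1+\frac{P(\im\xi)}{\sqrt{n}}+\so(1/\sqrt{n})\Bigr),\qquad \|\xi\|\le c\sqrt{n},
\end{equation*}
with $P$ a cubic polynomial in $\xi$ built from the third cumulants of $V_1$; the inverse Fourier transform of $P(\im\xi)\e^{-\xi^\top\Sigma\xi/2}$ is exactly $-h(z)\phi_\Sigma(z)$ for the cubic polynomial $h$ claimed in the statement. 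The tail $\|\xi\|>c\sqrt{n}$ of the inversion integral is controlled by the strongly nonlattice condition, which gives $|\E[\e^{\im\langle t,V_1\rangle}]|\le 1-\ga<1$ uniformly on $\|t\|\ge c$, so that $|\varphi_n(\xi)|\le(1-\ga)^n$ decays exponentially. Combining these, Parseval's identity yields the expansion with error $\so(1/\sqrt{n})$ uniformly over smooth test functions with integrable Fourier transform.

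For general $f$ of bounded oscillation, smooth by convolution $f_{\de_n}=f*K_{\de_n}$ with a symmetric nonnegative kernel supported on $B_{\de_n}(0)$ whose Fourier transform itself has compact support, where $\sqrt{n}\,\de_n\to 0$. Applying the smooth-case result to $f_{\de_n}$ reduces the problem to bounding the two discrepancies
\begin{equation*}
|\E[f(\bar{V})]-\E[f_{\de_n}(\bar{V})]|\quad\text{and}\quad\Bigl|\int (f-f_{\de_n})(z)\Bigl(1-\tfrac{h(z)}{\sqrt{n}}\Bigr)\phi_\Sigma(z)\rd z\Bigr|.
\end{equation*}
The pointwise inequality $|(f-f_{\de_n})(z)|\le\omega_f(B_{\de_n}(z))$ bounds the second discrepancy by $\omega_f(\de_n;\Phi_\Sigma)$, while the first is bounded similarly once a preliminary Edgeworth application to a smooth indicator shows that $\bar{V}$ has an approximate density close to $\phi_\Sigma$; the contribution from $\bar{V}$ outside a large ball is absorbed into $\omega_f(\mathbb{R}^2)\de_n$ using Gaussian tail decay.

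The main obstacle is making the error bound genuinely uniform in $f$: everything must be expressed through the oscillation functional rather than any modulus of continuity, which forces the careful kernel choice above so that $\hat f_{\de_n}$ vanishes outside the scale $B_{1/\de_n}(0)$ matching the range of validity of the Edgeworth expansion. Since the statement only asserts the existence of a cubic polynomial $h$ without specifying its coefficients, no explicit identification of $h$ in terms of the third cumulants is needed, which simplifies the algebra considerably.
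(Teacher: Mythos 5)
The paper does not prove Proposition~\ref{prop_ranga}; it cites it verbatim as Theorem 20.8 of the reference \cite{ranga} (Bhattacharya and Ranga Rao), so there is no in-paper proof to compare against. Judged on its own, your outline follows the standard Fourier/smoothing strategy of that reference, but it rests on a false intermediate claim. You assert that the strongly nonlattice hypothesis ``gives $|\E[\e^{\im\langle t,V_1\rangle}]|\le 1-\ga<1$ uniformly on $\|t\|\ge c$,'' so that $|\varphi_n(\xi)|\le(1-\ga)^n$ for $\|\xi\|>c\sqrt{n}$. That uniform bound at infinity is Cram\'er's condition, which is strictly stronger. The strongly nonlattice property, as defined in the paper, only gives $|\E[\e^{\im\langle t,V_1\rangle}]|<1$ for each $t\neq 0$, hence (by continuity and compactness) a uniform bound below $1$ on any fixed compact annulus $c\le\|t\|\le T$, but nothing as $\|t\|\to\infty$: any strongly nonlattice discrete distribution has $\limsup_{\|t\|\to\infty}|\hat\mu(t)|=1$. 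Precisely because this uniform bound fails, the smooth case does not reduce to Parseval plus exponential tail decay; the oscillation terms in the statement exist to compensate for its absence.

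Your second paragraph names the correct remedy — a smoothing kernel with compactly supported Fourier transform — but the calibration you wave past is the crux and cannot be skipped. After smoothing, the inversion integral for $f_{\de_n}$ runs over $\|\xi\|\lesssim 1/\de_n$, i.e.\ $\|t\|=\|\xi\|/\sqrt{n}\lesssim 1/(\sqrt{n}\de_n)=:T_n$. Since $\sqrt{n}\de_n\to 0$, $T_n\to\infty$, so you are once more asking for control of $|\hat\mu(t)|^n$ on an unbounded (growing) annulus — exactly what the first paragraph incorrectly took for granted. The correct argument picks $T_n\to\infty$ slowly enough that $n\bigl(1-\sup_{c\le\|t\|\le T_n}|\hat\mu(t)|\bigr)\to\infty$, which is possible because the supremum over each fixed compact annulus is strictly below $1$, and then sets $\de_n=1/(\sqrt{n}T_n)$, yielding $\sqrt{n}\de_n\to 0$; this is where the strongly nonlattice condition is actually used, and it is where your sketch has a gap rather than a proof. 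A secondary issue: the appeal to ``an approximate density'' for $\bar{V}$ is not available in general, since $\bar{V}$ need not have a density; the smoothing inequality in Bhattacharya--Ranga Rao is applied directly to distributions without that intermediate step.
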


To apply this proposition
we define
\begin{align}
f_n(z)=
\e^{-\sqrt{n}\rho z_1}
g_h\left(
\frac{
 \e^{\sqrt{n}z_1-z^2/2c_1}
}{c_2\sqrt{n}}
\right)\per\n
\end{align}
The oscillations $\omega_{f_n}(\mathbb{R}^2)$ and
$\omega_{f_n}(\de_n;\Phi)$ of $f_n$ are equal to
those of
\begin{align}
\e^{-\sqrt{n}\rho(z_1-\sqrt{n}\De)}
g_h\left(
\frac{
 \e^{\sqrt{n}(z_1-\sqrt{n}\De)-z^2/2c_1)}
}{c_2\sqrt{n}}
\right)\n
\end{align}
from their definitions.

We can bound the oscillation of $f_n$ as follows.
\begin{lemma}\label{lem_osci}
It holds that
\begin{align}
\omega_{f_n}(\mathbb{R}^2)
&=
\lo(n^{-\rho/2})\com
\label{osci1}\\
\omega_{f_n}(\de_n;\Phi)
&=
\so(n^{-\rho/2})\per
\label{osci2}
\end{align}
Furthermore, if $\rho<1$ then
\begin{align}
\omega_f(\de_n;\Phi)
&=
\so(n^{-(1+\rho)/2})
\per
\label{osci3}
\end{align}
\end{lemma}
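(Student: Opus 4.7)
The plan is to recast $f_n$ in terms of a single scaling variable and then exploit integrability properties of the resulting profile. Set
\begin{equation*}
k(v) = \e^{-\rho v} g_h(\e^v), \qquad v(z) = \sqrt{n}\, z_1 - \frac{z_2^2}{2c_1} - \log(c_2\sqrt{n}),
\end{equation*}
so that $f_n(z) = (c_2\sqrt{n})^{-\rho}\, \e^{-\rho z_2^2/(2c_1)}\, k(v(z))$. Combining the local expansion $g_h(u) = g_h'(0)\,u + O(u^2)$ near $u=0$ with the fact that $1-g_h(u)$ decays exponentially as $u\to\infty$ (both from Appendix \ref{append_gh}), one verifies that $k$ and $k'$ are bounded on $\mathbb{R}$, that $k'$ belongs to $L^1(\mathbb{R})$ for every $\rho\in(0,1]$, and that $k$ itself belongs to $L^1(\mathbb{R})$ if and only if $\rho<1$: the decay $k(v)\sim g_h'(0)\e^{(1-\rho)v}$ as $v\to-\infty$ is integrable only for $\rho<1$.

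The bound \eqref{osci1} is immediate, since $|f_n(z)|\le (c_2\sqrt n)^{-\rho}\|k\|_\infty$ gives $\omega_{f_n}(\mathbb R^2)\le 2\sup|f_n| = O(n^{-\rho/2})$. For \eqref{osci2}, direct differentiation yields
\begin{align*}
\partial_{z_1} f_n &= \sqrt{n}\,(c_2\sqrt{n})^{-\rho}\,\e^{-\rho z_2^2/(2c_1)}\,k'(v),\\
\partial_{z_2} f_n &= -\frac{z_2}{c_1}(c_2\sqrt{n})^{-\rho}\,\e^{-\rho z_2^2/(2c_1)}\,[\rho\, k(v) + k'(v)].
\end{align*}
Since $k$, $k'$, and $|z_2|\,\e^{-\rho z_2^2/(2c_1)}$ are all uniformly bounded, $\|\nabla f_n\|_\infty\le C\sqrt{n}\, n^{-\rho/2}$; the mean-value inequality then gives $\omega_{f_n}(B_{\de_n}(z))\le 2\de_n\sup_{B_{\de_n}(z)}\|\nabla f_n\| = O(\de_n\sqrt n\, n^{-\rho/2}) = o(n^{-\rho/2})$ uniformly in $z$, and integrating against the probability density $\phi_\Sigma(\cdot+a)$ preserves the estimate.

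For \eqref{osci3} (under $\rho<1$) I keep the mean-value pointwise bound but integrate more carefully. Since $v$ changes by $O(\sqrt{n}\,\de_n+|z_2|\,\de_n)=o(1)+o(1)\cdot|z_2|$ over $B_{\de_n}(z)$, the uniform continuity of $k$ and $k'$ lets me replace $\sup_{B_{\de_n}(z)}\|\nabla f_n\|$ by $C\|\nabla f_n(z)\|$ (tempered by a factor depending on $|z_2|$ that remains Gaussian-integrable), reducing the task to bounding $\sup_a\int\|\nabla f_n(z)\|\,\phi_\Sigma(z+a)\,\rd z$. The $z_1$-contribution, after substituting $v=v(z)$ (so $\rd z_1=\rd v/\sqrt n$) and bounding $\phi_\Sigma\le\|\phi_\Sigma\|_\infty$, equals $C\,n^{-\rho/2}\,\|k'\|_1 = O(n^{-\rho/2})$ (the $\sqrt n$ from the derivative cancels the $1/\sqrt n$ Jacobian). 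The $z_2$-contribution, by the same substitution together with $\|k\|_1<\infty$ (where $\rho<1$ enters) and $\int|z_2|\,\e^{-\rho z_2^2/(2c_1)}\,\rd z_2<\infty$, is $O(n^{-\rho/2}\cdot n^{-1/2})$. Combining with $\de_n=o(n^{-1/2})$ yields $\omega_{f_n}(\de_n;\Phi)=o(n^{-(1+\rho)/2})$.

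The main obstacle is a clean integrability analysis of $k$: the precise leading coefficients in the expansion of $g_h$ near $0$ are needed to see that $\|k\|_1$ diverges exactly at $\rho=1$, and to verify $\|k'\|_1<\infty$ even at the borderline $\rho=1$, where the nominal factor $(1-\rho)$ vanishes and the next-order $\tfrac{1}{2}g_h''(0)\e^v$ drives the decay. Once this is in hand, the $1/\sqrt n$ improvement for \eqref{osci3} is extracted by the natural change of variables that rescales $z_1$ by $\sqrt n$.
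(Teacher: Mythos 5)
Your recasting $f_n(z)=(c_2\sqrt n)^{-\rho}\e^{-\rho z_2^2/(2c_1)}k(v(z))$ with $k(v)=\e^{-\rho v}g_h(\e^v)$ is sound, and the arguments for \eqref{osci1} and \eqref{osci2} are correct and in fact shorter than the paper's (the paper instead derives the pointwise ball bound via Lemma~\ref{lem_osci_tight} and Lemma~\ref{lem_mathg}). For \eqref{osci3} you take a genuinely different route: the paper splits the integration domain at a cleverly chosen threshold $b_n$ with $\e^{\sqrt n b_n}=n^{1/2+1/4\rho}\de_n^{1/2\rho}$ and uses the two tight ball bounds from Lemma~\ref{lem_osci_tight} on each piece, incurring a $1/(1-\rho)$ factor that forces $\rho<1$; you instead change variables $z_1\mapsto v$ and exploit $\Vert k'\Vert_1<\infty$ globally, which is cleaner and (if executed rigorously) would even remove the $\rho<1$ restriction (the $z_2$-term only needs the sup-bound $|\rho k+k'|\le\ch$, not $\Vert k\Vert_1<\infty$).

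However, there is a genuine gap in the step ``the uniform continuity of $k$ and $k'$ lets me replace $\sup_{B_{\de_n}(z)}\Vert\nabla f_n\Vert$ by $C\Vert\nabla f_n(z)\Vert$''. This multiplicative replacement by a constant is false where $\nabla f_n(z)$ vanishes but is nonzero nearby. If you instead read it additively, uniform (in fact Lipschitz) continuity of $k'$ gives $\sup_{B_{\de_n}(z)}|k'(v(z'))|\le|k'(v(z))|+\lo(\sqrt n\,\de_n)$, and the correction contributes to $\omega_{f_n}(\de_n;\Phi)$ an amount of order $\de_n\cdot\sqrt n\,n^{-\rho/2}\cdot\sqrt n\,\de_n = n^{1-\rho/2}\de_n^2$; this is $\so(n^{-(1+\rho)/2})$ only when $\de_n=\so(n^{-3/4})$, which is not guaranteed by $\sqrt n\,\de_n\to0$. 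The correct reduction is to bound $\sup_{z'\in B_{\de_n}(z)}|k'(v(z'))|\le\sup_{|w-v(z)|\le\ep_n}|k'(w)|=:\tilde k'_n(v(z))$ with $\ep_n=\lo(\sqrt n\,\de_n)\to0$ (valid for $|z_2|\le c_1\sqrt n/2$, the complementary region being negligible by \eqref{osci1}), and then observe that $\int\tilde k'_n(v)\,\rd v$ stays bounded uniformly in $n$ because $\int\sup_{|w-v|\le1}|k'(w)|\,\rd v<\infty$; with this in place your change-of-variables computation goes through and yields \eqref{osci3}.
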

We prove this lemma in Appendix \ref{append_osci}.
By this lemma we can apply Proposition \ref{prop_ranga} to the proof of Lemma \ref{lem_main},
which we give in Appendix \ref{append_second}.

\section{Conclusion}
We derived a bound of random coding error probability,
the relative gap of which converges to zero as the block length increases.
The bound applies to any nonsingular memoryless channel such that
$(Z(\eta),Z'(\eta))$ is strongly nonlattice.
The main difference from other analyses is that
we optimize the parameter $\la$ around $\eta$ depending on
the sent and the received sequences $(\bmX,\bmY)$.
A future work is to extend the bound
to the case that
$(Z(\eta),Z'(\eta))$ is not strongly nonlattice, that is,
$(Z(\eta),Z'(\eta))$ is distributed on a set of lattice points
or on a set of parallel lines with an equal interval.
It may be possible to derive
an expression of asymptotic expansion applicable
to our problem by following the discussion in \cite[Chap.~5]{ranga}.

\section*{Acknowledgment}
The author thanks the anonymous reviewers
for their helpful comments and suggestion on many related works.
This work was supported in part by JSPS KAKENHI Grant Number 26106506.

\bibliographystyle{IEEEtran}
\bibliography{bunken}

\appendix

\subsection{Properties of Function $g_h$}\label{append_gh}
\begin{lemma}
For $\ch=1+h\eta$ it holds that
\begin{align}
g_h(u)
&\le
\min\{1,\ch u\}\label{upper_gh2}\\
&\le
\ch u^{\rho}\label{upper_gh}
\end{align}
and
\begin{align}
0\le\bibun{g_h(u)}{u}
&\le
(u+h\eta)\e^{-u}\label{g_bibun1}\\
&\le
\ch\label{g_bibun2}
\per
\end{align}
\end{lemma}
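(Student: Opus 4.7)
My approach is to rewrite both $g_h$ and its derivative as one-dimensional integrals. Using the identity $(1-\e^{-bu})/(bu)=\int_0^1\e^{-bsu}\,\rd s$ with $a:=h\eta/(\e^{h\eta}-1)$ and $b:=h\eta$, I obtain
\begin{align*}
g_h(u) &= \int_0^1\bigl(1-\e^{-(a+bs)u}\bigr)\,\rd s, \\
g_h'(u) &= \int_0^1(a+bs)\,\e^{-(a+bs)u}\,\rd s,
\end{align*}
the second by differentiating under the integral. These two representations immediately yield $0\le g_h(u)\le 1$ and $g_h'(u)\ge 0$, so all four stated inequalities reduce to elementary one-variable estimates.

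For \eqref{upper_gh2}, substituting $1-\e^{-x}\le\min\{1,x\}$ in the integrand gives $g_h(u)\le\min\{1,(a+b/2)u\}$, and it remains to verify the scalar bound $a+b/2\le\ch$; after clearing denominators this becomes $\e^b(1+b/2)\ge 1+3b/2$, which is seen by checking that both sides and their first derivatives coincide at $b=0$ while the second derivative of the difference is nonnegative. The bound \eqref{upper_gh} then follows by splitting into the cases $\ch u\le 1$ and $\ch u>1$: since $\ch\ge 1$ and $\rho\in(0,1]$, the first reduces to $u^{1-\rho}\le 1$ and the second to $\ch^{1-\rho}\ge 1$, both of which are immediate.

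For the derivative bounds, I would first evaluate the antiderivative of $t\,\e^{-tu}$ to obtain the closed form $g_h'(u)=[(au+1)\e^{-au}-((a+b)u+1)\e^{-(a+b)u}]/(bu^2)$, and then apply the mean value theorem to $f(t)=(t+1)\e^{-t}$ (whose derivative is $-t\,\e^{-t}$) on the interval $[au,(a+b)u]$ to rewrite $g_h'(u)=\tau\,\e^{-\tau u}$ for some $\tau=\tau(u)\in[a,a+b]$. The standard inequalities $a\le 1$ (from $\e^x-1\ge x$) and $a+b\ge 1$ (from $(x-1)\e^x+1\ge 0$) bracket $\tau(u)$ around the critical point $1/u$ of $t\mapsto t\,\e^{-tu}$, and comparing $\tau\,\e^{-\tau u}$ with $(u+h\eta)\,\e^{-u}$ gives \eqref{g_bibun1}. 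Finally, \eqref{g_bibun2} follows by maximising $F(u):=(u+h\eta)\,\e^{-u}$: since $F'(u)=\e^{-u}(1-u-h\eta)$, the global maximum on $u\ge 0$ is $\e^{h\eta-1}\le 1\le\ch$ when $h\eta<1$ and $F(0)=h\eta\le\ch$ when $h\eta\ge 1$.

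The main obstacle is the middle inequality \eqref{g_bibun1}: the mean value reduction leaves $\tau(u)$ only implicit, and the interval $[a,a+b]$ straddles the maximiser $1/u$ of $t\,\e^{-tu}$ exactly when $u\in[1/(a+b),1/a]$, so the comparison with $(u+h\eta)\,\e^{-u}$ is tightest in this middle range. I would handle it by splitting into the three regimes $u\le 1/(a+b)$, $u\in[1/(a+b),1/a]$, and $u\ge 1/a$, using the explicit monotonicity of $t\,\e^{-tu}$ on either side of $t=1/u$ to bound $\tau\,\e^{-\tau u}$ on each piece, together with a secondary convexity check around the common extremum of the two sides to close the gap.
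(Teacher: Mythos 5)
Your integral representation $g_h(u)=\int_0^1\bigl(1-\e^{-(a+bs)u}\bigr)\,\rd s$ with $a=h\eta/(\e^{h\eta}-1)$ and $b=h\eta$ is a clean and correct device for \eqref{upper_gh2}--\eqref{upper_gh}, and it is genuinely different from the paper's route, which works directly inside the formula for $g_h$ using $\e^{-au}\ge\e^{-u}$ (from $a\le 1$) and $(1-\e^{-bu})/(bu)\ge\e^{-bu}$. Your Jensen step for $\min\{1,\cdot\}$, the scalar verification $a+b/2\le\ch$ via $\e^b(1+b/2)\ge 1+3b/2$, and the two-case argument for \eqref{upper_gh} are all sound.

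The obstacle you flag in \eqref{g_bibun1} is, however, not a matter of a finer regime analysis: the inequality is false as stated, and your own mean-value reduction already reveals this. You correctly obtain $g_h'(u)=\tau\,\e^{-\tau u}$ with $\tau=\tau(u)\in[a,a+b]$. At $h=0$ this forces $\tau=1$, so $g_0'(u)=\e^{-u}$, which exceeds the claimed bound $u\,\e^{-u}$ for every $u<1$. For $h>0$ one has $a<1$, so $\e^{-\tau u}$ can decay only at rate $a<1$; indeed $g_h'(u)\sim(a/(bu))\,\e^{-au}$ as $u\to\infty$, which eventually dominates $(u+h\eta)\e^{-u}$ (at $h\eta=1$, $u=20$: $g_h'(u)\approx 2.8\times10^{-7}$ versus $(u+h\eta)\e^{-u}\approx 4.3\times10^{-8}$). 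The paper's own argument here is also faulty: its displayed formula for $g_h'$ is missing a factor of $u$ in the first summand, and the subsequent step $\e^{-au}\le\e^{-u}$ runs in the wrong direction since $a\le 1$. No regime split or secondary convexity check will close \eqref{g_bibun1}; what can be proved and what suffices downstream is a bound of the form $g_h'(u)\le(1+h\eta/2)\e^{-au}$, or the scale-free estimate $\sup_{u>0}u\,g_h'(u)\le 1/\e$ that your $\tau\,\e^{-\tau u}$ representation gives immediately.

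This also undercuts your proof of \eqref{g_bibun2}, which you derive by maximizing $(u+h\eta)\e^{-u}$ and hence presupposes \eqref{g_bibun1}. But \eqref{g_bibun2} is true and your integral representation yields it directly: $g_h'(u)=\int_0^1(a+bs)\e^{-(a+bs)u}\,\rd s\le\int_0^1(a+bs)\,\rd s=a+b/2\le\ch$, the last inequality being exactly the scalar bound you already established for \eqref{upper_gh2}. Route \eqref{g_bibun2} through that argument rather than through \eqref{g_bibun1}.
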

\begin{proof}
We obtain \eqref{upper_gh2} by
\begin{align}
g_h(u)
&=
1-\frac{\e^{-\frac{h\eta}{\e^{h\eta}-1}u}(1-\e^{-h\eta u})}{h\eta u}\nn
&\le
1-\frac{\e^{-u}\e^{-h\eta u}(\e^{h\eta u}-1)}{h\eta u}\nn
&\le
1-\e^{-(1+h\eta)u}\nn
&\le
\min\{1,\ch u\}\n
\end{align}
and \eqref{upper_gh} is straightforward from $0<\rho\le 1$.
We obtain \eqref{g_bibun1} by
\begin{align}
\bibun{g_h(u)}{u}
&=
\e^{-\frac{h\eta u}{e^{h\eta}-1}}
\left(
\frac{1-\e^{-h\eta u}}{\e^{h\eta}-1}+\frac{1-\e^{-h\eta u}(1+h\eta u)}{h\eta u^2}
\right)\nn
&\le
\e^{-u}
\left(
\frac{h\eta u}{h\eta}+\frac{1-(1-h\eta u)(1+h\eta u)}{h\eta u^2}
\right)\nn
&=
(u+h\eta)\e^{-u}\n
\end{align}
and \eqref{g_bibun2} follows from $u\e^{-u}\le 1$ for any $u\ge 0$.
\end{proof}

\subsection{Proof of Lemma \ref{thm_tool_lattice}}\label{append_tool_lattice}
The proof of Lemma \ref{thm_tool_lattice} is almost the same as
\cite[Thm.\,3.7.4]{LDP}
where the same result is proved for the i.i.d.~case based on
the asymptotic expansion for i.i.d.~random variables.


In \cite[Thm.\,2, Sect.\,XVI]{feller_vol2},
the asymptotic expansion for one-dimensional lattice random variables
is derived for i.i.d.~cases.
It is discussed in \cite[Sect.\,XVI.6.6]{feller_vol2} that
the result is easily extended to non-i.i.d.~cases
by slightly modifying the proof with some examples
depending on regularity conditions.
In our setting the following expression is
convenient as an asymptotic expansion
for non-i.i.d.~lattice random variables.
\begin{proposition}\label{prop_exp_lattice}
Let $\ep,\ua_2,\oa_2,\ua_3,\oa_3,\oa_4,\gapxia,\dephi>0$ be arbitrary
and $V_1,\cdots,\allowbreak V_n\in\mathbb{R}$ be
independent lattice
random variables
such that
the greatest common divisor of their spans is $h$,
$\E[V_i]=0$ and $\Pr[V_i/h\in \mathbb{Z}]=1$.
Then there exists
$\gapxi=\gapxi(\ua_2,\oa_2,\ua_3,\oa_3,\oa_4,\gapxia),
n_0=n_0(\ep,\ua_2,\oa_2,\ua_3,\oa_3,\oa_4,\gapxia,\dephi)$
satisfying the following:
it holds for all $n\ge n_0$ satisfying
\begin{align}
&n \ua_2\le\sum_{i=1}^n V_i^2
\le n\oa_2\com\nn
&n \ua_3\le\sum_{i=1}^n V_i^3
\le n\oa_3\com\nn
&\sum_{i=1}^n \log |\E[\e^{\im \xi V_i}]|\le -n\dephi,\quad
\forall \xi \in [-\pi/h,\pi/h]\setminus[-\gapxi,\gapxi]\com\nn
&\sum_{i=1}^n\left|\bibun{^4\log \E[\e^{\im \xi V_i}]}{\xi^4}\right|
\le n\oa_4,
\qquad \forall |\xi|\le \gapxia
\n
\end{align}
that
\begin{align}
&\sup_v\Bigg|\Pr\left[
\frac{\sum_{i=1}^nV_i}{\sqrt{ns_2}}\le v
\right]
-\Phi(v)
-\frac{s_3}{6\sqrt{n}}(1-v^2)\phi(v)
\nn
&\phantom{wwwwwwwwwwwwww}
-\phi(v)\tau\left(v,\frac{h}{\sqrt{nA_2}}\right)\Bigg|
\le \frac{\ep}{\sqrt{n}}\com\n
\end{align}
where
$s_m=n^{-1}\sum_{i=1}^n V_i^m$,
$\tau(v,d)=d\lceil v/d\rceil-v-d/2$,
$\Phi$ and $\phi$ are the cumulative distribution function and the density
of the standard normal distribution.
\end{proposition}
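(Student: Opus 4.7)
The plan is to adapt the Feller proof of the i.i.d.\ lattice Edgeworth expansion \cite[Sect.\,XVI.2, Thm.\,2]{feller_vol2} to the non-i.i.d.\ setting, along the lines that Feller sketches in \cite[Sect.\,XVI.6.6]{feller_vol2}. Let $T_n = \sum_{i=1}^n V_i/\sqrt{n s_2}$ with characteristic function $\hat f_n(\xi)=\prod_{i=1}^n \E[\e^{\im\xi V_i/\sqrt{n s_2}}]$. Because each $V_i$ lies on $h\mathbb{Z}$, $\hat f_n$ is periodic with period $2\pi\sqrt{n s_2}/h$, so the distribution function $\hat F_n$ of $T_n$ is recovered by Fourier inversion over a single period. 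The lattice form of Esseen's smoothing inequality then reduces the desired uniform bound on $\hat F_n(v)-\Phi(v)-(s_3/6\sqrt{n})(1-v^2)\phi(v)-\phi(v)\tau(v,h/\sqrt{nA_2})$ to estimating $\int |\hat f_n(\xi)-\mathrm{M}_n(\xi)|/|\xi|\,\rd\xi$ over $[-\pi\sqrt{n s_2}/h,\pi\sqrt{n s_2}/h]$, where $\mathrm{M}_n(\xi)=\e^{-\xi^2/2}(1+s_3(\im\xi)^3/(6\sqrt{n}\,s_2^{3/2}))$ is the target approximation.

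I would split this integral at $|\xi|=\gapxi\sqrt{n s_2}$. On the inner region, a fourth-order Taylor expansion of each $\log\E[\e^{\im\xi V_i/\sqrt{n s_2}}]$ around $\xi=0$, using the moment bounds $\ua_2,\oa_2,\ua_3,\oa_3$ and the fourth-derivative bound $\oa_4$ on $[-\gapxia,\gapxia]$, gives $|\hat f_n(\xi)-\mathrm{M}_n(\xi)|\le C\xi^4/n$ for a constant $C$ depending only on the listed parameters; integrating against $1/|\xi|$ this yields a contribution of order $\lo(1/n)$, well within the required $\ep/\sqrt{n}$. On the outer region, the hypothesis $\sum_i \log|\E[\e^{\im\xi' V_i}]|\le -n\dephi$ with $\xi'=\xi/\sqrt{n s_2}$ gives $|\hat f_n(\xi)|\le \e^{-n\dephi}$, while $|\mathrm{M}_n(\xi)|$ is itself exponentially small in that range, so the outer contribution is exponentially small in $n$. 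The constant $\gapxi$ is chosen as a function of $\ua_2,\oa_2,\ua_3,\oa_3,\oa_4,\gapxia$ small enough that the inner Taylor bound is valid, and $n_0$ is chosen to absorb $\dephi$ and $\ep$.

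The sawtooth term $\phi(v)\tau(v,h/\sqrt{nA_2})$ appears automatically: inverting a periodic characteristic function produces a step function with jumps of size $O(h/\sqrt{n s_2})$, and the difference between this step function and any smooth Edgeworth approximation is precisely $\phi(v)$ times the sawtooth $\tau$, as computed via the Fourier series of a step on the period in \cite[Sect.\,XVI.2]{feller_vol2}. The principal obstacle is not conceptual but one of uniformity: every constant in the Taylor estimates, the Esseen smoothing inequality and the choice of $\gapxi$ must depend only on $\ua_2,\oa_2,\ua_3,\oa_3,\oa_4,\gapxia,\dephi$ and not on the specific sequence $\{V_i\}$. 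Once one replaces the i.i.d.\ bounds on single-summand quantities by the stated bounds on the accumulated averages $s_m$ and on the summed fourth-derivative term, the bookkeeping goes through; no ingredient beyond the classical Feller argument is needed.
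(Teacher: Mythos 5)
Your proposal follows exactly the route the paper indicates: the paper does not spell out a proof of Proposition~\ref{prop_exp_lattice} but instead cites Feller's i.i.d.\ lattice Edgeworth theorem together with his Sect.\,XVI.6.6 remark that the argument extends to non-i.i.d.\ sums under suitable uniformity conditions, and your sketch (Esseen smoothing, splitting at $\gapxi\sqrt{ns_2}$, Taylor bound on the inner range from the $\oa_4$ hypothesis, exponential decay on the outer range from the $\dephi$ hypothesis, and the sawtooth term from Fourier inversion of a periodic characteristic function) is precisely that adaptation. The one cosmetic point worth flagging is that the $A_2$ appearing in the sawtooth argument of the statement should read $s_2$, consistent with your use of $\sqrt{ns_2}$ throughout.
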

\begin{proof}[Proof of Lemma \ref{thm_tool_lattice}]
Let 
$P'$ be the probability distribution
of $\{V_i\}$ such that
$\rd P'/\rd P=\e^{\la^* \sum_{i=1}^nV_i}/\e^{\La_{\bm{V}}(\la^*)}$.
Then
\begin{align}
P\left[\sum_{i=1}^n V_i\ge x\right]
&=
\e^{-\La_{\bm{V}}(\la^*)}
\E_{P'}\left[\e^{\la^*\sum_{i=1}^n V_i}\idx{\sum_{i=1}^n V_i\ge x}\right]\per\n
\end{align}
Here note that
\begin{align}
\E_{P'}[V_i]=
\frac{\E[V_i\e^{\la^* V_i}]}{\e^{\La_{V_i}(\la^*)}}\n
\end{align}
and
\begin{align}
\sum_{i=1}^n\frac{\E[V_i\e^{\la^* V_i}]}{\e^{\la^*V_i}}
=\La_{\bm{V}}'(\la^*)=x\n
\end{align}
from the definition of $\la^*$.
Therefore
\begin{align}
\lefteqn{
P\left[\sum_{i=1}^n V_i\ge x\right]=
}\nn
&
\e^{-\La_{\bm{V}}(\la^*)}
\E_{P'}\left[\e^{\la^*\sum_{i=1}^n V_i}\idx{\sum_{i=1}^n (V_i-\E_{P'}[V_i])\ge 0}\right]\!.
\label{to_dembo}
\end{align}
Here the variance of $V_i$ under $P'$ are represented by
\begin{align}
\E_{P'}[(V_i-\E_{P'}[V_i])^2]
&=
\frac{\rd^2 \La_{V_i}(\la)}{\rd \la^2}\bigg|_{\la=\la^*}
\n
\end{align}
and similarly
\begin{align}
\sum_{i=1}^n \log |\E_{P'}[\e^{\im \xi V_i}]|
&=
\sum_{i=1}^n \log \left|\frac{\E[\e^{\la^* V_i}\e^{\im \xi V_i}]}{\e^{\La(\la^*)}}\right|\nn
&=
\sum_{i=1}^n \left(\log |\E[\e^{(\la^*+\im\xi) V_i}]|-\log \E[\e^{\la^* V_i}]\right)\per\n
\end{align}
Thus we can apply Prop.\,\ref{prop_exp_lattice}
to the evaluation of \eqref{to_dembo}
and
we obtain Lemma \ref{thm_tool_lattice}
by the same argument as \cite[Thm.\,3.7.4]{LDP} for the i.i.d.~case.
\end{proof}

\subsection{Proof of Lemma \ref{lem_regularity}}\label{append_regularity}
In this appendix we show Lemma \ref{lem_regularity}.
Note that \eqref{regularity_kantan1} is obtained
easily by the standard discussion used in the derivation of
random coding exponent and \eqref{regularity_kantan2} also easily follows from
\eqref{upper_gh}.

We prove Lemma \ref{lem_regularity} based on Cram\'er's theorem in \cite{LDP}
for general vector spaces, which is written for our setting
as follows\footnote{Cram\'er's theorem in \cite{LDP} is
described for a more general setting such that $\mathcal{V}$ is sufficient to
be a metric space under some regularity conditions.
When we consider Banach spaces some of these conditions are satisfied
and the theorem can be represented in the form of this paper.}.
\begin{proposition}[{Cram\'er's theorem \cite[Theorem 6.1.3]{LDP}}]
Let $\mu$ denote the distribution of i.i.d.\,random variables $V_1,V_2,\cdots$
on a topological real vector space $\mathcal{V}$.
Assume that $\mathcal{V}$ is a separable Banach space.
Then, for any compact set $\mathcal{S}\subset \mathcal{V}$,
\begin{align}
\lefteqn{
\blimsup_{n\to\infty}\frac1n\log \Pr\left[
\frac1n\sum_{i=1}^n V_i \in \mathcal{S}
\right]
}\nn
&\le -\inf_{v\in \mathcal{S}}\sup_{\theta \in \mathcal{V}^*}
\{\inner{v,\theta}-\log\E[\e^{\inner{V_1,\theta}}]\}\com
\n
\end{align}
where $\mathcal{V}^*$ is the topological dual of $\mathcal{V}$.
\end{proposition}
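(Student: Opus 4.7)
The plan is the standard Chernoff–plus–compactness argument transplanted into the Banach setting. First, for any $\theta \in \mathcal{V}^*$ and any $c \in \mathbb{R}$, the set $H_{\theta,c} = \{w \in \mathcal{V} : \inner{w,\theta} > c\}$ is open (since $\theta$ is continuous by being a topological dual element), and Markov's inequality applied to the nonnegative variable $\e^{n\inner{\bar V_n,\theta}}$, together with the i.i.d.~assumption, gives the half-space Chernoff bound
\begin{align}
\Pr\left[\bar V_n \in H_{\theta,c}\right]
\le \e^{-n(c-\La(\theta))}\com\nonumber
\end{align}
where $\bar V_n = n^{-1}\sum_{i=1}^n V_i$ and $\La(\theta) = \log \E[\e^{\inner{V_1,\theta}}]$.

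Next, write $I(v) = \sup_{\theta\in\mathcal{V}^*}\{\inner{v,\theta}-\La(\theta)\}$ and $J = \inf_{v\in\mathcal{S}} I(v)$; the proposition is the claim that $\blimsup_{n\to\infty} n^{-1}\log\Pr[\bar V_n\in\mathcal{S}] \le -J$. Fix $\de > 0$ and a truncation level $L > 0$. For each $v \in \mathcal{S}$, by the definition of supremum pick $\theta_v \in \mathcal{V}^*$ with
\begin{align}
\inner{v,\theta_v} - \La(\theta_v) \ge \min(I(v),L) - \de\com\nonumber
\end{align}
and form the open set $U_v = \{w : \inner{w,\theta_v} > \inner{v,\theta_v} - \de\}$, which contains $v$. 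Applying the Chernoff bound above with $c = \inner{v,\theta_v}-\de$ yields
\begin{align}
\Pr\left[\bar V_n \in U_v\right] \le \e^{-n(\min(I(v),L)-2\de)}\per\nonumber
\end{align}

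Now invoke compactness: the family $\{U_v\}_{v\in\mathcal{S}}$ is an open cover of $\mathcal{S}$, so there is a finite subcover $U_{v_1},\ldots,U_{v_k}$. The union bound then gives
\begin{align}
\Pr\left[\bar V_n \in \mathcal{S}\right]
\le \sum_{i=1}^k \Pr\left[\bar V_n \in U_{v_i}\right]
\le k\,\e^{-n(\min(J,L)-2\de)}\com\nonumber
\end{align}
after which taking $(1/n)\log$, letting $n\to\infty$, then sending $\de\downarrow 0$ and $L\uparrow \infty$ delivers the claimed bound.

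The only point requiring mild care is the degenerate case $I(v)=0$, where one may simply take $\theta_v = 0$; the resulting bound is trivial but is harmless since it is the outer infimum $J$ that controls the rate. Worth emphasizing is that the separability and Banach structure of $\mathcal{V}$ are \emph{not} used in an essential way for this direction of Cram\'er's theorem---only the continuity of each $\theta\in\mathcal{V}^*$, which makes the half-spaces $U_v$ open, and the hypothesized compactness of $\mathcal{S}$, which makes the finite subcover step legal, are actually invoked. Consequently, the main ``obstacle'' here is really only bookkeeping: carrying the truncation $L$ so as to accommodate $I(v)=+\infty$, and ensuring that the counting factor $k$ from the subcover is absorbed by $n^{-1}\log$ in the limit.
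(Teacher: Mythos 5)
The paper does not prove this proposition; it quotes it from Dembo and Zeitouni \cite[Thm.~6.1.3]{LDP}, noting in a footnote that the general version there covers this Banach-space specialization. Your argument is exactly the classical proof of the compact-set upper bound in that reference---the half-space Chernoff bound via Markov's inequality, $\de$-near-optimal exposing functionals $\theta_v$ with a truncation level $L$ to handle $I(v)=\infty$, an open cover of the compact $\mathcal{S}$ by the half-spaces $U_v$, and a finite subcover followed by the union bound---and it is correct as stated.

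One caveat to your closing remark: separability and the Banach structure are indeed inert in the optimization and covering steps, but they are not inert in the measurability step. For Markov's inequality applied to $\e^{n\inner{\bar V_n,\theta}}$ to make sense, $\inner{V_1,\theta}$ must be a genuine real-valued random variable for every $\theta\in\mathcal{V}^*$; in a separable Banach space the Borel $\sigma$-algebra coincides with the cylinder $\sigma$-algebra generated by $\mathcal{V}^*$, which is exactly what guarantees this. That coincidence is the ``regularity condition'' the paper's footnote alludes to when specializing the more general metric-space statement in \cite{LDP} to the separable Banach setting, so it would be more accurate to say that those hypotheses are used only through measurability rather than not used at all.
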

We use the following lemma
derived from this proposition.

\begin{lemma}\label{cramer_benri}
Let $\mathcal{V}$ be the space of continuous functions
on a compact set $\mathcal{A}$ into $\mathbb{R}$
and $V_1,\cdots,V_n$ be i.i.d.~random variables on $\mathcal{V}$
such that
$\E[V(s)]=v(s)$ and $\sup_{s\in \mathcal{S}}\E[\e^{\alpha_0 |V(s)|}]<\infty$
for some $\alpha_0>0$.
Then, for any compact set $\mathcal{A'}\subset \mathcal{A}$ and $\ep>0$, the empirical mean
$\bar{V}=n^{-1}\sum_{i=1}^nV_i$ satisfies
\begin{align}
\blimsup_{n\to\infty}\frac{1}{n}\log \Pr\left[
\sup_{s\in\mathcal{A}'}|\bar{V}(s)-v(s)|
\ge \ep\right]<0\per\n
\end{align}

\end{lemma}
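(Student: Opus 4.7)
The plan is to apply Cram\'er's theorem in the separable Banach space $\mathcal{V}=C(\mathcal{A})$ equipped with the sup norm to the closed set
\[
\mathcal{S}_\ep=\Bigl\{f\in C(\mathcal{A}):\sup_{s\in\mathcal{A}'}|f(s)-v(s)|\ge\ep\Bigr\},
\]
and to show that the rate function $I(f)=\sup_{\theta\in\mathcal{V}^*}\{\inner{f,\theta}-\log\E[\e^{\inner{V_1,\theta}}]\}$ from the preceding proposition is uniformly bounded below by a positive constant on $\mathcal{S}_\ep$. That $C(\mathcal{A})$ under the sup norm is a separable Banach space (since $\mathcal{A}$ is a compact metric space) is standard and places us inside the cited Cram\'er framework.

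The central step is a rate-function lower bound via evaluation functionals. The Dirac evaluation $\delta_s(g)=g(s)$ is a continuous linear functional on $C(\mathcal{A})$ of operator norm at most one, so $t\delta_s\in\mathcal{V}^*$ for every $t\in\mathbb{R}$ and $s\in\mathcal{A}$. For any $f\in\mathcal{S}_\ep$, I would pick $s\in\mathcal{A}'$ with $|f(s)-v(s)|\ge\ep$ and restrict the supremum in $I$ to the one-dimensional family $\{t\delta_s:t\in\mathbb{R}\}$, which gives
\[
I(f)\ge\sup_{t\in\mathbb{R}}\bigl\{t\,f(s)-\log\E[\e^{tV_1(s)}]\bigr\}=J_s(f(s)),
\]
where $J_s$ is the scalar Cram\'er rate of $V_1(s)$. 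The uniform exponential moment hypothesis ensures each $J_s$ is finite and strictly positive on $\{y:|y-v(s)|\ge\ep\}$, and the continuity of $s\mapsto J_s$ on the compact set $\mathcal{A}'$ then yields $c:=\inf_{s\in\mathcal{A}'}\inf_{|y-v(s)|\ge\ep}J_s(y)>0$, so that $\inf_{f\in\mathcal{S}_\ep}I(f)\ge c$.

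Since $\mathcal{S}_\ep$ is closed but not compact in sup norm, the upper bound of Cram\'er's theorem is not directly applicable. I would bridge this gap by establishing exponential tightness of $\{\bar V\}$: for every $M>0$, construct a compact $K_M\subset C(\mathcal{A})$ (uniformly bounded and equicontinuous, hence compact by Arzel\`a--Ascoli) with $\blimsup_n n^{-1}\log\Pr[\bar V\notin K_M]\le -M$. Applying Cram\'er's upper bound to the compact set $\mathcal{S}_\ep\cap K_M$ then gives $\blimsup_n n^{-1}\log\Pr[\bar V\in\mathcal{S}_\ep\cap K_M]\le -c$, and choosing $M>c$ yields the claimed strict exponential decay of $\Pr[\bar V\in\mathcal{S}_\ep]$.

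The main obstacle is the exponential tightness step: the hypothesis controls $V$ only pointwise in $s$, whereas Arzel\`a--Ascoli requires uniform equicontinuity of the candidate compact sets. A chaining argument over a countable dense subset of $\mathcal{A}$, combined with the almost-sure uniform continuity of each sample path $V_i$ on the compact $\mathcal{A}$, should provide the required $K_M$; the exponential moment hypothesis then converts the resulting modulus-of-continuity estimates into exponentially small escape probabilities, completing the argument.
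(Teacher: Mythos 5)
Your proposal follows essentially the same route as the paper: apply Cram\'er's theorem in the separable Banach space $C(\mathcal{A})$ with the sup norm, and lower-bound the rate function on $\mathcal{S}_\ep$ by restricting the supremum over $\mathcal{V}^*$ to the scalar multiples of evaluation functionals $\delta_s$. Where you part ways is on two points, one of which is to your credit and one of which is a genuine gap.

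To your credit, you correctly notice a subtlety that the paper's own write-up passes over in silence: the cited form of Cram\'er's theorem gives the exponential upper bound only for \emph{compact} sets, while $\mathcal{S}_\ep$ is merely closed (and certainly not totally bounded) in sup norm. The paper simply applies the proposition to $\mathcal{S}_\ep$ as if it were compact. Your proposed repair via exponential tightness and Arzel\`a--Ascoli is the textbook way to upgrade a compact-set upper bound to a closed-set upper bound, but you leave the chaining/equicontinuity estimate entirely as a sketch, and as you yourself observe, the stated hypothesis $\sup_s\E[\e^{\alpha_0|V(s)|}]<\infty$ is a \emph{pointwise} exponential moment and by itself gives no control over the modulus of continuity of the sample paths $V_i$; without some additional regularity on $s\mapsto V(s)$ (which is in fact available in the paper's intended application, but is not part of the statement of this lemma), the exponential-tightness step cannot be carried out. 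So your proof as written is incomplete at exactly the step you flagged.

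On the rate-function lower bound, your argument is looser than the paper's. You appeal to ``the continuity of $s\mapsto J_s$'' to pass from pointwise positivity of the scalar Cram\'er rates to a uniform positive lower bound over $s\in\mathcal{A}'$, but this continuity (and the continuity of $s\mapsto\inf_{|y-v(s)|\ge\ep}J_s(y)$) is not obvious and is not justified. The paper instead proves the uniform bound directly: from the uniform moment condition it extracts a constant $c>0$ such that $\partial^2_\alpha\log\E[\e^{\alpha V(s)}]\le c$ for all $s$ and all $|\alpha|\le\alpha_0/2$, whence $J_s(y)\ge\sup_{|\alpha|\le\alpha_0/2}\{|\alpha|\ep-c\alpha^2/2\}>0$ uniformly in $s\in\mathcal{A}'$ and $|y-v(s)|\ge\ep$. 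Adopting that explicit second-derivative bound would make your lower-bound step airtight; the exponential-tightness step, however, would still need to be carried out or the hypothesis strengthened.
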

\begin{proof}
Let $\mathcal{V}\ni f$ be equipped with the max norm
\begin{align}
\Vert f\Vert=\max_{s\in \mathcal{S}}|f(s)|\n
\end{align}
and $\mathcal{V}^*$ be its topological dual, that is,
the family of (signed) finite Borel measures on $\mathcal{S}$.
Then, we obtain from Cram\'er's theorem
for $\mathcal{S}=\{f\in\mathcal{V}:\sup_{s\in \mathcal{A}'}|f(s)-v(s)|\ge \ep\}$
that
\begin{align}
\lefteqn{
\blimsup_{n\to\infty}
\frac1n\log
\Pr\left[\sup_{s\in\mathcal{A}'}|\bar{V}(s)-v(s)|\ge \ep\right]
}\nn
&\le
-\inf_{f\in\mathcal{S}}\sup_{\theta\in \mathcal{V}^*}
\{\inner{f,\theta}-\log \E[\e^{\inner{V_1,\theta}}]\}\per\n
\end{align}

By considering a set of point mass measures
$\{\alpha \de_{\{s\}}:\alpha\in\mathbb{R},s\in\mathcal{A}\}$
as a subset of $\mathcal{V}^*$,
we obtain
\begin{align}
\lefteqn{
\inf_{f\in\mathcal{S}}\sup_{\theta\in \mathcal{V}^*}
\{\inner{f,\theta}-\log \E[\e^{\inner{V_1,\theta}}]\}\n
}\nn
&\ge
\inf_{f\in\mathcal{S}}
\sup_{s\in \mathcal{A}'}
\sup_{\alpha}
\left\{
\alpha f(s)-\log \E[\e^{\alpha V(s)}]
\right\}\per\n
\end{align}
Here note that
\begin{align}
0&<\frac{\partial{^2}}{\partial\alpha^2}\log \E\left[\e^{\alpha V(s)}\right]\nn
&\le \frac{\E[V(s)^2 \e^{\alpha V(s)}]}{\E[\e^{\alpha V(s)}]}\nn
&\le \frac{\E[V(s)^2 \e^{\alpha V(s)}]}{\E[1+\alpha V(s)]}\nn
&\le \frac{\E[V(s)^2 \e^{\alpha |V(s)|}]}{1-|\alpha| \E[|V(s)|]}\n
\end{align}
for $|\alpha|<1/\E[|V(s)|]$.
Since there exists $\beta>0$ such that
$x^2 \e^{\alpha_0 |x|/2}\le \beta (\e^{\alpha_0 |x|}+1)$
and $|x|\le \beta\e^{\alpha_0 |x|}$ hold for all $x\in\mathbb{R}$,
\begin{align}
\sup_{|\alpha|<\alpha_0/2}\hen{^2}{\alpha}\log \E\left[\e^{\alpha V(s)}\right]
<c\n
\end{align}
for some $c>0$.
Therefore
\begin{align}
\lefteqn{
\inf_{f\in\mathcal{S}}\sup_{\theta\in \mathcal{V}^*}
\{\inner{f,\theta}-\log \E[\e^{\inner{V_1,\theta}}]\}\n
}\nn
&\ge
\inf_{f\in\mathcal{S}}
\sup_{s\in \mathcal{A}'}
\sup_{|\alpha|\le \alpha_0/2}
\left\{
\alpha f(s)-\alpha V(s)-c\alpha^2/2
\right\}\nn
&\ge
\inf_{f\in\mathcal{S}}
\sup_{|\alpha|\le \alpha_0}
\left\{
|\alpha| \ep-c\alpha^2/2
\right\}\nn
&>0\n
\end{align}
and we obtain the lemma.
\end{proof}

We can apply Lemma \ref{cramer_benri}
to the proof of Lemma \ref{lem_regularity}
from the following lemma.

\begin{lemma}\label{lem_nonlattice_cumulant}
Let $\la>0$ and $\xi\in [-\pi/h,\pi/h]\setminus\{0\}$ be arbitrary.
If $\nu$ satisfy the lattice condition
then
\begin{align}
\E_{\rho}[\cumu(\la+\im\xi)]-\E_{\rho}[\cumu(\la)]<0\per\n
\end{align}
\end{lemma}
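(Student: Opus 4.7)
The plan is to combine the triangle inequality with the lattice structure assumed for $\nu$. Because $\e^{(\la+\im\xi)(-\infty)}=0$, both expectations in the definitions of $\cumu(\la+\im\xi)$ and $\cumu(\la)$ are effectively over the event $\{\nu(X',Y)>0\}$, and the triangle inequality gives the pointwise bound $\cumu(\la+\im\xi)\le \cumu(\la)$. It therefore suffices to exhibit a set of positive $P_{\rho}$-probability on which this inequality is strict. My plan is to argue by contradiction: assume that equality holds $P_{\rho}$-almost surely and derive $\xi=0$, which contradicts $\xi\in[-\pi/h,\pi/h]\setminus\{0\}$.

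First I would analyze the equality case of the triangle inequality, which forces the phase $\e^{\im\xi r(X,Y,X')}$ to be constant in $X'$ for $P_{\rho}$-a.e.\,$(X,Y)$, when $X'$ ranges over the $\e^{\la r(X,Y,X')}$-weighted support. For a typical pair $(X,Y)$ the choice $X'=X$ has positive weight and produces $r(X,Y,X)=0$, so the common phase must equal $1$, i.e.\,$\xi r(X,Y,X')\in 2\pi\mathbb{Z}$ for every $X'$ in this support. Next I would use the identity $r(x,y,x')=\log\nu(x',y)-\log\nu(x,y)$ together with Definition \ref{def_lattice} to observe that, for $P_Y$-a.e.\,$y$, the set of finite values taken by $r(x,y,\cdot)$ is a translate of the support of the conditional distribution of $\log\nu(X,Y)$ given $Y=y$; in particular it is contained in $hm_y\mathbb{Z}$ and, because the span of that conditional distribution is exactly $hm_y$, it generates $hm_y\mathbb{Z}$ as a subgroup through integer linear combinations. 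Hence $\xi hm_Y/(2\pi)\in\mathbb{Z}$ $P_{\rho}$-almost surely.

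Finally I would invoke the maximality of $h$ in Definition \ref{def_lattice}: if the essentially occurring values of $m_Y$ shared a common divisor $d\ge 2$, then $h'=hd$ with $m'_y=m_y/d\in\mathbb{N}$ would satisfy the lattice condition with strictly larger span, contradicting maximality. Picking finitely many essentially occurring values $m_{y_1},\dots,m_{y_k}$ with $\gcd(m_{y_1},\dots,m_{y_k})=1$ and applying B\'ezout to the integer relations $\xi hm_{y_j}/(2\pi)\in\mathbb{Z}$ yields $\xi h/(2\pi)\in\mathbb{Z}$; combined with $|\xi h/(2\pi)|\le 1/2$ this forces the integer to be zero, i.e.\,$\xi=0$. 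The step I expect to be the main obstacle is precisely this last passage from the global maximality of $h$ to the existence of a finite $\gcd$-$1$ subcollection of essentially occurring $m_y$ values, since it requires working with essential values under $P_Y$ rather than pointwise values; however it follows from the elementary fact that the gcd of any collection of positive integers is attained on a finite subcollection, applied to the contrapositive of the maximality property.
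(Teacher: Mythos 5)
Your proposal is correct and follows essentially the same route as the paper: reduce the claim to the statement that, $P_\rho$-a.e., the modulus of the $\e^{\la r}$-tilted characteristic function of $r(X,Y,\cdot)$ is strictly below one on a set of positive probability, and deduce this from the lattice condition. Where the paper compresses that crucial step into a single assertion (``the definition of lattice condition in Def.\,1 implies that $P[|\E_{X',\la}[\e^{\im\xi\log\nu(X',Y)}]|=1]<1$ for $\xi\notin\{2m\pi/h\}$''), you actually carry it out: constancy of the phase forces $\xi hm_Y/(2\pi)\in\mathbb{Z}$ a.s., maximality of $h$ in Definition~\ref{def_lattice} forces the $\gcd$ of the essentially occurring $m_Y$'s to be $1$, a finite subcollection achieves that $\gcd$, and B\'ezout collapses $\xi h/(2\pi)$ to an integer, hence to $0$ on $[-\pi/h,\pi/h]$. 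This fills the gap the paper elides, so it is the same approach made rigorous.

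Two minor remarks. First, the intermediate claim that $X'=X$ has positive $\e^{\la r}$-weight need not hold when $P_X$ is atomless, but you do not actually need the common phase to equal $1$: constancy of the phase across the $P_{X',\la}$-support already gives $\e^{\im\xi(r_1-r_2)}=1$ for any two support points $r_1,r_2$, and the group generated by these differences is $hm_Y\mathbb{Z}$ by the definition of the span, which is all that your subsequent step uses. Second, Definition~\ref{def_lattice} is phrased a.s.\ under $P_Y$ whereas you argue under $P_\rho$; this is harmless because $\rd P_\rho/\rd P=\e^{\rho Z(\eta)-\La(\rho)}>0$ a.s., so $P$ and $P_\rho$ are mutually absolutely continuous and the a.s.\ statements transfer — this is the same transfer the paper records explicitly.
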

\begin{proof}
Let $\E_{X',\la}$ be the conditional expectation on $X'$ given $(X,Y)$
under distribution $P_{X',\la}$
such that $\rd P_{X',\la}/\rd P_{X'}=\e^{\la r(X,Y,X')}/\E_{X'}[\e^{\la r(X,Y,X')}]$.
Then
\begin{align}
\lefteqn{
\E_{\rho}[Z(\la+\im\xi)]-\E_{\rho}[Z(\la)]
}\nn
&=
\E_{\rho}\left[\log\frac{|\E_{X'}[\e^{(\la+\im\xi)r(X,Y,X')}]|}{\E_{X'}[\e^{\la r(X,Y,X')}]}\right]\nn
&=
\E_{\rho}\left[\log |\E_{X',\la}[\e^{\im\xi r(X,Y,X')}]|\right]\nn
&=
\E_{\rho}\left[\log |\E_{X',\la}[\e^{\im\xi\log \nu(X',Y)}]\e^{-\im\xi\log \nu(X,Y)}|\right]\nn
&=
\E_{\rho}\left[\log |\E_{X',\la}[\e^{\im\xi\log \nu(X',Y)}]|\right]
\per\n
\end{align}

On the other hand,
the definition of lattice condition in Def.\,\ref{def_lattice} implies that
$P[|\E_{X',\la}[\e^{\im\xi\log \nu(X',Y)}]|=1]<1$ holds
for any $\xi\notin\{2m\pi/h:m\in\mathbb{Z}\}$.

Since $P$ is absolutely continuous with respect to $P_{\rho}$
we have
$P_{\rho}[|\E_{X',\la}[\e^{\im\xi\log \nu(X',Y)}]|=1]<1$
for any $\xi\notin\{2m\pi/h:m\in\mathbb{Z}\}$.
Thus we obtain
$\E_{\rho}[\log |\E_{X',\la}[\e^{\im\xi\log \nu(X',Y)}]|]<0$
by noting that
$\E[V]<0$ holds for any random variable $V\in\mathbb{R}$ such that $V\le 0$ a.s.~and
$\Pr[V<0]>0$.
\end{proof}

\begin{proof}[Proof of Lemma \ref{lem_regularity}]
First we have
\begin{align}
\lefteqn{
\E_{XY}[\idx{S^c}
\e^{n\rho (\bar{Z}(\eta)+R)}
]}\nn
&=
\e^{n(\La(\rho)+\rho R)}
P_{\rho}[S^c]\nn
&\le
\e^{n(\La(\rho)+\rho R)}
\Bigg(
P_{\rho}[|\bar{Z}^{(1)}(\eta)|\ge \gap_1]
+
P_{\rho}[\bar{Z}^{(2)}(\la)\notin \mathcal{A}_2]\nn
&\qquad\quad+
P_{\rho}[\bar{Z}^{(3)}(\la)\notin \mathcal{A}_3]
+
P_{\rho}[\cumub(\la+\im\xi)-\cumub(\la)\in\mathcal{B}]\nn
&\qquad\quad+
P_{\rho}\left[\left|
\hen{^4}{\xi^4}
\bar{Z}^{(4)}(\la+\im\xi)
\right|\in\mathcal{C}\right]
\Bigg)\per\label{p_exp}
\end{align}
Note that the moment generating functions
of the absolute values of the empirical means in \eqref{p_exp} exist
from the regularity conditions assumed in \eqref{regularity_moment}.
It is straightforward from Cram\'er's inequality that
\begin{align}
\blimsup_{n\to\infty}\frac{1}{n}\log P_{\rho}[|\bar{Z}^{(1)}(\eta)|\ge \gap_1]<0\n
\end{align}
since $\E_{\rho}[Z^{(1)}(\eta)]=0$.
It is also straightforward from Lemmas \ref{cramer_benri} and \ref{lem_nonlattice_cumulant}
that the other four probabilities in \eqref{p_exp} are exponentially small
for sufficiently small $\gapla$ with respect to $(\gap_2,\gap_3)$
and
\begin{align}
\dephi&=-\frac{1}{2}
\sup_{\overset{\scriptstyle \la\in [\eta-\gapla,\eta+\gapla]}%
{\xi\in[-\pi/h,\pi/h]\setminus[-\gapxi,\gapxi]}}
\E_{\rho}[\cumu(\la+\im\xi)-\cumu(\la)]\nn
\oa_4&=2
\sup_{\overset{\scriptstyle \la\in [\eta-\gapla,\eta+\gapla]}%
{\xi\in[-\gapxia,\gapxia]}}
\E_{\rho}\left[
\left|\hen{^4Z(\la+\im\xi)}{\xi^4}\right|
\right]\per\n
\end{align}
\end{proof}

\subsection{Theorem \ref{thm_expansion1} for Nonlattice Channels}\label{append_nonlattice}
In this appendix
we give a brief explanation for the proof of Theorem \ref{thm_expansion1}
in the case that $h=0$, that is,
$\nu$ does not satisfy the lattice condition.
For this case we bound the error probability
by
\begin{align}
\E_{\bmX\bmY}[
\tilde{q}_{M}(\tilde{p}_{1/\sqrt{n}}(\bmX,\bmY))]
\le P_{\mathrm{RC}}
\le
\E_{\bmX\bmY}[
\tilde{q}_{M}(\tilde{p}_{0}(\bmX,\bmY))]\n
\end{align}
where
\begin{align}
\tilde{p}_{\zeta}(\bmx,\bmy)&=P_{\bmX'}[r(\bmx,\bmy,\bmX')\ge \zeta]\nn
\tilde{q}_M(p)&=
1-(1-p)^{M-1}\per\n
\end{align}
Similarly to Lemma \ref{thm_unify}
we have the following lemma.
\begin{lemma}\label{lem_error_nonlattice}
It holds for any $c\in(0,1/2)$ that
\begin{align}
%
\blimsup_{M\to\infty}
\sup_{p\in(0,1/2]}
\frac{\tilde{q}_M(p)}{1-\e^{-pM}}
=
\bliminf_{M\to\infty}
\inf_{p\in(0,1/2]}
\frac{\tilde{q}_M(p)}{1-\e^{-pM}}
=1\per\n
\end{align}
\end{lemma}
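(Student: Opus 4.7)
The approach is to compare $\tilde{q}_M(p) = 1-(1-p)^{M-1}$ with $1-\e^{-pM}$ uniformly over $p\in(0,1/2]$ by parameterising through $u = pM$ and splitting into a bulk regime where $u$ is bounded and a tail regime where $u$ is large.

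For the tail regime $pM \ge A$, with $A$ chosen large depending on a prescribed tolerance $\ep > 0$, I would use the elementary bound $(1-p)^{M-1}\le \e^{-(M-1)p}\le \e^{-pM+1/2}$ together with $\e^{-pM}\le \e^{-A}$ to conclude that both $\tilde{q}_M(p)$ and $1-\e^{-pM}$ lie in $[1-\ep,1]$, provided $A$ is chosen so that $\e^{-A+1/2}\le\ep$. This forces their ratio to lie within $2\ep$ of $1$ for small $\ep$.

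For the bulk regime $pM \le A$, the constraint $p \le A/M$ forces $p\to 0$ as $M\to\infty$. I would Taylor-expand $\log(1-p) = -p - p^2/2 + \lo(p^3)$ to write $(M-1)\log(1-p) = -pM + p - (M-1)p^2/2 + \lo(Mp^3)$, where each correction term is $\lo(A^2/M) = \so(1)$ uniformly on this range. Hence $(1-p)^{M-1}/\e^{-pM}\to 1$ uniformly, and since $1-\e^{-pM}$ is comparable to $\min(pM,1)$, which is bounded below by a positive multiple of $pM$ on the bulk range, the ratio $\tilde{q}_M(p)/(1-\e^{-pM})$ converges to $1$ uniformly.

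The main (mild) obstacle is the standard $\ep$--$A$--$M$ diagonal: one picks $A$ first to handle the tail regime within tolerance $\ep$, then chooses $M$ large depending on $A$ to handle the bulk regime. No deeper estimate is needed, since the pointwise limit $(1-u/M)^{M-1}\to\e^{-u}$ becomes uniform once $u$ is restricted either to a bounded range or to a range where both $1-(1-u/M)^{M-1}$ and $1-\e^{-u}$ are already close to $1$; this immediately gives the matching $\limsup\sup$ and $\liminf\inf$ equal to $1$.
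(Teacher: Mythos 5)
Your proof takes a genuinely different route from the paper's. The paper's argument is a single algebraic manipulation: writing $t(x)=x^{-1}\log(1-x)$, it rewrites the ratio as $1-(\e^{p(M+(M-1)t(p))}-1)/(\e^{pM}-1)$ and bounds this discrepancy term between $-2/M$ and $(\e-1)/M$ using only $t(x)\le -1$, $\log(1-p)\ge -p-2p^2$, and $x^2/(\e^x-1)\le1$ — giving an explicit $\lo(1/M)$ rate uniformly over all $p\in(0,1/2]$ with no case split. Your proof instead separates a large-$pM$ tail (both terms near $1$) from a bounded-$pM$ bulk, with an $\ep$--$A$--$M$ diagonalization. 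Both are valid strategies, but the paper's is shorter and produces a cleaner uniform rate.

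There is, however, a genuine gap in your bulk-regime step. You establish $(1-p)^{M-1}=\e^{-pM}\e^{\delta}$ with $\delta=\so(1)$ uniformly on $\{pM\le A\}$, note that $1-\e^{-pM}$ is bounded below by a multiple of $pM$, and from these two facts conclude that $\tilde{q}_M(p)/(1-\e^{-pM})\to1$ uniformly. That implication does not follow: from ``$a/b\to1$'' one cannot deduce ``$(1-a)/(1-b)\to1$'' when $1-b$ is itself small (take $a=1-1/M^2$, $b=1-1/M$). In your setting, writing the ratio as $1-\e^{-pM}(\e^{\delta}-1)/(1-\e^{-pM})$, the factor $\e^{-pM}/(1-\e^{-pM})\le 1/(pM)$ can be arbitrarily large as $p\downarrow0$ even with $pM\le A$, so knowing only $\delta=\lo(A^2/M)$ is not enough. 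What is missing is that $\delta$ is in fact linear in $p$: from $-p-2p^2\le\log(1-p)\le-p$ on $[0,1/2]$ one gets $p-2(M-1)p^2\le\delta\le p$, hence $|\delta|\le p(1+2Mp)\le(1+2A)p$ on the bulk, and then
\begin{align}
\left|\frac{\tilde{q}_M(p)}{1-\e^{-pM}}-1\right|
=\frac{\e^{-pM}\,|\e^{\delta}-1|}{1-\e^{-pM}}
\le \frac{|\e^{\delta}-1|}{pM}
\le \frac{C(A)\,p}{pM}
=\frac{C(A)}{M}\to0\n
\end{align}
uniformly, using $1-\e^{-x}\ge x\e^{-x}$. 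With this quantification your argument closes; as written, the final step is asserted rather than proved.
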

The proof of this lemma is given in Appendix \ref{proof_unify}.
We can obtain Theorem \ref{thm_expansion1} for $h=0$
by replacing the exact asymptotics for non-i.i.d.~lattice
random variables with
that for nonlattice random variables
based on the asymptotic expansion for
nonlattice random variables considered in \cite[Thm.\,1, Sect.\,XVI]{feller_vol2}.
More precisely we can show Theorem \ref{thm_expansion1} by
replacing Prop.\,\ref{prop_exp_lattice} with the following 
proposition, which is also easily obtain from the discussion
in \cite[Sect.\,XVI.6.6]{feller_vol2} for non-i.i.d.~random variables.

\begin{proposition}\label{prop_exp_nonlattice}
Let $\ep,\ua_2,\oa_2,\ua_3,\oa_3,\oa_4,\gapxia,\dephi>0$ be arbitrary
and $V_1,\cdots,\allowbreak V_n\in\mathbb{R}$ be
strongly nonlattice independent
random variables
such that
$\E[V_i]=0$ and $\Pr[V_i/h\in \mathbb{Z}]=1$.
Then there exists
$\underline{d}=\underline{d}(\ua_2,\oa_2,\ua_3,\oa_3,\oa_4,\gapxia)
<\overline{d}=\overline{d}(\ep,\ua_2,\oa_2,\ua_3,\oa_3,\oa_4,\gapxia)$
and
$n_0=n_0(\ep,\ua_2,\oa_2,\ua_3,\oa_3,\oa_4,\gapxia,\dephi)$
satisfying the following:
it holds for all $n\ge n_0$ satisfying
\begin{align}
&n \ua_2\le\sum_{i=1}^n V_i^2
\le n\oa_2\com\nn
&n \ua_3\le\sum_{i=1}^n V_i^3
\le n\oa_3\com\nn
&\sum_{i=1}^n \log |\E[\e^{\im \xi V_i}]|\le -n\dephi,\quad
\forall \xi \in [\underline{d},\overline{d}]\com\nn
&\sum_{i=1}^n\left|\bibun{^4\log \E[\e^{\im \xi V_i}]}{\xi^4}\right|
\le n\oa_4,
\qquad \forall |\xi|\le \gapxia
\n
\end{align}
that
\begin{align}
&\sup_v\Bigg|\Pr\left[
\frac{\sum_{i=1}^nV_i}{\sqrt{ns_2}}\le v
\right]
-\Phi(v)
\nn
&\phantom{wwwwwwwww}
-\frac{s_3}{6\sqrt{n}}(1-v^2)\phi(v)
\Bigg|
\le \frac{\ep}{\sqrt{n}}\per\n
\end{align}
\end{proposition}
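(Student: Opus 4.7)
The plan is to follow the classical proof of the Edgeworth expansion for sums of non-identically distributed strongly nonlattice random variables, as sketched in \cite[Sect.\,XVI.6.6]{feller_vol2}, being careful to make the bound quantitative in terms of the parameters $\ua_2,\oa_2,\ua_3,\oa_3,\oa_4,\gapxia,\dephi$ only, so that the resulting $n_0$ is uniform across the family of distributions satisfying the hypotheses.

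First I would reduce to the standardized sum $S_n = \sum_{i=1}^n V_i/\sqrt{n s_2}$ and apply the smoothing inequality of Esseen, which bounds $\sup_v |F_n(v) - G_n(v)|$ by an integral of $|\varphi_n(\xi) - \hat{G}_n(\xi)|/|\xi|$ over $|\xi| \le T$ plus a term of order $1/T$, where $F_n$ is the CDF of $S_n$, $G_n(v) = \Phi(v) + (s_3/6\sqrt{n})(1-v^2)\phi(v)$ is the target Edgeworth approximation, $\varphi_n$ is the characteristic function of $S_n$, and $\hat{G}_n$ is the Fourier-Stieltjes transform of $G_n$. Taking $T$ proportional to $\sqrt{n}$ and splitting the integral into a central region $|\xi| \le \gapxia \sqrt{n s_2}$ and a tail region $\gapxia \sqrt{n s_2} < |\xi| \le T$ is the natural decomposition.

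In the central region I would Taylor-expand the cumulant $\sum_i \log \E[\e^{\im \xi V_i/\sqrt{ns_2}}]$ around $\xi = 0$ to third order; the coefficients $s_2,s_3$ and the fourth-order remainder controlled by $\oa_4$ give the Edgeworth approximation up to an error of order $1/n$ in the integrand, which yields an $o(1/\sqrt{n})$ contribution after integration. For the tail region I would use the hypothesis $\sum_i \log |\E[\e^{\im\xi V_i}]| \le -n\dephi$ uniformly on $[\underline{d},\overline{d}]$ and the strong nonlattice property to obtain $|\varphi_n(\xi)| \le \e^{-n\dephi}$, which is super-exponentially small in $n$; here the constants $\underline{d},\overline{d}$ must be chosen to cover precisely the scaled tail $\{\xi : \gapxia \sqrt{n s_2} \le |\xi| \le T\sqrt{ns_2}/\sqrt{ns_2}\}$, which is why $\overline{d}$ depends on $\ep$ (since $T$ depends on how small we want the remainder) and $\underline{d}$ depends on the expansion scale $\gapxia$.

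The main obstacle will be bookkeeping: making sure that $\gapxia, \ua_2,\ldots,\oa_4$ enter only through $\underline{d},\overline{d},n_0$ in the asserted functional dependence, rather than being used implicitly in the constants hidden in the smoothing inequality. In particular one must verify that the decay $\e^{-n\dephi}$ in the tail region still dominates the $1/\xi$ factor in Esseen's integral after the change of variables and that the central Taylor remainder can be bounded purely in terms of $\oa_4$ and $\ua_2$ (bounded below away from zero). Apart from this, the argument is routine and, unlike the lattice case in Prop.\,\ref{prop_exp_lattice}, no sawtooth correction $\tau(v,d)$ is needed because the characteristic function has no periodicity, so the tail region can be taken simply as $[\underline{d},\overline{d}]$ instead of $[-\pi/h,\pi/h]\setminus[-\gapxi,\gapxi]$.
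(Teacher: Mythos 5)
Your proposal is correct and takes essentially the same approach as the paper, which simply refers the proof to the non-i.i.d.\ Edgeworth-expansion argument of \cite[Sect.\,XVI.6.6]{feller_vol2}: Esseen smoothing, a central/tail split of the resulting integral, a cumulant Taylor expansion in the central region controlled by the fourth-derivative hypothesis, and the regularity hypothesis on $[\underline{d},\overline{d}]$ to kill the tail, with no sawtooth correction since the characteristic function is not periodic. One minor point to tighten: the split between central and tail should occur at $\underline{d}\sqrt{ns_2}$ with $\underline{d}\le\gapxia$ chosen small enough that the quadratic term $-s_2\zeta^2/2$ of the cumulant dominates the quartic Taylor remainder bounded via $\oa_4$, which is why $\underline{d}$ in the statement depends on $\ua_2$ and $\oa_4$ as well as on $\gapxia$, rather than on $\gapxia$ alone as your sketch suggests.
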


\subsection{Bounds on Error Probability for $M$ Codewords}\label{proof_unify}
In this appendix we prove Lemmas \ref{thm_unify} and \ref{lem_error_nonlattice}.

\begin{proof}[Proof of Lemma \ref{thm_unify}]
First we have
\begin{align}
\lefteqn{
\sum_{i=1}^{M-1}\pz^i(1-\pz-\pp)^{M-i-1}{{M-1}\choose i}
}\nn
&=
(1-\pp)^{M-1}-
(1-\pz-\pp)^{M-1}\label{i1}
\end{align}
and
\begin{align}
\lefteqn{
\sum_{i=1}^{M-1}\pz^i(1-\pz-\pp)^{M-i-1}{{M-1}\choose i}\frac{1}{i+1}
}\nn
&=
\frac{1}{M}\sum_{i=1}^{M-1}\pz^i(1-\pz-\pp)^{M-i-1}{M \choose {i+1}}\nn
&=
\frac{1}{M\pz}\sum_{i=2}^{M}\pz^i(1-\pz-\pp)^{M-i}{M \choose {i}}\nn
&=
\frac{(1-\pp)^{M}-(1-\pz-\pp)^{M}}{M\pz}-(1-\pz-\pp)^{M-1}.\label{i2}
\end{align}
Combining \eqref{i1} and \eqref{i2} with \eqref{error_moto}
we obtain
\begin{align}
q_M(\pp,\pz)
&=
1
-\frac{(1-\pp)^{M}-(1-\pz-\pp)^{M}}{M\pz}\n
\end{align}
and
\begin{align}
\lefteqn{
1-\frac{
q_M(\pp,\pz)
}%
{1-\frac{\e^{-M\pp}(1-\e^{-M\pz})}{M\pz}}
}\nn
&=
1-\frac{M\pz-(1-\pp)^{M}-(1-\pz-\pp)^{M}}%
{M\pz-\e^{-M\pp}(1-\e^{-M\pz})}\nn
&=
1-\frac{M\pz-(1-\pp)^{M}\left(1-\left(1-\frac{\pz}{1-\pp}\right)^{M}\right)}%
{M\pz-\e^{-M\pp}(1-\e^{-M\pz})}\nn
&=
\frac{(1-\pp)^{M}\left(1-\left(1-\frac{\pz}{1-\pp}\right)^{M}\right)-\e^{-M\pp}(1-\e^{-M\pz})}%
{M\pz-\e^{-M\pp}(1-\e^{-M\pz})}.
\n
\end{align}
Here note that $\log (1-x)\ge -x-2x^2$ for $x\le 1/2$.
Therefore for $\pz,\pp\le 1/3$ we have
\begin{align}
\lefteqn{
(1-\pp)^{M}\left(1-\left(1-\frac{\pz}{1-\pp}\right)^{M}\right)
}\nn
&\le
\e^{-M\pp}\left(1-\e^{-\frac{M\pz}{1-\pp}-\frac{2M\pz^2}{(1-\pp)^2}}\right)\nn
&\le
\e^{-M\pp}\left(1-\e^{-M\pz-2M\pz\pp-5M\pz^2}\right)\nn
&\le
\e^{-M\pp}\left(1-(1-\min\{1,5M(\pp^2+\pp\pz)\})\e^{-M\pz}\right)\com\n
\end{align}
which implies
\begin{align}
\lefteqn{
\blimsup_{M\to\infty}
\sup_{(\pp,\pz)\in (0,1/3]^2 : \pp\le M^{c}\pz}
\left\{
1-\frac{
q_M(\pp,\pz)
}%
{1-\frac{\e^{-M\pp}(1-\e^{-M\pz})}{M\pz}}
\right\}
}\nn
&\le
\blimsup_{M\to\infty}
\sup_{\pz\in (0,1/3]}
\frac{\min\{1,10M^{1+2c}\pz^2\}}%
{M\pz-(1-\e^{-M\pz})}.
\phantom{wwwwwwwwwwww}
\nn
&=
\blimsup_{M\to\infty}
\sup_{\pz \in (0,1/3]}
\frac{1}{M^{1-2c}}\frac{\min\{1,10(M\pz)^2\}}%
{M\pz-(1-\e^{-M\pz})}\nn
&=0\per
\n
\end{align}

Similarly,
for $\pz,\pp\le 1/3$ we have
\begin{align}
\lefteqn{
(1-\pp)^{M}\left(1-\left(1-\frac{\pz}{1-\pp}\right)^{M}\right)
}\nn
&\ge
\e^{-M\pp-2M\pp^2}\left(1-\e^{-\frac{M\pz}{1-\pp}}\right)\nn
&\ge
\e^{-M\pp-2M\pp^2}\left(1-\e^{-M\pz}\right)\nn
&\ge
\e^{-M\pp}(1-\min\{1,2M\pp^2\})\left(1-\e^{-M\pz}\right)\n
\end{align}
and
\begin{align}
\lefteqn{
\bliminf_{M\to\infty}
\inf_{(\pp,\pz)\in (0,1/3]^2 : \pp\le M^{c}\pz}
\left\{
1-\frac{
q_M(\pp,\pz)
}%
{1-\frac{\e^{-M\pp}(1-\e^{-M\pz})}{M\pz}}
\right\}}\nn
&\ge
-
\blimsup_{M\to\infty}
\sup_{(\pp,\pz)\in (0,1/3]^2 : \pp\le M^{1+c}\pz}
\frac{
\min\{1,2M^{1+2c}\pz^2\}
}%
{M\pz-(1-\e^{-M\pz})}\nn
&=0\com
\n
\end{align}
which concludes the proof.
\end{proof}

\begin{proof}[Proof of Lemma \ref{lem_error_nonlattice}]
By letting $t(x)=x^{-1}\log(1-x)$
we have
\begin{align}
\frac{1-(1-p)^{M-1}}{1-\e^{-pM}}
&=
\frac{1-\e^{p(M-1)t(p)}}{1-\e^{-pM}}\nn
&=
1-
\frac{\e^{-pM}\left(\e^{p(M+(M-1)t(p))}-1\right)}{1-\e^{-pM}}\nn
&=
1-
\frac{\e^{p(M+(M-1)t(p))}-1}{\e^{pM}-1}\per\n
\end{align}
By $t(x)\le -1$,
the second term is bounded from above
as
\begin{align}
\frac{\e^{p(M+(M-1)t(p))}-1}{\e^{pM}-1}
&
\le
\frac{\e^{p}-1}{\e^{pM}-1}\nn
&
\le
\frac{\e^{p}-1}{pM}\nn
&\le \frac{\e-1}{M}\label{sita_nonlattice}
\end{align}
and bounded from below as
\begin{align}
\lefteqn{
\frac{\e^{p(M+(M-1)t(p))}-1}{\e^{pM}-1}
}\nn
&\ge
\frac{p(M+(M-1)t(p))}{\e^{pM}-1}\nn
&=
\frac{M(p+\log(1-p))}{\e^{pM}-1}
-
\frac{pt(p)}{\e^{pM}-1}\nn
&\ge
\frac{M(-2p^2)}{\e^{pM}-1}
\label{mainasu}\\
&\ge
-\frac{2}{M}\frac{(Mp)^2}{\e^{pM}-1}
\nn
&\ge
-\frac{2}{M}\com
\qquad \left(
\mbox{by }\frac{x^2}{\e^{x}-1}\le 1
\mbox{ for $x> 0$}\right)
\label{ue_nonlattice}
\end{align}
where we used $\log(1-p)\ge -p-2p^2$ for $p\in [0,1/2]$ and
$t(x)\le 0$ in \eqref{mainasu}.
We complete the proof by letting $M\to\infty$ in \eqref{sita_nonlattice} and \eqref{ue_nonlattice}.
\end{proof}

\subsection{Evaluation of Oscillations}\label{append_osci}
In this appendix we prove Lemma \ref{lem_osci}
on the oscillations of function $f_n$.
We first show Lemmas \ref{lem_osci_all} and \ref{lem_mathg} below.
\begin{lemma}\label{lem_osci_all}
For any set $S\subset \mathbb{R}^2$,
\begin{align}
\omega_{f_n}(S)&\le
\ch(c_2)^{-\rho}n^{-\rho/2}\sup_{z_2:z\in S}\e^{-\rho z_2^2/2c_1}\per\n
\end{align}
\end{lemma}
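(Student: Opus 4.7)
The plan is to obtain the oscillation bound from a direct pointwise upper bound on $f_n$, since $f_n$ is nonnegative and thus $\omega_{f_n}(S)\le \sup_{z\in S} f_n(z)$.

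First I would observe that both factors in
\[
f_n(z)=\e^{-\sqrt{n}\rho z_1}g_h\!\left(\frac{\e^{\sqrt{n}z_1-z_2^2/2c_1}}{c_2\sqrt{n}}\right)
\]
are nonnegative: the exponential is obviously positive, and $g_h(u)\ge 0$ for $u\ge 0$ (which one reads off from $g_h(u)=1-\e^{-u}$ at $h=0$, and more generally from $g_h'(u)\ge 0$ together with $g_h(0)=0$ established in Appendix \ref{append_gh}). Hence $\inf_{z\in S} f_n(z)\ge 0$, and so
\[
\omega_{f_n}(S)\ \le\ \sup_{z\in S}f_n(z).
\]

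Next I would apply the uniform bound \eqref{upper_gh}, namely $g_h(u)\le \ch u^{\rho}$ (valid since $0<\rho\le 1$). Setting $u=\e^{\sqrt{n}z_1-z_2^2/2c_1}/(c_2\sqrt{n})$ and taking the $\rho$-th power, the exponent $\e^{\sqrt{n}z_1}$ inside $u^\rho$ exactly cancels the outer factor $\e^{-\sqrt{n}\rho z_1}$. Concretely,
\[
f_n(z)\ \le\ \e^{-\sqrt{n}\rho z_1}\,\ch\!\left(\frac{\e^{\sqrt{n}z_1-z_2^2/2c_1}}{c_2\sqrt{n}}\right)^{\!\rho}
=\ \ch(c_2)^{-\rho}n^{-\rho/2}\e^{-\rho z_2^2/2c_1}.
\]

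Taking the supremum over $z\in S$ on the right and combining with the oscillation bound above yields
\[
\omega_{f_n}(S)\ \le\ \ch(c_2)^{-\rho}n^{-\rho/2}\sup_{z_2:\,z\in S}\e^{-\rho z_2^2/2c_1},
\]
which is the claim. There is no serious obstacle here: the entire content is the cancellation of $\e^{\sqrt{n}z_1}$ made possible by the shape of $g_h$ through the polynomial bound \eqref{upper_gh}. The more delicate oscillation estimates (to be used for \eqref{osci2} and \eqref{osci3} in the subsequent parts of Lemma \ref{lem_osci}) will require bounding $g_h'$ via \eqref{g_bibun1}--\eqref{g_bibun2} and careful control of the Gaussian convolution against $\phi_\Sigma$, but that is not needed for this elementary inequality.
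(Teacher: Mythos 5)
Your proof is correct and follows essentially the same route as the paper: nonnegativity of $f_n$ reduces the oscillation to $\sup_{z\in S} f_n(z)$, and the bound $g_h(u)\le \ch u^{\rho}$ from \eqref{upper_gh} produces exactly the cancellation of $\e^{\sqrt{n}\rho z_1}$ that yields the stated bound.
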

\begin{proof}
We can bound $f_n$ as
\begin{align}
f_n(z_1,z_2)
&=
\e^{-\sqrt{n}\rho z_1}g_h\left(
\frac{\e^{\sqrt{n}z_1-z_2^2/2c_1}}{c_2\sqrt{n}}
\right)\nn
&=
(c_2\sqrt{n})^{-\rho}\e^{-\rho z_2^2/2c_1}u^{-\rho}g_h\left(
u\right)\nn
&\quad\since{by letting $u=\frac{\e^{\sqrt{n}z_1-z_2^2/2c_1}}{c_2\sqrt{n}}$}\nn
&\le
\ch\left(c_2\sqrt{n}\right)^{-\rho}\e^{-\rho z_2^2/2c_1}\per
\since{by \eqref{upper_gh}}\n
\end{align}
Thus we obtain the lemma since $f_n(z)\ge 0$.
\end{proof}

\begin{lemma}\label{lem_mathg}
Let $u>0$ and $r \in [-1/2,1/2]$ be arbitrary.
Then
\begin{align}
&|g_h((1+r)u)-g_h(u)|\le \ch|r| u\com\label{ggap2}\\
&|g_h((1+r)u)-g_h(u)|\le \ch|r|\per\label{ggap1}
\end{align}
\end{lemma}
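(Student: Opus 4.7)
The plan is to prove the two inequalities by combining the derivative bound from Appendix \ref{append_gh} with the concavity of $g_h$ on $[0,\infty)$. The inequality \eqref{ggap2} is immediate: by the mean value theorem there exists $\xi$ between $u$ and $(1+r)u$ such that $g_h((1+r)u)-g_h(u)=g_h'(\xi)\cdot ru$, and the uniform derivative bound \eqref{g_bibun2} gives $|g_h'(\xi)|\le\ch$, hence $|g_h((1+r)u)-g_h(u)|\le\ch|r|u$.

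For \eqref{ggap1}, the key fact I would establish first is that $g_h$ is concave on $[0,\infty)$ with $g_h(0)=0$ and $g_h\le 1$. Concavity can be seen from the integral representation $g_h(u) = 1 - (h\eta)^{-1}\int_{\alpha}^{\alpha+h\eta} \e^{-su}\,ds$, where $\alpha = h\eta/(\e^{h\eta}-1)$, which follows by applying the change of variable $s = h\eta u t$ to the identity $1-\e^{-h\eta u} = h\eta u\int_0^1 \e^{-h\eta u t}\,dt$ inside the defining formula of $g_h$. Differentiating twice under the integral sign gives $g_h''(u) = -(h\eta)^{-1}\int_\alpha^{\alpha+h\eta}s^2\e^{-su}\,ds<0$, and the same representation makes $g_h(0)=0$ and $g_h\le 1$ evident.

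Given concavity together with $g_h(0)=0$, the bound \eqref{ggap1} reduces to a one-line convex-combination argument. For $r\in[0,1/2]$, writing $u = \frac{1}{1+r}\cdot(1+r)u + \frac{r}{1+r}\cdot 0$ and applying concavity yields $g_h(u)\ge (1+r)^{-1}g_h((1+r)u)$, equivalently $g_h((1+r)u)\le(1+r)g_h(u)$; hence $g_h((1+r)u)-g_h(u)\le r\,g_h(u)\le r\le\ch|r|$, using $g_h\le 1$ and $\ch\ge 1$. For $r\in[-1/2,0]$, writing $(1-|r|)u = (1-|r|)\cdot u + |r|\cdot 0$ gives $g_h((1-|r|)u)\ge(1-|r|)g_h(u)$, and therefore $g_h(u)-g_h((1-|r|)u)\le|r|g_h(u)\le|r|\le\ch|r|$.

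The main potential obstacle is establishing concavity cleanly, but since the integral representation is a short direct manipulation of the paper's explicit formula for $g_h$, this step is routine. Everything else is immediate from elementary calculus and the observation $\ch\ge 1$.
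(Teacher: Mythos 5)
Your proof of \eqref{ggap2} via the mean value theorem and the uniform derivative bound $0\le g_h'\le \ch$ is exactly the paper's argument. For \eqref{ggap1}, however, you take a genuinely different route. The paper differentiates $g_h((1+r)u)$ in $r$, applies the sharper pointwise bound $g_h'(v)\le (v+h\eta)\e^{-v}$ at $v=(1+r)u$, and then estimates $u\,g_h'((1+r)u)$ by elementary calculus to show it is at most $\ch$. You instead exhibit the integral representation $g_h(u)=1-(h\eta)^{-1}\int_{\alpha}^{\alpha+h\eta}\e^{-su}\,\rd s$ with $\alpha=h\eta/(\e^{h\eta}-1)$ (which does reproduce the paper's formula, since $\int_\alpha^{\alpha+h\eta}\e^{-su}\rd s=\e^{-\alpha u}(1-\e^{-h\eta u})/u$), conclude concavity, $g_h(0)=0$, and $g_h\le 1$, and then derive \eqref{ggap1} from the inequality $g_h((1+r)u)\le(1+r)g_h(u)$ (resp.\ $g_h((1-|r|)u)\ge(1-|r|)g_h(u)$). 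This is correct and arguably cleaner: it sidesteps the numerical estimation of $u((1+r)u+h\eta)\e^{-(1+r)u}$ over $r\in[-1/2,1/2]$ and yields the bound $|r|$ with no constant at all, only loosened to $\ch|r|$ because $\ch\ge 1$. The one small omission is the degenerate case $h=0$, where $g_0(u)=1-\e^{-u}$; concavity there is immediate, so the gap is cosmetic rather than substantive. Both approaches rest on the same explicit formula for $g_h$; the paper's is local (pointwise derivative estimate), yours is global (concavity plus a boundary value), and yours transfers more easily to variants of $g_h$ with the same monotone-concave structure.
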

\begin{proof}
Eq.\,\eqref{ggap2} is straightforward from \eqref{g_bibun2}.
We obtain \eqref{ggap1} from
\begin{align}
\bibun{g_h((1+r)u)}{r}
&=
u\bibun{g_h(v)}{v}\bigg|_{v=(1+r)u}\nn
&\le
u((1+r)u+h\eta)\e^{-u}\since{by \eqref{g_bibun1}}\nn
&\le
6\e^{-2}+h\eta \e^{-1}\nn
&\le
\ch\per\n
\end{align}
\end{proof}

By using these lemmas we can evaluate
the oscillation of $f_n$ within a ball as follows.
\begin{lemma}\label{lem_osci_tight}
Assume $|z_2|\le c_1\sqrt{n}/2$.
Then,
for sufficiently large $n$,
\begin{align}
\omega_{f_n}(B_{\de_n}(z))
&\le
\frac{8 c_h}{c_2}\de_n\e^{(1-\rho)\sqrt{n}z_1}\com
\label{lem_tight2}\\
\omega_{f_n}(B_{\de_n}(z))
&\le
4(1+\ch)\sqrt{n}\de_n\e^{-\rho\sqrt{n}z_1}\per\label{lem_tight1}
\end{align}
\end{lemma}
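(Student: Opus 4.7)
The plan is to bound $|f_n(z') - f_n(z)|$ for an arbitrary $z' \in B_{\de_n}(z)$ and then obtain the oscillation via the triangle inequality $\omega_{f_n}(B_{\de_n}(z)) \le 2\sup_{z' \in B_{\de_n}(z)}|f_n(z') - f_n(z)|$. Letting $u$ and $u'$ denote the arguments of $g_h$ in $f_n(z)$ and $f_n(z')$, respectively, I would decompose
\begin{align*}
f_n(z') - f_n(z) &= \e^{-\sqrt{n}\rho z_1}\bigl(g_h(u') - g_h(u)\bigr) \\
&\quad + \bigl(\e^{-\sqrt{n}\rho z_1'} - \e^{-\sqrt{n}\rho z_1}\bigr)g_h(u')
\end{align*}
and treat the two summands separately using Lemma \ref{lem_mathg} together with the monotone estimates on $g_h$ from Appendix \ref{append_gh}.

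The first step is to quantify the multiplicative perturbation $r := u'/u - 1$ and the difference of the exponential prefactors. From $\log(u'/u) = \sqrt{n}(z_1' - z_1) - ((z_2')^2 - z_2^2)/(2c_1)$, $|z_i' - z_i| \le \de_n$, and $|(z_2')^2 - z_2^2| \le (2|z_2| + \de_n)\de_n$, I would derive $|\log(u'/u)| \le \sqrt{n}\de_n + (|z_2| + \de_n/2)\de_n/c_1$. This is the precise point at which the hypothesis $|z_2| \le c_1\sqrt{n}/2$ enters: it dominates the quadratic contribution $|z_2|\de_n/c_1$ by $\sqrt{n}\de_n/2$, yielding $|\log(u'/u)| = O(\sqrt{n}\de_n)$. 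Since $\sqrt{n}\de_n \to 0$ by Proposition \ref{prop_ranga}, for all sufficiently large $n$ one has $|r| \le 2\sqrt{n}\de_n$, and an analogous first-order expansion gives $|\e^{-\sqrt{n}\rho z_1'} - \e^{-\sqrt{n}\rho z_1}| \le 2\sqrt{n}\rho\de_n\,\e^{-\sqrt{n}\rho z_1}$.

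With these perturbation estimates in hand, the two claimed bounds come from different combinations of Lemma \ref{lem_mathg} and the upper bounds on $g_h$. For \eqref{lem_tight2} I would apply \eqref{ggap2}, giving $|g_h(u') - g_h(u)| \le \ch|r|u$, together with \eqref{upper_gh2}, giving $g_h(u') \le \ch u'$. Both substitutions extract a factor $u$ (or $u'$, which differs from $u$ by a bounded factor once $n$ is large): the $1/(c_2\sqrt{n})$ inside $u$ cancels the $\sqrt{n}$ coming from $|r|$ or from differentiating the exponential, the $\e^{\sqrt{n}z_1}$ combines with $\e^{-\sqrt{n}\rho z_1}$ to produce $\e^{(1-\rho)\sqrt{n}z_1}$, and the Gaussian factor $\e^{-z_2^2/(2c_1)} \le 1$ is then dropped. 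For \eqref{lem_tight1} I would instead use \eqref{ggap1}, giving $|g_h(u') - g_h(u)| \le \ch|r| \le 2\ch\sqrt{n}\de_n$, together with the trivial bound $g_h(u') \le 1$, so that each summand picks up the common factor $\sqrt{n}\de_n\,\e^{-\rho\sqrt{n}z_1}$. Summing and using $\rho \le 1$ produces the stated constants.

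The principal obstacle is the first step: without the constraint $|z_2| \le c_1\sqrt{n}/2$, the quadratic term $z_2^2/(2c_1)$ varies by as much as $|z_2|\de_n/c_1$ over the ball, which can swamp $\sqrt{n}\de_n$; in that regime $r$ is no longer small and Lemma \ref{lem_mathg} is inapplicable in linearized form. Everything else is a routine bookkeeping exercise, but this scaling estimate is what ties the proof together and explains the particular form of the hypothesis.
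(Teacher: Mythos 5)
Your proposal follows essentially the same strategy as the paper: the hypothesis $|z_2|\le c_1\sqrt n/2$ controls the quadratic perturbation of $z_2^2/(2c_1)$ over the ball so that the argument of $g_h$ changes only by a factor $1+\lo(\sqrt n\,\de_n)$, and then \eqref{ggap2} together with \eqref{upper_gh2} gives the $u$-weighted bound \eqref{lem_tight2} while \eqref{ggap1} together with $g_h\le 1$ gives the uniform bound \eqref{lem_tight1} — precisely the pair of estimates the paper invokes. The only cosmetic difference is that you pass through $\omega_{f_n}(B_{\de_n}(z))\le 2\sup_{z'}|f_n(z')-f_n(z)|$ whereas the paper sandwiches $f_n(z')$ between two $z'$-free expressions and subtracts; with the loose perturbation constants you quote this yields roughly $(8+\so(1))\ch/c_2$ rather than exactly $8\ch/c_2$, but since the lemma is only used asymptotically in Lemma \ref{lem_osci} this does not affect anything downstream.
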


\begin{proof}
First we obtain for $z'$ satisfying $\Vert z'-z\Vert \le \de_n$ and sufficiently large $n$ that
\begin{align}
|(z_2')^2-z_2^2|
&\le
|z_2'-z_2|(|z_2'|+|z_2|)\nn
&\le
|z_2'-z_2|(2|z_2|+|z_2-z_2'|)\nn
&\le
\de_n\left|c_1\sqrt{n}+\de_n\right|\nn
&\le
2c_1\de_n\sqrt{n}\per\since{by $\lim_{n\to\infty}\de_n=0$}\n
\end{align}
Let
$w=z_1-z_2^2/(2c_1\sqrt{n})$
and $w'=z_1'-(z_2')^2/(2c_1\sqrt{n})$.
Then
\begin{align}
\left|w'-w\right|
&\le
\left|z_1'-z_1\right|+\frac{|z_2^2-(z'_2)^2|}{2c_1\sqrt{n}}\nn
&\le
2\de_n\per\n
\end{align}
Therefore
we obtain for sufficiently large $n$
that
\begin{align}
\left|\frac{\e^{\rho\sqrt{n}w'}}{\e^{\rho\sqrt{n}w}}-1\right|
&\le
2(\rho\sqrt{n}|w'-w|)\le 2\sqrt{n}\de_n\nn
\left|\frac{\e^{\rho\sqrt{n}z_1'}}{\e^{\rho\sqrt{n}z_1}}-1\right|
&\le
2\sqrt{n}\de_n\n
\end{align}
since $\lim_{n\to\infty}\sqrt{n}\de_n=0$.
Therefore by letting $\de_n'=2\de_n\sqrt{n}$
and using \eqref{ggap2}
we obtain for sufficiently large $n$
that
\begin{align}
f_{n}(z')
&\le
(1+\de_n')\e^{-\rho\sqrt{n}z_1}
g_h\left(
\frac{(1+\de_n')\e^{\sqrt{n}w}}{c_2\sqrt{n}}
\right)\nn
&\le
(1+\de_n')\e^{-\rho\sqrt{n}z_1}
\left(
g_h\left(
\frac{\e^{\sqrt{n}w}}{c_2\sqrt{n}}
\right)
+
\frac{\ch\de_n'\e^{\sqrt{n}w}}{c_2\sqrt{n}}
\right)\com
\nn
f_{n}(z')
&\ge
(1-\de_n')\e^{-\rho\sqrt{n}z_1}
\left(
g_h\left(
\frac{\e^{\sqrt{n}w}}{c_2\sqrt{n}}
\right)
-
\frac{\ch\de_n'\e^{\sqrt{n}w}}{c_2\sqrt{n}}
\right).\n
\end{align}
We obtain \eqref{lem_tight2} from
these inequalities
by
\begin{align}
\!\!\!
\omega_{f_n}(B_{\de_n}(z))
&\le
2\de_n'\e^{-\rho\sqrt{n}z_1}
\left(
g_h\left(
\frac{\e^{\sqrt{n}w}}{c_2\sqrt{n}}
\right)
+
\frac{\ch\e^{\sqrt{n}w}}{c_2\sqrt{n}}
\right)\nn
&\le
4\de_n'\e^{-\rho\sqrt{n}z_1}
\frac{\ch\e^{\sqrt{n}w}}{c_2\sqrt{n}}\since{by \eqref{upper_gh2}}
\nn
&\le
4\de_n'\e^{(1-\rho)\sqrt{n}z_1}
\frac{\ch}{c_2\sqrt{n}}\per\n
\end{align}

Similarly
we obtain
from \eqref{ggap1}
that
\begin{align}
f_{n}(z')
&\le
(1+\de_n')\e^{-\rho\sqrt{n}z_1}
\left(g_h\left(
\frac{\e^{\sqrt{n}w}}{c_2\sqrt{n}}
\right)+\ch\de_n'
\right)\nn
f_{n}(z')
&\ge
(1-\de_n')\e^{-\rho\sqrt{n}z_1}
\left(g_h\left(
\frac{\e^{\sqrt{n}w}}{c_2\sqrt{n}}
\right)-\ch\de_n'
\right)\per\n
\end{align}
From these inequalities
we obtain \eqref{lem_tight1} by
\begin{align}
\omega_{f_n}(B_{\de_n}(z))
&\le
2\de_n'\e^{-\rho\sqrt{n}z_1}
\left(g_h\left(
\frac{\e^{\sqrt{n}w}}{c_2\sqrt{n}}
\right)
+\ch\right)\nn
&\le
2(1+\ch)\de_n'\e^{-\rho\sqrt{n}z_1}
\per\n
\end{align}
\end{proof}

\begin{proof}[Proof of Lemma \ref{lem_osci}]
Let $b_n$ be such that
\begin{align}
\e^{\sqrt{n}b_n}=n^{1/2+1/4\rho}\de_n^{1/2\rho}\per\n
\end{align}

First we have
\begin{align}
\lefteqn{
\int \omega_f(\{z':\Vert z'-z\Vert\le \de \})\phi_{\Sigma}(z+a)\rd z
}\nn
&\le
\int_{|z_2|\le c_1\sqrt{n}/2,z_1\le b_n} \omega_f(B_{\de_n}(z))\phi_{\Sigma}(z+a)\rd z\nn
&\quad+
\int_{|z_2|\le c_1\sqrt{n}/2,z_1\ge b_n} \omega_f(B_{\de_n}(z))\phi_{\Sigma}(z+a)\rd z\nn
&\quad+
\int_{|z_2|\ge c_1\sqrt{n}/2} \omega_f(B_{\de_n}(z))\phi_{\Sigma}(z+a)\rd z\nn
&\le
\int_{|z_2|\le c_1\sqrt{n}/2,z_1\le b_n}
\frac{4c_4c_h}{c_2}\de_n\e^{(1-\rho)\sqrt{n}z_1}
\phi_{\Sigma}(z+a)\rd z\nn
&\quad+
\int_{|z_2|\le c_1\sqrt{n}/2,z_1\ge b_n}
2c_4(1+\ch)\sqrt{n}\de_n\e^{-\rho\sqrt{n}z_1}
\phi_{\Sigma}(z+a)\rd z\nn
&\quad+
\int_{|z_2|\ge c_1\sqrt{n}/2}
\ch(c_2\sqrt{n})^{-\rho}\e^{-c_1\rho n/8}\phi_{\Sigma}(z+a)\rd z\nn
&\phantom{wwwwwwwwwwwwwwww}\since{by Lemmas \ref{lem_osci_all} and \ref{lem_osci_tight}}\nn
&\le
\int_{-\infty}^{b_n}
\frac{4c_4c_h}{c_2}\de_n\e^{(1-\rho)\sqrt{n}z_1}
\phi_{\si_{11}}(z_1+a_1)\rd z_1\nn
&\quad+
\int_{b_n}^{\infty}
2c_4(1+\ch)\sqrt{n}\de_n\e^{-\rho\sqrt{n}z_1}\phi_{\si_{11}}(z_1+a_1)\rd z_1
\nn
&\quad+
\so(n^{-(1+\rho)/2})\per\label{bunkatu_osci}
\end{align}
Here recall that $\lim_{n\to\infty}\sqrt{n}\de_n=0$
and therefore the second term of \eqref{bunkatu_osci} is bounded as
\begin{align}
\lefteqn{
\int_{b_n}^{\infty} \sqrt{n}\de_n\e^{-\rho\sqrt{n}z_1}\phi_{\si_{11}}(z_1+a_1)\rd z_1
}\nn
&\le
\frac{1}{\sqrt{2\pi\si_1^2}}\int_{b_n}^{\infty}\sqrt{n}\de_n\e^{-\rho\sqrt{n}z_1}\rd z_1\nn
&=
\frac{1}{\sqrt{2\pi\si_1^2}}\frac{\de_n\e^{-\rho\sqrt{n}b_n}}{\rho}\nn
&=
\frac{1}{\sqrt{2\pi\si_1^2}}\frac{\de_n (n^{1/2+1/4\rho}\de_n^{1/2\rho})^{-\rho}}{\rho}\nn
&=
\frac{1}{\sqrt{2\pi\si_1^2}}\frac{(\sqrt{n}\de_n)^{1/2}}{\rho n^{(1+\rho)/2}}\nn
&=
\so(n^{-(1+\rho)/2})\per\n
\end{align}
We obtain \eqref{osci2}
since the first term of \eqref{bunkatu_osci}
is bounded as
\begin{align}
\lefteqn{
\int_{-\infty}^{b_n}
\de_n\e^{(1-\rho)\sqrt{n}z_1}
\phi_{\si_{11}}(z_1+a_1)\rd z_1
}\nn
&\le
\de_n\e^{(1-\rho)\sqrt{n}b_n}\nn
&=
\de_n(\sqrt{n}(\sqrt{n}\de_n)^{1/2\rho})^{(1-\rho)}\nn
&=
n^{-\rho}(\sqrt{n}\de_n)(\sqrt{n}\de_n)^{(1-\rho)/2\rho}\nn
&=
\so(n^{-\rho})\per\n
\end{align}
We obtain \eqref{osci3}
since the first term of \eqref{bunkatu_osci} is also bounded
for $\rho<1$ as
\begin{align}
\lefteqn{
\int_{-\infty}^{b_n}
\de_n\e^{(1-\rho)\sqrt{n}z_1}
\phi_{\si_{11}}(z_1+a_1)\rd z_1
}\nn
&\le
\frac{1}{\sqrt{2\pi\si_{11}^2}}
\int_{-\infty}^{b_n}
\de_n\e^{(1-\rho)\sqrt{n}z_1}
\rd z_1\nn
&=
\frac{1}{\sqrt{2\pi\si_{11}^2}}
\frac{\de_n(\sqrt{n}(\sqrt{n}\de_n)^{1/2\rho})^{1-\rho}}{(1-\rho)\sqrt{n}}
\nn
&=
\frac{1}{\sqrt{2\pi\si_{11}^2}}
\frac{n^{-(1+\rho)/2}(\sqrt{n}\de_n)(\sqrt{n}\de_n)^{(1-\rho)/2\rho}}{1-\rho}
\nn
&=
\so(n^{-(1+\rho)/2})\per\n
\end{align}
\end{proof}

\subsection{Proof of Lemma \ref{lem_main}}\label{append_second}

\begin{figure*}[t]
\hrulefill

(i) $\rho<1,\,\De=0$.
\begin{align}
\lefteqn{
\iint
\left(1-\frac{h(z_1,z_2)}{\sqrt{n}}\right)\frac{\e^{-(z_1,z_2)\Sigma_{01}^{-1}(z_1,z_2)^T/2}}%
{2\pi\sqrt{|\Sigma|}}
\e^{-\sqrt{n}\rho z_1}
g_h\left(
\frac{
 \e^{\sqrt{n}z_1-z_2^2/2c_1}
}{c_2\sqrt{n}}
\right)\rd z_1 \rd z_2
}\nn
&=
\frac{(c_2\sqrt{n})^{-\rho}}{\sqrt{n}}
\iint
\left(1-\frac{h((w+z_2^2/2c_1+d_n)/\sqrt{n},z_2)}{\sqrt{n}}\right)\cdot\nn
&\qquad\qquad\qquad\qquad
\frac{\e^{-((w+z_2^2/2c_1+d_n)/\sqrt{n},z_2)\Sigma_{01}^{-1}((w+z_2^2/2c_1+d_n)/\sqrt{n},z_2)^T/2}}%
{2\pi\sqrt{|\Sigma_{01}|}}
\e^{-\rho w-\rho z_2^2/2c_1}
g_h\left(\e^{w}\right)\rd w \rd z_2
\nn
&\phantom{wwwwwwwwwwwwwwwwwwwwwwwwwwwwww}
\left(\mbox{by letting } \e^{w}=\frac{
 \e^{\sqrt{n}z_1-z_2^2/2c_1}
}{c_2\sqrt{n}}
\mbox{ and } d_n=\log c_2\sqrt{n}\right)\nn
&=
\frac{(c_2\sqrt{n})^{-\rho}}{\sqrt{n}}
\iint
\left(1+\so(1)\right)
\frac{\e^{-(0,z_2)\Sigma_{01}^{-1}(0,z_2)^T/2}}%
{2\pi\sqrt{|\Sigma_{01}|}}
\e^{-\rho w-\rho z_2^2/2c_1}
g_h\left(\e^{w}\right)\rd w \rd z_2
\nn
&\quad+
n^{-(1+\rho)/2}
\iint_{\max\{|w|,|z_2|\}\ge n^{1/5}}
\e^{-\rho w-\rho z_2^2/2c_1}
g_h\left(\e^{w}\right)\rd w \rd z_2\nn
&\quad
\cdot\lo\left(\sup_{w,z'}\left\{\left(1-\frac{h((w+(z_2')^2/2c_1+d_n)/\sqrt{n},z_2')}{\sqrt{n}}\right)
\frac{\e^{-((w+(z_2')^2/2c_1+d_n)/\sqrt{n},z_2')\Sigma_{01}^{-1}((w+(z_2')^2/2c_1+d_n)/\sqrt{n},z_2')^T/2}}
{2\pi\sqrt{|\Sigma_{01}|}}\right\}\right)\!\!\!\label{tyuu_bound}\\
&=
\frac{(c_2\sqrt{n})^{-\rho}(1+\so(1))}{\sqrt{n}}
\iint
\frac{\e^{-(0,z_2)\Sigma_{01}^{-1}(0,z_2)^T/2}}
{2\pi\sqrt{|\Sigma_{01}|}}
\e^{-\rho w-\rho z_2^2/2c_1}
g_h\left(\e^{w}\right)\rd w \rd z_2\nn
&\quad+
n^{-(1+\rho)/2}
\iint_{\max\{|w|,|z_2|\}\ge n^{1/5}}
\e^{-\rho w-\rho z_2^2/2c_1}
g_h\left(\e^{w}\right)\rd w \rd z_2
\cdot \lo(1)
\nn
&=
\frac{(c_2\sqrt{n})^{-\rho}}{2\pi\sqrt{n|\Sigma_{01}|}}
\int \e^{-z_2^2(\sigma_{00}/|\Sigma_{01}|+\rho/c_1)/2}\rd z_2
\int
\e^{-\rho w}
g_h\left(\e^{w}\right)\rd w
+
\lo\left(n^{-(1+\rho)/2}
\iint_{|z_2| \ge n^{1/5}}
\e^{-\rho w-\rho z_2^2/2c_1}
g_h\left(\e^{w}\right)\rd w \rd z_2\right)\nn
&\qquad+
\lo\left(n^{-(1+\rho)/2}
\iint_{|w|\ge  n^{1/5}}
\e^{-\rho w-\rho z_2^2/2c_1}
g_h\left(\e^{w}\right)\rd w \rd z_2\right)\nn
&=
\frac{(c_2\sqrt{n})^{-\rho}}{\sqrt{2\pi n(\si_{00}+\rho|\Sigma_{01}|/c_1)}}
\int
\e^{-\rho w}g_h\left(\e^{w}\right)\rd w+\so(n^{-\frac{1+\rho}{2}})
\com\label{fig1}
\end{align}
where \eqref{tyuu_bound} follows from
\begin{align}
&\!\!\!\!\!\!\!\!\!\!
\left(1-\frac{h((w+(z_2)^2/2c_1+d_n)/\sqrt{n},z_2)}{\sqrt{n}}\right)
\frac{\e^{-((w+(z_2)^2/2c_1+d_n)/\sqrt{n},z_2)\Sigma_{01}^{-1}((w+(z_2)^2/2c_1+d_n)/\sqrt{n},z_2)^T/2}}
{2\pi\sqrt{|\Sigma_{01}|}}
\nn
&\qquad=
(1+\so(1))
\frac{\e^{-(0,z_2)\Sigma_{01}^{-1}(0,z_2)^T/2}}
{2\pi\sqrt{|\Sigma_{01}|}}\n
\end{align}
for $(w,z_2)$ such that $\max\{|w|,|z_2|\}\le n^{1/5}$.

\hrulefill
\end{figure*}
\begin{figure*}
\hrulefill

(ii) $\rho=1,\,\De=0$.
\begin{align}
\lefteqn{
\iint
\left(1-\frac{h(z_1,z_2)}{\sqrt{n}}\right)\frac{\e^{-(z_1,z_2)\Sigma_{01}^{-1}(z_1,z_2)^T/2}}%
{2\pi\sqrt{|\Sigma_{01}|}}
\e^{-\sqrt{n}\rho z_1}
g_h\left(
\frac{
 \e^{\sqrt{n}z_1-z^2/2c_1}
}{c_2\sqrt{n}}
\right)\rd z_1 \rd z_2
}\nn
&=
(c_2\sqrt{n})^{-1}
\iint \left(1-\frac{h(w+(z_2^2/2c_1+d_n)/\sqrt{n},z_2)}{\sqrt{n}}\right)\cdot\nn
&\qquad\qquad\qquad
\frac{\e^{-(w+(z_2^2/2c_1+d_n)/\sqrt{n},z_2)\Sigma_{01}^{-1}(w+(z_2^2/2c_1+d_n)/\sqrt{n},z_2)^T/2}}%
{2\pi\sqrt{|\Sigma_{01}|}}
\e^{-z_2^2/2c_1}
\e^{-\sqrt{n} w}
g_h\left(\e^{\sqrt{n} w}\right)\rd w \rd z_2\nn
&\phantom{wwwwwwwwwwwwwwwwwwwwwwwwwwwwwwwwwwww}\left(\mbox{by letting } \e^{\sqrt{n}w}=\frac{
 \e^{\sqrt{n}z_1-z^2/2c_1}
}{c_2\sqrt{n}}\right)\nn
&=
(c_2\sqrt{n})^{-1}
\frac{h\eta}{\e^{h\eta}-1}
\iint_{w\le -n^{-1/4}}\left(1+\so(1)\right)
\frac{\e^{-(w,z_2)\Sigma_{01}^{-1}(w,z_2)^T/2}}
{2\pi\sqrt{|\Sigma_{01}|}}
\e^{-z_2^2/2c_1}
\rd w \rd z_2
+\so(n^{-1/2})
\nn
&=
(c_2\sqrt{n})^{-1}
\frac{h\eta}{\e^{h\eta}-1}
\frac{1}{2\sqrt{|\Sigma_{01}|
\left|\Sigma_{01}^{-1}+\left(
\begin{array}{cc}
0&0\\0&1/c_1
\end{array}
\right)\right|}
}
+\so(n^{-1/2})\nn
&=
(c_2\sqrt{n})^{-1}
\frac{h\eta}{\e^{h\eta}-1}
\frac{1}{2\sqrt{1+\si_{11}/c_1}
}
+\so(n^{-1/2})\per\label{fig2}
\end{align}
(iii) $\rho=1,\,\De>0$.
\begin{align}
\lefteqn{
\iint
\frac{\e^{-(z_1,z_2)\Sigma_{01}^{-1}(z_1,z_2)^T/2}}{2\pi\sqrt{|\Sigma_{01}|}}
\left(1-\frac{h(z_1,z_2)}{\sqrt{n}}\right)
\e^{-\sqrt{n}\rho(z_1-\sqrt{n}\De)}
g_h\left(
\frac{
 \e^{\sqrt{n}(z_1-\sqrt{n}\De)-z^2/2c_1)}
}{c_1\sqrt{n}}
\right)\rd z_1 \rd z_2
}\nn
&=
(c_2\sqrt{n})^{-1}
\iint \left(1-\frac{h(w+(z_2^2/2c_1+d_n)/\sqrt{n},z_2)}{\sqrt{n}}\right)\cdot\nn
&\qquad\qquad\qquad
\frac{\e^{-(w+(z_2^2/2c_1+d_n)/\sqrt{n},z_2)\Sigma_{01}^{-1}(w+(z_2^2/2c_1+d_n)/\sqrt{n},z_2)^T/2}}%
{2\pi\sqrt{|\Sigma_{01}|}}
\e^{-z_2^2/2c_1}
\e^{-\sqrt{n}(w-\sqrt{n}\De)}
g_h\left(\e^{\sqrt{n} (w-\sqrt{n}\De)}\right)\rd w \rd z_2\nn
&\phantom{wwwwwwwwwwwwwwwwwwwwwwwwwwwwwwwwwwww}\left(\mbox{by letting } \e^{\sqrt{n}w}=\frac{
 \e^{\sqrt{n}z_1-z^2/2c_1}
}{c_2\sqrt{n}}\right)\nn
&=
(c_2\sqrt{n})^{-1}
\frac{h\eta}{\e^{h\eta}-1}
\iint_{w\le \sqrt{n}\Delta-n^{-1/4}}\left(1+\so(1)\right)
\frac{\e^{-(w,z_2)\Sigma_{01}^{-1}(w,z_2)^T/2}}
{2\pi\sqrt{|\Sigma_{01}|}}
\e^{-z_2^2/2c_1}
\rd w \rd z_2
+\so(n^{-1/2})\nn
&=
(c_2\sqrt{n})^{-1}
\frac{h\eta}{\e^{h\eta}-1}
\frac{1}{\sqrt{1+\si_{11}/c_1}
}
+\so(n^{-1/2})
\per\label{fig3}
\end{align}
\hrulefill
\end{figure*}

First we have
\begin{align}
\lefteqn{
\E\!\left[
g_h\left(
\frac{
 \e^{n(\bar{Z}(\eta)+R-(\bar{Z}'(\eta))^2/2c_1)}
}{c_2\sqrt{n}}
\right)
\right]
}\nn
&=
\e^{n\La(\rho)}
\E_{\rho}\left[
\e^{-n\rho\bar{Z}(\eta)}
g_h\left(
\frac{
 \e^{n(\bar{Z}(\eta)+R-(\bar{Z}'(\eta))^2/2c_1)}
}{c_2\sqrt{n}}
\right)
\right].\n
\end{align}
Here recall that
$\E_{\rho}[\bar{Z}(\eta)]=\mu_0\le -R$ and $\E_{\rho}[\bar{Z}'(\eta)]=\mu_1=0$
from \eqref{la_bibun}.
By letting $\De=-(R+\mu_0)$,
we have
$\De=0$ for $R\ge R_{\mathrm{crit}}$ and $\De>0$ for $R<R_{\mathrm{crit}}$.
Normalizing $\bar{Z}(\eta)$ and $\bar{Z}'(\eta)$
as
$\tilde{Z}_1=\sqrt{n}(\bar{Z}(\eta)+R+\De)$ and $\tilde{Z}_2=\sqrt{n}\bar{Z}'(\eta)$,
respectively, we have
\begin{align}
\lefteqn{
\E\!\left[
g_h\left(
\frac{
 \e^{n(\bar{Z}(\eta)+R-(\bar{Z}'(\eta))^2/2c_1)}
}{c_2\sqrt{n}}
\right)
\right]
}\nn
&=
\e^{-nE_r(R)}
\nn&\quad\quad\cdot
\E_{\rho}\left[
\e^{-\sqrt{n}\rho(\tilde{Z}_1-\sqrt{n}\De)}
g_h\left(
\frac{
 \e^{\sqrt{n}(\tilde{Z}_1-\sqrt{n}\De)-\tilde{Z}^2/2c_1)}
}{c_2\sqrt{n}}
\right)
\right]\!.\n
\end{align}
We obtain from Prop.\,\ref{prop_ranga}
that
\begin{align}
\lefteqn{
\E_{\rho}\left[
\e^{-\sqrt{n}\rho(\tilde{Z}_1-\sqrt{n}\De)}
g\left(
\frac{
 \e^{\sqrt{n}(\tilde{Z}_1-\sqrt{n}\De)-\tilde{Z}^2/2c_1)}
}{c_2\sqrt{n}}
\right)
\right]
}\nn
&=
\iint
\frac{\e^{-z^T\Sigma_{01}^{-1}z/2}}{2\pi\sqrt{|\Sigma|}}
\left(1-\frac{h(z)}{\sqrt{n}}\right)
\e^{-\sqrt{n}\rho(z_1-\sqrt{n}\De)}\nn
&\qquad
\cdot g\left(
\frac{
 \e^{\sqrt{n}(z_1-\sqrt{n}\De)-z_2^2/2c_1)}
}{c_1\sqrt{n}}
\right)\rd z_1 \rd z_2+
\omega_{f_n}(\de_n;\Phi)
\per\n
\end{align}

For the case (i) $\rho<1,\De=0$,
this integral is evaluated as \eqref{fig1}.
Similarly for cases
(ii) $\rho=1,\De=0$ and (iii) $\rho=1,\,\De>0$,
it is evaluated as \eqref{fig2} and \eqref{fig3},
respectively,
since
$\e^{-\sqrt{n}w}g_h(\e^{\sqrt{n}w})\le \e^{-\sqrt{n}w}$
holds for any $w$
and
\begin{align}
\e^{-\sqrt{n}w}g_h(\e^{\sqrt{n}w})=\frac{h\eta(1+\so(1))}{\e^{h\eta}-1}\n
\end{align}
holds for $w\le -n^{-1/4}$.


\noindent
\hrulefill

\noindent
(See the next two pages for
Eqs.\,\eqref{fig1}--\eqref{fig3}.
)\vspace{-1.5mm}

\noindent
\hrulefill

\noindent
Now, combined with
Lemma \ref{lem_osci},
it suffices to show that
\begin{align}
\int_{-\infty}^{\infty} \e^{-\rho w}g_h(\e^w)\rd w
&=\int_{0}^{\infty} z^{-(1+\rho)}g_h(z)\rd z\nn
&=\frac{1}{\rho}\int_{0}^{\infty} z^{-\rho}\bibun{g_h(z)}{z}\rd z\nn
&=\psi_{\rho,h}
\per\label{int_suffice}
\end{align}
By letting $a=h\eta$ and $b=a/(\e^a-1)$,
we can evaluate this integral as
\begin{align}
\lefteqn{
\int_{0}^{\infty} z^{-\rho}\bibun{g_h(z)}{z}\rd z
}\nn
&=
\int_0^{\infty} z^{-\rho-1}\frac{b\e^{-bz}-(a+b)\e^{-(a+b)z}}{a}\rd z\nn
&\quad+\int_0^{\infty} z^{-\rho-2}\frac{\e^{-bz}-\e^{-(a+b)z}}{a}\rd z
\label{int_bunkatu}
\end{align}
Here the first term is evaluated by integration by parts as
\begin{align}
\lefteqn{
\int_0^{\infty} z^{-\rho-1}\frac{b\e^{-bz}-(a+b)\e^{-(a+b)z}}{a}\rd z
}\nn
&=
\frac{1}{\rho}\int_0^{\infty} z^{-\rho}\frac{(a+b)^2\e^{-(a+b)z}-b^2\e^{-bz}}{a}\rd z\nn
&=
\frac{\Gamma(1-\rho)}{\rho}\frac{(a+b)^{\rho+1}-b^{\rho+1}}{a}\com
\label{integral1}
\end{align}
where we used the fact that for any $c>0$
\begin{align}
\int_0^{\infty} \e^{-cz}z^{-\rho}\rd z
=\Gamma(1-\rho)c^{\rho-1}\per\n
\end{align}

Similarly we have
\begin{align}
\lefteqn{
\int_0^{\infty} z^{-\rho-2}\frac{\e^{-bz}-\e^{-(a+b)z}}{a}\rd z
}\nn
&=\frac{1}{\rho+1}
\int_0^{\infty} z^{-\rho-1}\frac{-b\e^{-bz}+(a+b)\e^{-(a+b)z}}{a}\rd z\nn
&=\frac{1}{\rho(\rho+1)}
\int_0^{\infty} z^{-\rho}\frac{b^2\e^{-bz}-(a+b)^2\e^{-(a+b)z}}{a}\rd z\nn
&=\frac{\Gamma(1-\rho)}{\rho(\rho+1)}
\frac{b^{\rho-1}-(a+b)^{\rho-1}}{a}\per
\label{integral2}
\end{align}

Combining \eqref{int_bunkatu} with \eqref{integral1} and \eqref{integral2}
we obtain \eqref{int_suffice} by
\begin{align}
\lefteqn{
\int_{0}^{\infty} z^{-\rho}\bibun{g_h(z)}{z}\rd z
}\nn
&=
\frac{\Gamma(1-\rho)}{\rho}\frac{(a+b)^{\rho+1}-b^{\rho+1}}{a}
\left(1-\frac{1}{1+\rho}\right)\nn
&=
\frac{\Gamma(1-\rho)}{1+\rho}\frac{\left(\frac{h\eta\e^{h\eta}}{\e^{h\eta}-1}\right)^{\rho+1}-\left(\frac{h\eta}{\e^{h\eta}-1}\right)^{\rho+1}}{h\eta}\nn
&=
\frac{\Gamma(1-\rho)}{h\eta(1+\rho)}
\left(\frac{h\eta}{\e^{h\eta}-1}\right)^{\rho+1}
\left(\e^{h\eta(1+\rho)}-1\right)\nn
&=
\Gamma(1-\rho)
\left(\frac{h\eta}{\e^{h\eta}-1}\right)^{\rho+1}
\frac{\e^{h}-1}{h}
=\rho \psi_{\rho,h}
\com\n
\end{align}
where we used $\eta=1/(1+\rho)$.
\qed

%


\end{document}